\newtheorem{theorem}{Theorem}[section]
\newtheorem{proposition}{Proposition}
\journal{Applied Mathematical Modelling}
\begin{document}

\begin{frontmatter}



\title{The Compound Class of Linear Failure Rate-Power Series Distributions: Model, Properties and Applications}


\author{Eisa Mahmoudi\corref{cor1}}
\ead{emahmoudi@yazd.ac.ir}
\author{Ali Akbar Jafari}

\cortext[cor1]{Corresponding author}

\address{Department of Statistics, Yazd University,
P.O. Box 89175-741, Yazd, Iran}

\begin{abstract}
We introduce in this paper a new class of distributions which generalizes the linear failure rate (LFR) distribution and is obtained by compounding the LFR distribution and power series (PS) class of distributions. This new class of distributions is called the linear failure rate-power series (LFRPS) distributions and contains some new distributions such as linear failure rate geometric (LFRG) distribution, linear failure rate Poisson (LFRP) distribution, linear failure rate logarithmic (LFRL) distribution, linear failure rate binomial (LFRB) distribution and Raylight-power series (RPS) class of distributions. Some former works such as exponential-power series (EPS) class of distributions, exponential geometric (EG) distribution, exponential Poisson (EP) distribution and exponential logarithmic (EL) distribution are special cases of the new proposed model.

The ability of the LFRPS class of distributions is in covering five possible hazard rate function i.e., increasing, decreasing, upside-down bathtub (unimodal), bathtub and increasing-decreasing-increasing shaped. Several properties of the LFRPS distributions such as
moments, maximum likelihood estimation procedure via an EM-algorithm and inference for a large sample, are discussed in this paper. In order to show the flexibility and potentiality of the new class of distributions, the fitted results of the new class of distributions and some its submodels are compared using a real data set.
\end{abstract}

\begin{keyword}
EM-algorithm\sep Linear failure rate distribution\sep Maximum likelihood estimation\sep Moments\sep Monte Carlo simulation\sep Power series class of distributions.

\end{keyword}

\end{frontmatter}



\section{Introduction}
In recent years, many distributions to model lifetime data have been introduced. The basic idea of introducing these models is that a lifetime of a system with $N$ (discrete random variable) components and the positive continuous random variable, say $X_i$  (the lifetime of $i$th component), can be denoted by the non-negative random variable $Y=\min(X_1,\dots,X_N)$ or $Y=\max(X_1,\dots,X_N)$, based on whether the components are series or parallel.\\
Some well-known lifetime distributions such as the exponential geometric (EG), exponential Poisson (EP), exponential logarithmic (EL), Weibull geometric (WG) and Weibull Poisson (WP) distributions introduced and studied by
\cite{ad-lo-98}, \cite{kus-07}, \cite{ta-re-08}, \cite{ba-de-co-11}, and  \cite{lu-sh-11},
 respectively.

Let $N$ be a discrete random variable having depends on the class of power series distributions with probability mass function
\begin{equation}\label{eq.N}
P\left(N=n\right)=\frac{a_n{\theta }^n}{C(\theta )},\ \ \ \ n=1,2,\dots,
\end{equation}
where $a_n\geq0$ depends only on $n$, $C\left(\theta \right)=\sum^\infty_{n=1}{a_n{\theta }^n}$ and $\theta \in (0, s)$ is chosen in a way such that $C\left(\theta \right)$ is finite and its first, second and third derivatives with respect to $\theta $ are defined and shown by $C'(.)$, $C''(.)$ and $C'''(.)$, respectively. For more details on the power series class of distributions, see \cite{noack-50}.
This family of distributions includes binomial, Poisson, geometric and logarithmic distributions
 \citep{jo-ke-ko-05}.\\
Some authors by combining the family of power series distributions with the well-known distributions, extended these distributions and proposed new distributions. For example; exponential-power series (EPS) distributions
\citep{ch-ga-09},
 Weibull-power series (WPS) distributions
 \citep{mo-ba-11},
  complementary exponential-power series (CEPS) distributions
 \citep{fl-bo-ca-11},
 generalized exponential-power series (GEPS) distributions
 \citep{ma-ja-12},
 extended Weibull-power series (EWPS) distributions
 \citep{si-bo-di-co-13}, 
  Birnbaum-Saunders power series (BSPS) distributions \citep{bo-si-co-12}, and
 exponentiated Weibull-Poisson distribution
  \citep{ma-se-13}.
In this paper, by combining a class of power series distributions and  a linear failure rate (LFR) distribution, we will propose a new class of lifetime distributions. The family includes as special cases, the EPS distributions
\citep{ch-ga-09}
which this family includes the lifetime distributions presented by
\cite{ad-lo-98}, \cite{ad-di-05}, \cite{kus-07}, and \cite{ta-re-08}.
This family also includes the extended linear failure rate (ELF) distribution which is introduced by
\cite{gh-ko-07}.
By putting $a=0$, a new compound class of distributions which is called the Rayligh-power series (RPS) distributions is produced.

We provide four motivations for the LFRPS class of distributions, which can be applied in some interesting
situations as follows:\\
(i) This new class of distributions due to the stochastic representation $Y=\min(X_1,\dots,X_N)$, can arises in series systems with identical components, where each component has the LFR distribution lifetime. This model appears in many industrial applications and biological organisms. (ii) The LFRPS class of distributions can be applied for modeling the time to relapse of cancer under the first-activation scheme. (iii) The time to the first failure can be appropriately modeled by the LFRPS class of distributions. (iv) The LFRPS class of distributions gives a reasonable
parametric fit to some modeling phenomenon with non-monotone failure rates such as the bathtub-shaped, unimodal and increasing-decreasing-increasing failure rates, which are common in reliability and biological studies.

The reminder of the paper is organized as follows: In Section \ref{se.class}, we define the LFRPS class of distributions and outline some special cases of the distribution. We investigate some properties of the distribution in this Section \ref{se.pro}. General expansions for the moments of the LFRPS distributions are given in this section. We focus on special cases of the LFRPS distributions in Section \ref{se.spe}. Maximum likelihood estimation and EM algorithm are discussed in Section \ref{se.mle}. A simulation study is performed in Section \ref{se.sim}. In Section \ref{se.app}, application of the LFRPS class of distributions is given using a set data set. Finally, Section \ref{se.con} concludes the paper.

\section{New class of linear failure rate}
\label{se.class}
The linear failure rate distribution with parameters $a\geq0$ and $b\geq0$ (such that $a+b>0$),
will be denoted by ${\rm LFR(}a,b)$, has the following cumulative distribution function (cdf)
\begin{equation}\label{eq.G}
G\left(x\right)=1-\exp  \left(-ax-\frac{b}{2}x^2\right),\ \ \ \ \ x\geq0,
\end{equation}
and the probability density function
\begin{equation}\label{eq.g}
{\rm g}\left(x\right)=\left(a+bx\right){\exp  \left(-ax-\frac{b}{2}x^2\right)},\ \ \ \ \ x\geq0.
\end{equation}

Note that if $b=0$ and $a\neq0$, then the exponential distribution with parameter $a$, ${\rm Exp}(a)$, and if $a=0$ and $b\neq0$ then we can obtain the Rayleigh distribution with parameter $b$, ${\rm Rayleigh}(b)$ is obtained. Also, The failure rate function of ${\rm LFR}(a,b)$ is constant if $b=0$ and is increasing if $b>0$ \citep{se-bh-95}.

Let $X_1,{\ X}_2,\ \dots ,X_N$ be a random sample from the LFR distribution with cdf in (\ref{eq.G}). Also, let $N$ be discrete random variable with the power series distributions with probability mass function in (\ref{eq.N}). If $X_{(1)}={\min  (X_1,\dots ,X_N)\ }$ then the conditional cdf of $X_{(1)}$  given $N=n$ is
\begin{equation}
G_{X_{(1)}|N=n}(x)=1-\exp(-anx-\frac{bn}{2}x^2),
\end{equation}
which has a LFR distribution with parameters $an$ and $bn$. Therefore, the cdf of the new class of LFR distributions is the marginal cdf of $X_{(1)}$, i.e.
\begin{equation}\label{eq.F}
F(x)=1-\frac{C\left(\theta {\exp  \left(-ax-\frac{b}{2}x^2\right)\ }\right)}{C(\theta )},\ \ x>0,
\end{equation}
and we denote it with ${\rm LFRPS}(a, b, \theta )$. The pdf of this class is
\begin{equation}\label{eq.f}
f(x)=\theta(a+bx)\exp (-ax-\frac{b}{2}x^2)\frac{C'\left(\theta\exp  \left(-ax-\frac{b}{2}x^2\right)\right)}{C(\theta )}.
\end{equation}
The survival function and the hazard rate function of the LFRPS class of distributions, are given, respectively by
\[S\left(x\right)=\frac{C\left(\theta {\exp  \left(-ax-\frac{b}{2}x^2\right)\ }\right)}{C(\theta )},\]
and
\begin{equation}\label{eq.h}
h\left(x\right)=\frac{\theta (a+bx){\exp  \left(-ax-\frac{b}{2}x^2\right)\ }C'\left(\theta {\exp  \left(-ax-\frac{b}{2}x^2\right)\ }\right)}{C\left(\theta {\exp  \left(-ax-\frac{b}{2}x^2\right)\ }\right)}.
\end{equation}
Consider $C\left(\theta \right)=\theta +{\theta }^{20}$. Then
$$
f\left(x\right)= (a+bx){\exp  \left(-ax-\frac{b}{2}x^2\right)\ }\frac{1+20{\theta }^{19}{\exp  \left(-19ax-\frac{19b}{2}x^2\right)\ }}{1+{\theta }^{19}},
$$
and
$$
h\left(x\right)=\frac{(a+bx)\left(1+20 \theta^{19}\exp  \left(-19ax-\frac{19b}{2}x^2\right)\right)}{1+\theta^{19}\exp  \left(-19ax-\frac{19b}{2}x^2\right)\ }.
$$

The plots of this density and its hazard rate function, for different values of parameters $a$, $b$ and $\theta$ are given in Fig. \ref{fig.bim}.
\begin{figure}
\centering
\includegraphics[scale=0.40]{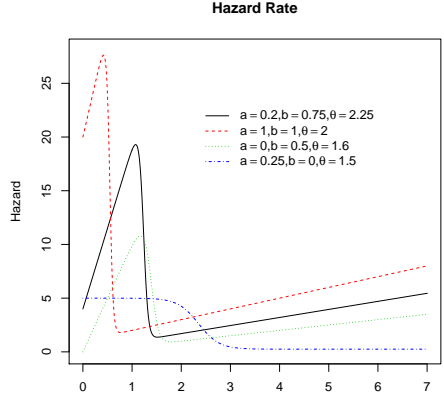}
\vspace{-0.8cm}
\caption{Plots of pdf and hazard rate function of LFRPS distributions with $C(\theta)=\theta+\theta^{20}$.  }\label{fig.bim}
\end{figure}

\begin{proposition} The limiting distribution of the ${\rm LFRPS}(a,\ b,\ \theta )$ cdf, when $\theta \to 0^+$ is
\[{\mathop{\lim}_{\theta \to 0^+} F\left(x\right)\ }=1-\exp  (-acx-\frac{bc}{2}x^2),\]
which is a LFR cdf with parameters $ac$ and $bc$, where $c={\min \left\{n\in {\mathbb N}:a_n>0\right\}}$.
\end{proposition}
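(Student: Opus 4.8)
The plan is to manipulate the closed form of the cdf in \eqref{eq.F} directly, by extracting the lowest-order term of the power series $C$. Set $c=\min\{n\in\mathbb N:a_n>0\}$, so that $a_1=\cdots=a_{c-1}=0$ while $a_c>0$. Then, for every $\theta\in(0,s)$, factor out $\theta^{c}$:
\[
C(\theta)=\sum_{n=c}^{\infty}a_n\theta^{n}=\theta^{c}\,\widetilde C(\theta),\qquad \widetilde C(\theta):=\sum_{n=c}^{\infty}a_n\theta^{n-c}=a_c+a_{c+1}\theta+a_{c+2}\theta^2+\cdots,
\]
where $\widetilde C$ is again a power series with radius of convergence at least $s$ and $\widetilde C(0)=a_c>0$.

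First I would substitute $z=z(x)=\exp\!\left(-ax-\tfrac b2 x^2\right)\in(0,1]$ into \eqref{eq.F}, so that the survival function reads $S(x)=C(\theta z)/C(\theta)$. Applying the factorization above to both the numerator (argument $\theta z$) and the denominator (argument $\theta$) gives
\[
\frac{C(\theta z)}{C(\theta)}=\frac{(\theta z)^{c}\,\widetilde C(\theta z)}{\theta^{c}\,\widetilde C(\theta)}=z^{c}\,\frac{\widetilde C(\theta z)}{\widetilde C(\theta)}.
\]
Since $0<z\le 1$, we also have $\theta z\to 0^{+}$ as $\theta\to 0^{+}$; using continuity of the power series $\widetilde C$ at the origin, both $\widetilde C(\theta)$ and $\widetilde C(\theta z)$ converge to $a_c>0$, so the ratio tends to $1$. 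Hence, for each fixed $x\ge 0$,
\[
\lim_{\theta\to 0^{+}}F(x)=1-\lim_{\theta\to 0^{+}}S(x)=1-z^{c}=1-\exp\!\left(-acx-\frac{bc}{2}x^{2}\right),
\]
which is exactly the cdf $G$ of \eqref{eq.G} with the parameters $a,b$ replaced by $ac,bc$; since the limit is itself a continuous cdf, the convergence also holds in distribution.

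The only point requiring care is the claim that $\widetilde C(\theta)\to a_c$ as $\theta\to 0^{+}$, i.e.\ that one may pass the limit through the series $\sum_{n\ge c}a_n\theta^{n-c}$. This is immediate from the standing hypothesis that $C$, and therefore $\widetilde C$, has radius of convergence at least $s>0$, so $\widetilde C$ is continuous (indeed real-analytic) at $0$; alternatively it follows from a one-line dominated-convergence estimate, bounding the general term by $a_n\theta_0^{n-c}$ for a fixed $\theta_0\in(0,s)$ and all $\theta\le\theta_0$, with $\sum_{n\ge c}a_n\theta_0^{n-c}=C(\theta_0)/\theta_0^{c}<\infty$. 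Everything else is the elementary factorization, so I expect no real obstacle beyond this standard analytic remark.
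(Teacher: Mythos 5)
Your argument is correct and complete. The paper states this proposition without supplying any proof, so there is nothing to compare against; your factorization $C(\theta)=\theta^{c}\widetilde C(\theta)$ with $\widetilde C(0)=a_c>0$, applied to both numerator and denominator of $S(x)=C(\theta z)/C(\theta)$, is the natural argument and it fills the gap cleanly. The one point that genuinely needs justification --- continuity of $\widetilde C$ at $0$, so that $\widetilde C(\theta z)/\widetilde C(\theta)\to 1$ --- is exactly the point you flag and dispose of correctly, either via the radius of convergence being at least $s>0$ or via the dominated-convergence bound with majorant $a_n\theta_0^{n-c}$. The closing remark that pointwise convergence of the cdfs to the continuous LFR$(ac,bc)$ cdf yields convergence in distribution is also right.
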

\begin{proposition} The densities of LFRPS class can be expressed as infinite number of linear combination of density of order statistics. We know that $C'\left(\theta \right)=\sum^n_{n=1}{na_n{\theta }^{n-1}}$. Therefore,
\[f(x)=\sum^{\infty }_{n=1}P\left(N=n\right)g_{X_{(1)}}(x;n),\]
where $g_{X_{(1)}}(x;n)$ is the density function of $X_{(1)}=\min (X_1,\dots ,X_n)$, given by
\[g_{X_{(1)}}(x;n)=(an+bnx){\exp  \left(-anx-\frac{bn}{2}x^2\right)\ },\]
which is the density function of LFR distribution with parameters $an$ and $bn$.
\end{proposition}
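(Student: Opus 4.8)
The plan is to obtain the claimed representation by a direct substitution of the power-series expansion of $C'$ into the density formula (\ref{eq.f}), followed by an interchange of the summation with the remaining factors. First I would write, for the argument $z=\theta\exp(-ax-\tfrac{b}{2}x^2)$, the expansion $C'(z)=\sum_{n=1}^{\infty}na_nz^{n-1}$. This is legitimate because for every fixed $x>0$ one has $0<z<\theta<s$, so $z$ lies strictly inside the interval of convergence of the series defining $C$, and a power series may be differentiated and evaluated term by term there. Substituting $z$ and carrying the factor $\theta$ inside the sum gives
\[
\theta\,C'\!\left(\theta\exp\left(-ax-\tfrac{b}{2}x^2\right)\right)=\sum_{n=1}^{\infty}na_n\theta^{n}\exp\left(-(n-1)\left(ax+\tfrac{b}{2}x^2\right)\right).
\]

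Next I would multiply through by $(a+bx)\exp(-ax-\tfrac{b}{2}x^2)$ and divide by $C(\theta)$, exactly as in (\ref{eq.f}). Since every term of the series is nonnegative, the interchange of the infinite sum with these factors is unconditional, and one obtains
\[
f(x)=\sum_{n=1}^{\infty}\frac{a_n\theta^{n}}{C(\theta)}\,(an+bnx)\exp\left(-anx-\tfrac{bn}{2}x^2\right).
\]
Recognising $\dfrac{a_n\theta^{n}}{C(\theta)}=P(N=n)$ from (\ref{eq.N}) completes the algebraic part.

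The final step is to identify the factor $(an+bnx)\exp(-anx-\tfrac{bn}{2}x^2)$ with the density $g_{X_{(1)}}(x;n)$ of $X_{(1)}=\min(X_1,\dots,X_n)$ for $n$ i.i.d.\ ${\rm LFR}(a,b)$ variables. This is immediate from (\ref{eq.G}): the survival function of $X_{(1)}$ is $\bigl(\exp(-ax-\tfrac{b}{2}x^2)\bigr)^{n}=\exp(-anx-\tfrac{bn}{2}x^2)$, so $X_{(1)}\sim{\rm LFR}(an,bn)$, whose density is exactly $(an+bnx)\exp(-anx-\tfrac{bn}{2}x^2)$ by (\ref{eq.g}). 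Hence $f(x)=\sum_{n=1}^{\infty}P(N=n)\,g_{X_{(1)}}(x;n)$, as claimed.

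There is no serious obstacle here; the only point requiring a word of care is the justification of the term-by-term operations, namely that the argument $\theta\exp(-ax-\tfrac{b}{2}x^2)$ of $C'$ stays strictly below $s$ for all $x>0$ so that the power series may be evaluated and the sum rearranged freely. As a sanity check, integrating the resulting series term by term returns $\sum_{n=1}^{\infty}P(N=n)=1$, confirming consistency.
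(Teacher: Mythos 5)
Your proposal is correct and follows essentially the same route as the paper, which simply notes the expansion $C'(\theta)=\sum_{n\ge 1} na_n\theta^{n-1}$, substitutes it into the density (\ref{eq.f}), and regroups the terms as $P(N=n)$ times the ${\rm LFR}(an,bn)$ density of the minimum. Your added care about the argument of $C'$ lying inside the interval of convergence and the nonnegativity of the terms justifying the interchange is a welcome but minor refinement of the paper's sketch.
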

\begin{proposition} For the pdf in (\ref{eq.f}), we have
\[\lim_{x\rightarrow 0^+}f(x)=\frac{\theta aC'(\theta )}{C(\theta )},\ \ \ \ \ \lim_{x\rightarrow \infty}f(x)=0\]
\end{proposition}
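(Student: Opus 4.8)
The plan is to handle the two limits separately, and in each case to reduce the claim to the continuity of the maps appearing in (\ref{eq.f}) together with the elementary fact that the exponential factor $\exp(-ax-\tfrac{b}{2}x^2)$ decays faster than the linear term $a+bx$ grows. Throughout I will use the standing assumptions that $a+b>0$ and that $C$, $C'$ are finite with $C'$ continuous on $[0,s)$; note also that for $x\ge 0$ we have $\exp(-ax-\tfrac{b}{2}x^2)\le 1$, so the argument $\theta\exp(-ax-\tfrac{b}{2}x^2)$ stays in $(0,\theta]\subset(0,s)$ and $C'$ is always evaluated inside its domain.

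For the limit at $0^+$ I would simply note that every factor of (\ref{eq.f}) is continuous at $x=0$: $a+bx\to a$, $\exp(-ax-\tfrac{b}{2}x^2)\to 1$, hence the argument $\theta\exp(-ax-\tfrac{b}{2}x^2)\to\theta$, and so by continuity of $C'$ we get $C'\!\left(\theta\exp(-ax-\tfrac{b}{2}x^2)\right)\to C'(\theta)$. Taking the product of the limits of these finitely many factors yields $\lim_{x\to 0^+}f(x)=\theta a C'(\theta)/C(\theta)$. This is a one-line continuity argument and presents no obstacle; it also remains valid (giving the value $0$) in the degenerate case $a=0$.

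For the limit at $\infty$ I would write $f(x)=\frac{\theta}{C(\theta)}\,(a+bx)\,\exp(-ax-\tfrac{b}{2}x^2)\,C'\!\left(\theta\exp(-ax-\tfrac{b}{2}x^2)\right)$ and analyse the three $x$-dependent factors. Since $a+b>0$, the exponent $-ax-\tfrac{b}{2}x^2\to-\infty$, so $\exp(-ax-\tfrac{b}{2}x^2)\to 0^+$; consequently $\theta\exp(-ax-\tfrac{b}{2}x^2)\to 0^+$ and, by continuity of $C'$ on $[0,s)$, the last factor converges to the finite constant $C'(0)$. The remaining product $(a+bx)\exp(-ax-\tfrac{b}{2}x^2)\to 0$ because exponential decay dominates the linear factor; formally one bounds $(a+bx)e^{-ax-bx^2/2}\le(a+bx)e^{-\max\{a,b\}x}$ for large $x$ (or, when $a=0$, uses the substitution $u=x^2$ with the standard polynomial-versus-exponential estimate). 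Hence the product tends to $0\cdot C'(0)=0$, and multiplying by the constant $\theta/C(\theta)$ preserves this, giving $\lim_{x\to\infty}f(x)=0$.

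The only point requiring a moment's care — and it is routine — is the claim $(a+bx)\exp(-ax-\tfrac{b}{2}x^2)\to 0$ in the Rayleigh-type case $a=0$, where one must lean on the quadratic term $\tfrac{b}{2}x^2$ in the exponent to beat the linear factor $bx$; the complementary case $b=0,\ a>0$ is the familiar exponential-versus-polynomial limit. Neither sub-case is a genuine difficulty, so I expect the whole proof to be short.
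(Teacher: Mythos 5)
Your proof is correct; the paper states this proposition without supplying any proof, and your argument — continuity of every factor at $x=0$ for the first limit, and exponential dominance of $(a+bx)$ together with $C'\bigl(\theta\exp(-ax-\tfrac{b}{2}x^2)\bigr)\to C'(0^+)=a_1<\infty$ for the second — is exactly the intended routine justification. The only detail worth keeping explicit is the one you already flagged: when $a=0$ the decay must come from the quadratic term $\tfrac{b}{2}x^2$ (guaranteed since $a+b>0$), and your bound $(a+bx)e^{-ax-bx^2/2}\le (a+bx)e^{-\max\{a,b\}x}$ for large $x$ handles all cases uniformly.
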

\begin{proposition} For the hazard rate function in (\ref{eq.h}), we have
\[
{\mathop{\lim }_{{\rm x}\to 0^{{\rm +}}} h\left(x\right)}=\frac{\theta a C'\left(\theta \right)}{C\left(\theta \right)},\ \ \ \ \ \ \ \ {\mathop{\lim }_{{\rm x}\to \infty } h\left(x\right) }=\infty,
\]
\end{proposition}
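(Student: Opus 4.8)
I would treat the two one-sided limits separately, in each case first rewriting the hazard rate from (\ref{eq.h}) as $h(x)=(a+bx)\cdot u(x)\,C'(u(x))/C(u(x))$ with $u(x):=\theta\exp(-ax-\frac{b}{2}x^{2})$, so that everything reduces to controlling the factor $u\,C'(u)/C(u)$ at the two relevant limiting values of $u$.

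For $x\to0^{+}$: here $u(x)\to\theta$ and $a+bx\to a$. Since a power series and its derivative are continuous on their interval of convergence, $C(u(x))\to C(\theta)$ and $C'(u(x))\to C'(\theta)$, while $C(\theta)=\sum_{n\ge1}a_{n}\theta^{n}>0$ because $\theta>0$ and some $a_{n}>0$; substituting then gives $\theta a\,C'(\theta)/C(\theta)$. A slicker alternative I would mention is to write $h=f/S$, invoke Proposition~3 for $\lim_{x\to0^{+}}f(x)=\theta a\,C'(\theta)/C(\theta)$, and note $S(0^{+})=C(\theta)/C(\theta)=1$.

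For $x\to\infty$: now $u(x)\to0^{+}$. With $c=\min\{n\in\mathbb{N}:a_{n}>0\}$ (the constant already used in Proposition~1), I would write $C(u)=a_{c}u^{c}+\sum_{n>c}a_{n}u^{n}$ and $C'(u)=c\,a_{c}u^{c-1}+\sum_{n>c}n\,a_{n}u^{n-1}$, divide by $u^{c}$ and $u^{c-1}$ respectively, and let $u\to0^{+}$ to obtain $C(u)\sim a_{c}u^{c}$ and $C'(u)\sim c\,a_{c}u^{c-1}$, hence $u\,C'(u)/C(u)\to c$. Consequently $h(x)\sim c(a+bx)\to\infty$ as $x\to\infty$ when $b>0$. (For $b=0$ the LFR factor collapses to the exponential case and the limit is the finite constant $ac$ instead, so $b>0$ is the relevant regime here.)

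The one step needing genuine care, and which I expect to be the main (mild) obstacle, is the asymptotic $u\,C'(u)/C(u)\to c$ as $u\to0^{+}$: I must justify that the tail sums $\sum_{n>c}a_{n}u^{\,n-c}$ and $\sum_{n>c}n\,a_{n}u^{\,n-c}$ tend to $0$, which follows from term-by-term passage to the limit, legitimate since both series converge on a compact interval $[0,\theta]\subset[0,s)$ (equivalently, a power series converges uniformly on compact subsets of its disc of convergence). The remaining ingredients --- continuity and differentiability of $C$, positivity of $C(\theta)$, and the elementary limits of $a+bx$ and $\exp(-ax-\frac{b}{2}x^{2})$ --- are routine.
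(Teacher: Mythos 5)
Your proof is correct. The paper states this proposition without any proof, so there is nothing to compare against; your argument — factoring $h(x)=(a+bx)\,u(x)C'(u(x))/C(u(x))$ with $u(x)=\theta\exp(-ax-\tfrac{b}{2}x^2)$, using continuity of $C$ and $C'$ at $u=\theta$ for the first limit, and the asymptotic $uC'(u)/C(u)\to c=\min\{n:a_n>0\}$ as $u\to 0^+$ for the second — is the natural and complete way to establish it, and your justification of the term-by-term limit via uniform convergence on compact subsets of the interval of convergence is exactly the point that needs care. Moreover, you correctly identify that the second limit equals $\infty$ only when $b>0$: for $b=0$ the same computation gives $\lim_{x\to\infty}h(x)=ac$, a finite constant (consistent with the decreasing or constant hazard rates of the EPS submodels the paper itself cites), so the proposition as printed implicitly assumes $b>0$; this is a genuine, if minor, correction to the paper's statement rather than a gap in your argument.
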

\begin{proposition} The quantile $\xi$ of the LFRPS class of distributions is given by
\[x_{\xi }=G^{-1}\left(1-\frac{1}{\theta }C^{-1}\left(\left(1-\xi \right)C\left(\theta \right)\right)\right),\]
where $C^{-1}\left(.\right)$ is the inverse function of $C\left(.\right)$ and $G^{-1}\left(.\right)$ is the inverse function of distribution function of LFR distribution, i.e.
\[G^{-1}\left(x\right)=\left\{
\begin{array}{lcl}
\frac{1}{b}(-a+\sqrt{a^2-2b{\log  (1-x)}}) &  & {\rm if}\ \ \ b>0, \\
-\frac{1}{a}{\log  \left(1-x\right)}  &  & {\rm  if } \ \ \ a>0, \ b=0. \end{array}
\right.\]
We can use this expression for generating a random data from LFRPS distributions with generating data from uniform distribution.
\end{proposition}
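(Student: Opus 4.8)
The plan is to invert the cdf in (\ref{eq.F}) directly. Starting from the defining relation $F(x_\xi)=\xi$, I would substitute the closed form
\[
F(x_\xi)=1-\frac{C\left(\theta\exp\left(-ax_\xi-\frac{b}{2}x_\xi^2\right)\right)}{C(\theta)}=\xi,
\]
and rearrange to obtain $C\left(\theta\exp\left(-ax_\xi-\frac{b}{2}x_\xi^2\right)\right)=(1-\xi)C(\theta)$. Since the coefficients $a_n$ are nonnegative and not all zero, $C$ is strictly increasing on $(0,s)$ and therefore admits a well-defined inverse $C^{-1}$ on its range; because $\xi\in(0,1)$ the value $(1-\xi)C(\theta)$ lies in that range, so applying $C^{-1}$ to both sides is legitimate and gives $\theta\exp\left(-ax_\xi-\frac{b}{2}x_\xi^2\right)=C^{-1}\left((1-\xi)C(\theta)\right)$.

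Next I would use the identity $\exp\left(-ax-\frac{b}{2}x^2\right)=1-G(x)$ read off from the LFR cdf (\ref{eq.G}), so the previous line becomes $1-G(x_\xi)=\frac{1}{\theta}C^{-1}\left((1-\xi)C(\theta)\right)$, hence $G(x_\xi)=1-\frac{1}{\theta}C^{-1}\left((1-\xi)C(\theta)\right)$; applying $G^{-1}$ then yields the stated formula. To complete the proposition I would compute $G^{-1}$ explicitly by solving $1-\exp\left(-ax-\frac{b}{2}x^2\right)=u$ for $x\ge0$: taking logarithms produces the quadratic $\frac{b}{2}x^2+ax+\log(1-u)=0$. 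When $b>0$ its discriminant $a^2-2b\log(1-u)\ge a^2$ for $u\in[0,1)$, so only the root with the $+$ sign is nonnegative, giving $G^{-1}(u)=\frac{1}{b}\left(-a+\sqrt{a^2-2b\log(1-u)}\right)$, whereas the case $b=0$, $a>0$ reduces to the linear equation $-ax=\log(1-u)$ with solution $-\frac{1}{a}\log(1-u)$.

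Finally, for the data-generation remark I would invoke the probability integral transform: if $U$ is uniform on $(0,1)$, then $x_U$ has cdf $F$, so substituting uniform draws into the quantile formula produces LFRPS samples. The manipulations here are entirely elementary; the only points requiring a word of care are the strict monotonicity (hence invertibility) of $C$ and of $G$, and the justification that the $+$ branch is the admissible root of the quadratic — that branch selection is the one spot I would be most careful about.
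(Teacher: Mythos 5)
Your derivation is correct and is exactly the direct inversion of the cdf in (\ref{eq.F}) that the paper relies on (the paper states this proposition without writing out the steps), including the correct selection of the nonnegative root of the quadratic when $b>0$ and the probability integral transform for the sampling remark. No gaps.
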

\begin{proposition}
If $X$ follows ${\rm LFRPS}(a, b,\theta )$ then $Y=X+\frac{ab}{2}X^2$ has the EPS distributions with parameters $a$ and $\theta$, with the following density function
\[f_Y\left(y\right)= \frac{a\theta C'\left(\theta {\exp  \left(ay\right)\ }\right)}{C(\theta )},\]
which is introduced by
\cite{ch-ga-09}.
\end{proposition}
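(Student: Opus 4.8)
The plan is to prove this by a routine monotone change of variables, exploiting the fact that the transformation has been chosen to linearize the quadratic exponent of the LFR kernel. Since $a>0$ and $b\ge 0$, the map $g(x)=x+\frac{b}{2a}x^{2}$ — chosen precisely so that $a\,g(x)=ax+\frac{b}{2}x^{2}$ — is strictly increasing and continuously differentiable on $[0,\infty)$, with $g(0)=0$ and $g(x)\to\infty$, hence a bijection of $[0,\infty)$ onto itself; its derivative is $g'(x)=1+\frac{b}{a}x=\frac{a+bx}{a}$, so the inverse $x=g^{-1}(y)$ has $\frac{d}{dy}g^{-1}(y)=\frac{a}{a+bx}$. (If $b=0$ everything is trivial: $Y=X$ and $\mathrm{LFRPS}(a,0,\theta)$ is by definition $\mathrm{EPS}(a,\theta)$; the case $a=0$ is excluded, an $\mathrm{EPS}$ law with zero rate being degenerate.)

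The key observation is that, by construction, $\exp\!\big(-ax-\frac{b}{2}x^{2}\big)=\exp(-a\,g(x))$ for every $x\ge 0$. Feeding this together with the Jacobian into the transformation theorem for $Y=g(X)$, starting from the density \eqref{eq.f}, gives, with $y=g(x)$,
\[
f_{Y}(y)=f\big(g^{-1}(y)\big)\,\Big|\tfrac{d}{dy}g^{-1}(y)\Big|
=\theta(a+bx)\,e^{-ay}\,\frac{C'\!\big(\theta e^{-ay}\big)}{C(\theta)}\cdot\frac{a}{a+bx}
=\frac{a\theta\, e^{-ay}\,C'\!\big(\theta e^{-ay}\big)}{C(\theta)},
\]
which is precisely the $\mathrm{EPS}(a,\theta)$ density of \cite{ch-ga-09}. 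The whole content of the proposition is visible here: the linear-failure-rate weight $(a+bx)$ in \eqref{eq.f} is cancelled exactly by the Jacobian $a/(a+bx)$, and the change of variable collapses the Gaussian-type exponent into a purely exponential one.

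For an independent check — and a route avoiding differentiation — I would instead work through the cdf: for $y\ge 0$,
\[
F_{Y}(y)=P\big(g(X)\le y\big)=P\big(X\le g^{-1}(y)\big)=F\big(g^{-1}(y)\big),
\]
and substituting $g^{-1}(y)$ into \eqref{eq.F} while using $a\,g^{-1}(y)+\frac{b}{2}\big(g^{-1}(y)\big)^{2}=ay$ yields $F_{Y}(y)=1-C\big(\theta e^{-ay}\big)/C(\theta)$, the $\mathrm{EPS}(a,\theta)$ cdf; one differentiation recovers the density above. There is no genuine obstacle in this argument — the only points needing a word of care are the invertibility of $g$ on $[0,\infty)$ (whence the standing assumption $a>0$) and the bookkeeping of the Jacobian; the substantive idea is entirely the choice of the variable $Y=X+\frac{b}{2a}X^{2}$ that linearizes the exponent of the LFR kernel.
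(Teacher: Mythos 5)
Your change-of-variables argument is correct and is precisely the routine computation the authors intend; the paper states this proposition without any proof, so there is nothing more elaborate to compare against. The one thing worth making explicit rather than silent is that you have proved a \emph{corrected} version of the statement: the transformation must be $Y=X+\frac{b}{2a}X^{2}$ (so that $aY=aX+\frac{b}{2}X^{2}$), not $Y=X+\frac{ab}{2}X^{2}$ as printed, which would require $a^{2}b=b$; likewise the density you obtain, $f_Y(y)=a\theta e^{-ay}C'\left(\theta e^{-ay}\right)/C(\theta)$, is the genuine EPS density (matching the paper's own Section 4.2), whereas the displayed $a\theta C'\left(\theta e^{ay}\right)/C(\theta)$ drops the factor $e^{-ay}$ and has the wrong sign in the exponent. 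Both are evidently typographical errors in the paper, and your proof — the cancellation of the weight $(a+bx)$ against the Jacobian $a/(a+bx)$, or equivalently the cdf substitution into (\ref{eq.F}) — handles the corrected statement cleanly, including the degenerate cases $b=0$ and $a=0$.
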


\section{Properties}
\label{se.pro}
Now, we obtain the moment generating function of LFRPS distributions. Consider $X\sim {\rm LFRPS}(a,\ b,\ \theta )$. Then
\[M_X\left(t\right)=\sum^{\infty }_{n=1}{P\left(N=n\right)M_{X_{(1)}}\left(t\right)}.\]
But for $an-t>0$, we have
\begin{eqnarray*}
M_{X_{(1)}}\left(t\right)&=&\int^{\infty }_0{e^{tx}\left(an+bnx\right)e^{-anx-\frac{bn}{2}x^2}dx}\\
&=&\int^{\infty }_0{\left(an-t+bnx\right)e^{-(an-t)x-\frac{bn}{2}x^2}dx}+\int^{\infty }_0{te^{-(an-t)x-\frac{bn}{2}x^2}dx}\\
&=&1+t\int^{\infty }_0{e^{-(an-t)x-\frac{bn}{2}x^2}dx}.
\end{eqnarray*}
If $b=0$ then
\[M_{X_{(1)}}\left(t\right)=1+\frac{t}{an-t}=\frac{an}{an-t}.\]
If $b>0$ then
\begin{eqnarray*}
M_{X_{(1)}}\left(t\right)&=&1+t\int^{\infty }_0{e^{-(an-t)x-\frac{bn}{2}x^2}dx}\\
&=&1+t\int^{\infty }_0{e^{-\frac{bn}{2}\left[x^2+\frac{2(an-t)}{bn}x\right]}dx}\\
&=&1+t\int^{\infty }_0{e^{-\frac{bn}{2}\left[x^2+\frac{2(an-t)}{bn}x+\frac{{\left(an-t\right)}^2}{(bn)^2}\right]+\frac{{\left(an-t\right)}^2}{2bn}}dx}\\
&=&1+\frac{\sqrt{2\pi}}{\sqrt{bn}}te^{\frac{{\left(an-t\right)}^2}{2bn}}
\int^{\infty }_0{\frac{\sqrt{bn}}{\sqrt{2\pi}}e^{-\frac{bn}{2}{\left[x+\frac{2\left(an-t\right)}{bn}\right]}^2}dx}\\
&=&2-\frac{\sqrt{2\pi}}{\sqrt{bn}}te^{\frac{{\left(an-t\right)}^2}{2bn}}\Phi \left(\frac{2{\left(an-t\right)}}{\sqrt{bn}}\right),
\end{eqnarray*}
where $\Phi(.)$ is the distribution function of standard normal distribution.
Therefore
\begin{eqnarray*}
M_X\left(t\right)
&=&\left\{ \begin{array}{ll}
\sum^{\infty }_{n=1}{\frac{a_n{\theta }^n}{C\left(\theta \right)}\frac{an}{an-t}} & {\rm if}\ b=0,\ an-t>0 \\
2-\frac{t\sqrt{2\pi}}{\sqrt{b}}\sum^{\infty }_{n=1}{\frac{a_n{\theta }^n}{C\left(\theta \right)\sqrt{n}}e^{\frac{{\left(an-t\right)}^2}{2bn}}
\Phi \left(\frac{2{\left(an-t\right)}}{\sqrt{bn}}\right)} & {\rm if}\ b\ne 0 \\
\infty & {\rm if}\ an-t<0
 \end{array}
\right.
\end{eqnarray*}
We can use $M_X(t)$ to obtain the central moment functions, $E(X^k)$. But from the direct calculation, we have
\[E\left(X^k\right)=\int^{\infty }_0{x^kf\left(x\right)dx}=\sum^{\infty }_{n=1}{P\left(N=n\right)E\left(X^k_{\left(1\right)}\right)}.\]

If $\ b=0$
\[E\left(X^k_{\left(1\right)}\right)=\int^{\infty }_0{x^kane^{-anx}dx}=\frac{k!}{{\left(an\right)}^k}.\]

If $a=0$, then
\[E\left(X^k_{\left(1\right)}\right)=\int^{\infty }_0{x^{k+1}bne^{-\frac{bn}{2}x^2}dx}=\int^{\infty }_0{\frac{bn}{2}y^{\frac{k}{2}}e^{-\frac{bn}{2}y}dy}=\frac{\Gamma \left(\frac{k}{2}+1\right)}{{\left(\frac{bn}{2}\right)}^{\frac{k}{2}+1}}=2^{\frac{k}{2}}\frac{\Gamma \left(\frac{k}{2}+1\right)}{{(bn)}^{\frac{k}{2}+1}}.\]

If $a\ge 0$, $b>0$, from
\cite{na-mi-05},
we have
\[E\left(X^k_{\left(1\right)}\right)=\frac{k}{\sqrt{2bn}}e^{\frac{na^2}{2b}}\sum^{k-1}_{l=0}
{\left(\begin{array}{c}
  k-1 \\
  1
\end{array}\right){\left(-\frac{a}{b}\right)}^l{\left(\frac{2}{bn}\right)}^{\frac{k-1-l}{2}}\Gamma \left(\frac{k-l}{2},\frac{na^2}{2b}\right)}.\]
where $\Gamma \left(\alpha ,x\right)=\int^{\infty }_x{t^{\alpha -1}e^{-t}dt}$  denotes the upper incomplete gamma function.

\section{Special cases of the LFRPS distributions}
\label{se.spe}
In this section, we consider special cases of the LFRPS distributions.

\subsection{Linear failure rate distribution}
The LFR distribution is a special case of the LFRPS distributions with  $C\left(\theta \right)=\theta$, $a_1=1$, and $a_n=0$ for $n>1$. Then the density function in (\ref{eq.f}) becomes the density function of the LFR distribution. Note that the hazard function of LFR distribution is either constant or increasing.

\subsection{ Exponential power series distribution}

If $b=0$, then the density function in (\ref{eq.f}) changes to the exponential power series (EPS) density function which is introduced by
\cite{ch-ga-09}
and has the following density function
\[f\left(x\right)=\theta a{\exp  \left(-ax\right)\ }\frac{C'\left(\theta {\exp  \left(-ax\right)\ }\right)}{C(\theta )}.\]
This distribution contains several distributions as special cases: exponential geometric distribution
 \citep{ad-lo-98,ad-di-05},
exponential Poisson distribution
 \citep{kus-07},
and exponential logarithmic distribution
  \citep{ta-re-08}.

\subsection{ Rayleigh-power series distribution}

If $a=0$, then the LFRPS distributions gives a new class for Rayleigh distribution with the following density
\[f\left(x\right)=\theta bx \ {\exp  (-\frac{b}{2}x^2)\ }\frac{C'\left(\theta {\exp  \left(-\frac{b}{2}x^2\right)\ }\right)}{C(\theta )},\]
which we called Rayleigh power series (RPS) distribution. Note that if $X$ follows the RPS distributions then $X^2$ has a EPS distributions. Also, RPS distributions is a special case of the WPS distributions
\citep{mo-ba-11}
and contains Rayleigh geometric, Rayleigh Poisson, Rayleigh binomial and Rayleigh logarithmic distributions as special cases.
\subsection{LFR-geometric distribution}
The geometric distribution (truncated at zero) is a special case of power series distributions with $a_n=1$ and $C\left(\theta \right)=\frac{\theta }{1-\theta }$ ($0<\theta <1$). The density of the LFRG distribution is given by
\begin{equation}\label{eq.fLG}
f\left(x\right)=\frac{\left(1-\theta \right)(a+bx){\exp  \left(-ax-\frac{b}{2}x^2\right)\ }}{{\left(1-\theta {\exp  \left(-ax-\frac{b}{2}x^2\right)\ }\right)}^2}=\frac{\alpha (a+bx){\exp  \left(-ax-\frac{b}{2}x^2\right)\ }}{{\left(1-(1-\alpha ){\exp  \left(-ax-\frac{b}{2}x^2\right)\ }\right)}^2},
\end{equation}
where $\theta =1-\alpha $. The hazard rate function is given by
\[h\left(x\right)=\frac{(a+bx)}{1-\theta {\exp  \left(-ax-\frac{b}{2}x^2\right)\ }}.\]
If $b=0$ the density function in (\ref{eq.fLG}) becomes the exponential geometric (EG) density function \citep{ad-lo-98}.
 The hazard rate function of the EG distribution is decreasing. \cite{ad-di-05}
 extended the EG distribution by putting $\alpha =1-\theta >0$, and introduced the extended exponential geometric distribution. the EEG hazard rate function is monotonically increasing for $\alpha >1$; decreasing for $0<\alpha <1$ and constant for $\alpha =1$.
\cite{gh-ko-07}
introduced the LFRG distribution based on \cite{ma-ol-97}
and in fact, the function in (\ref{eq.fLG}) is also density function if  $\alpha =1-\theta >0$. This distribution is known to extended linear failure rate (ELFR) distribution, and its density function is decreasing if $\left(\alpha -2\right)a^2+\alpha b \le 0$, and is unimodal if $\left(\alpha -2\right)a^2+\alpha b >0$. \cite{gh-ko-07}
showed that the hazard rate function of the ELFR distribution is increasing, bathtub or increasing-decreasing-increasing.

\begin{figure}[]
\centering
\includegraphics[scale=0.40]{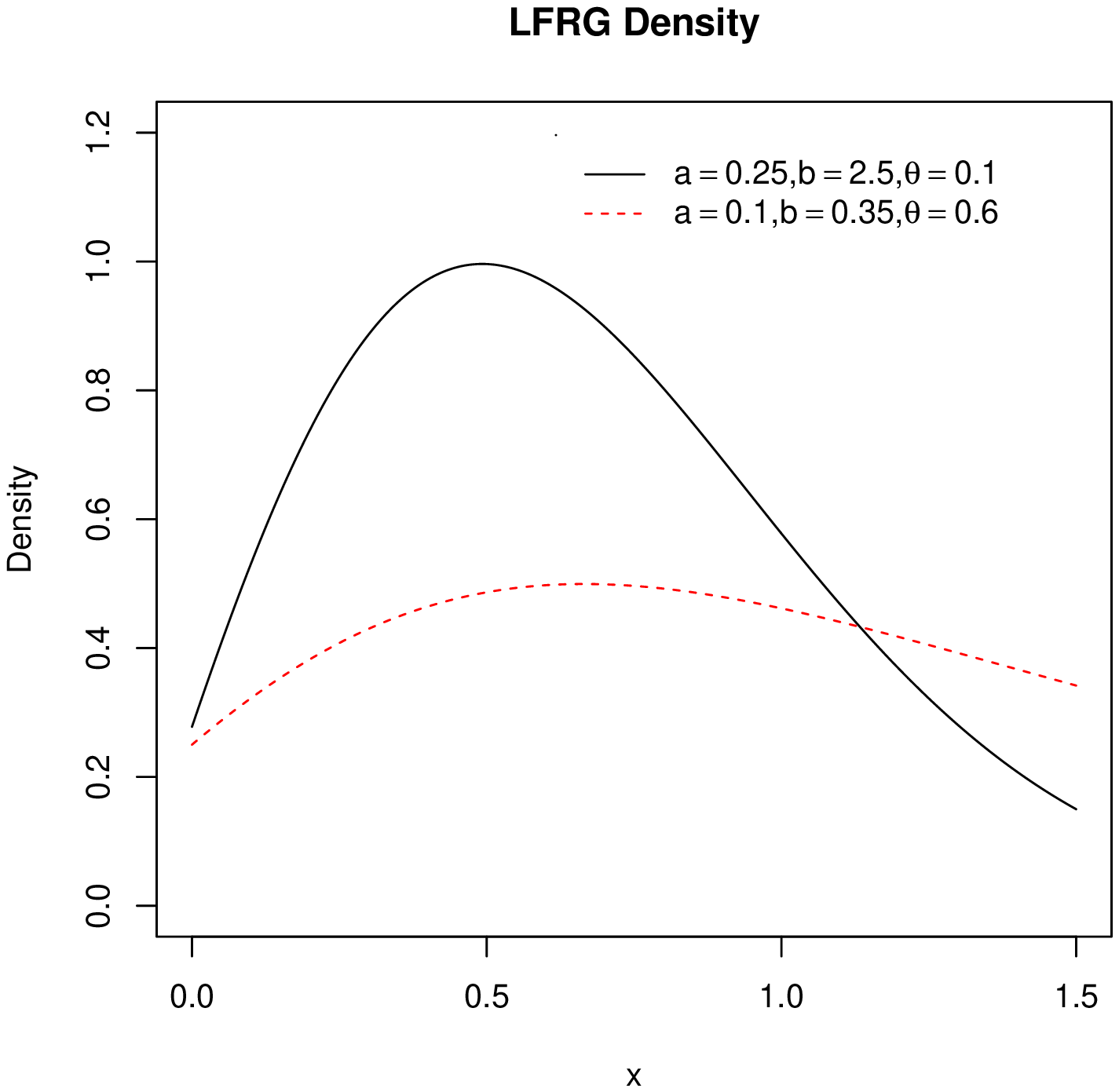}
\includegraphics[scale=0.40]{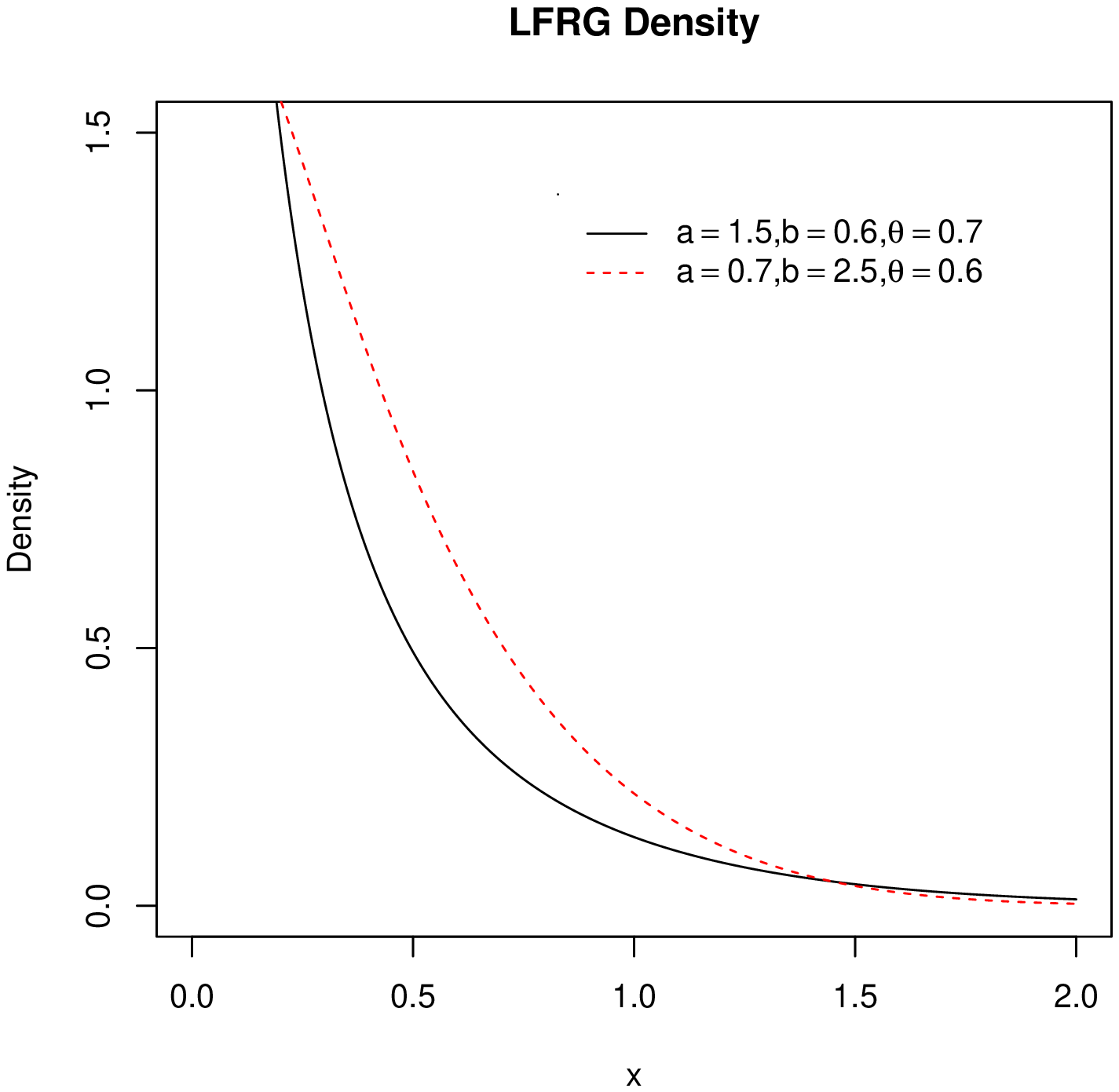}
\includegraphics[scale=0.40]{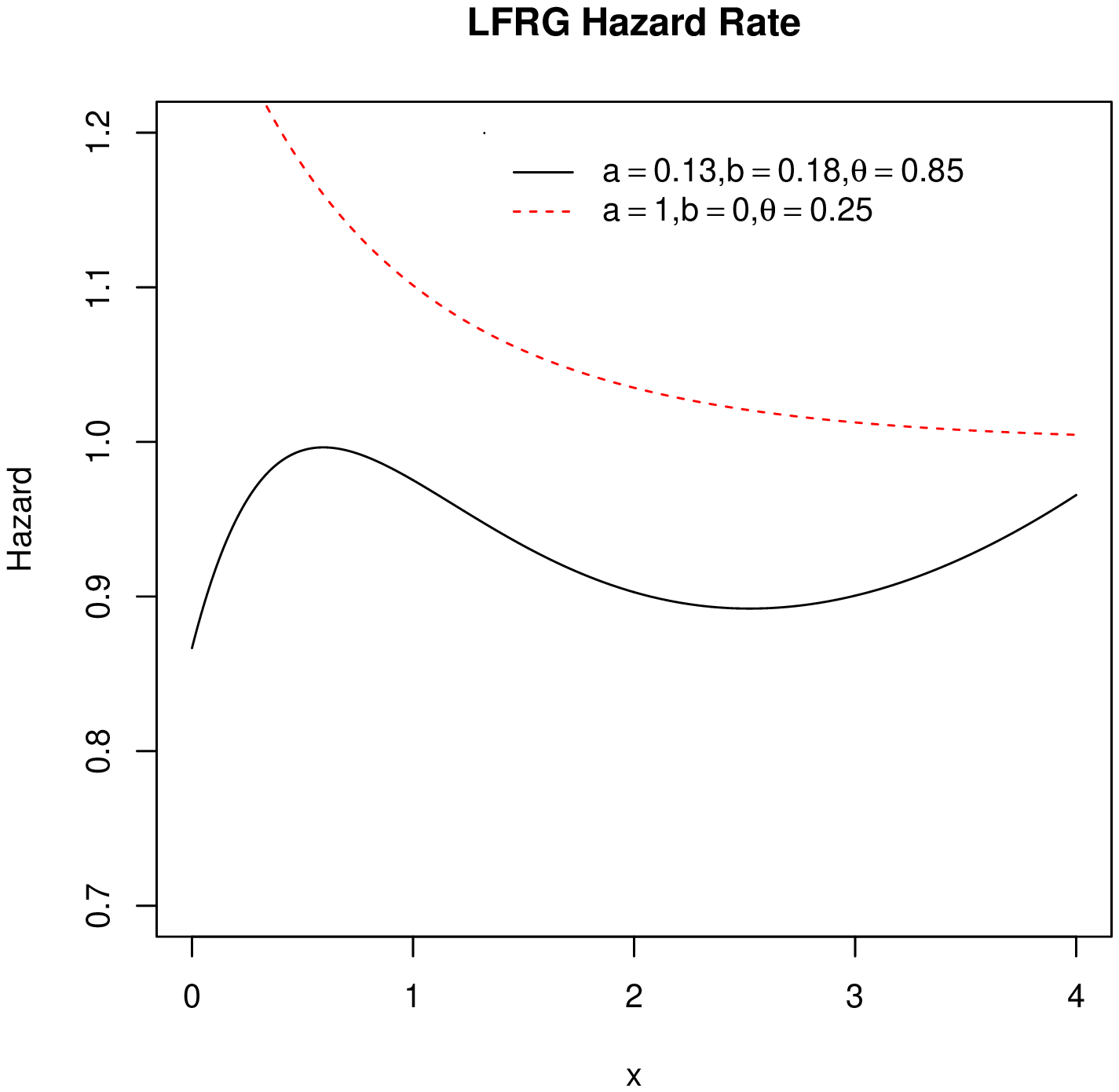}
\includegraphics[scale=0.40]{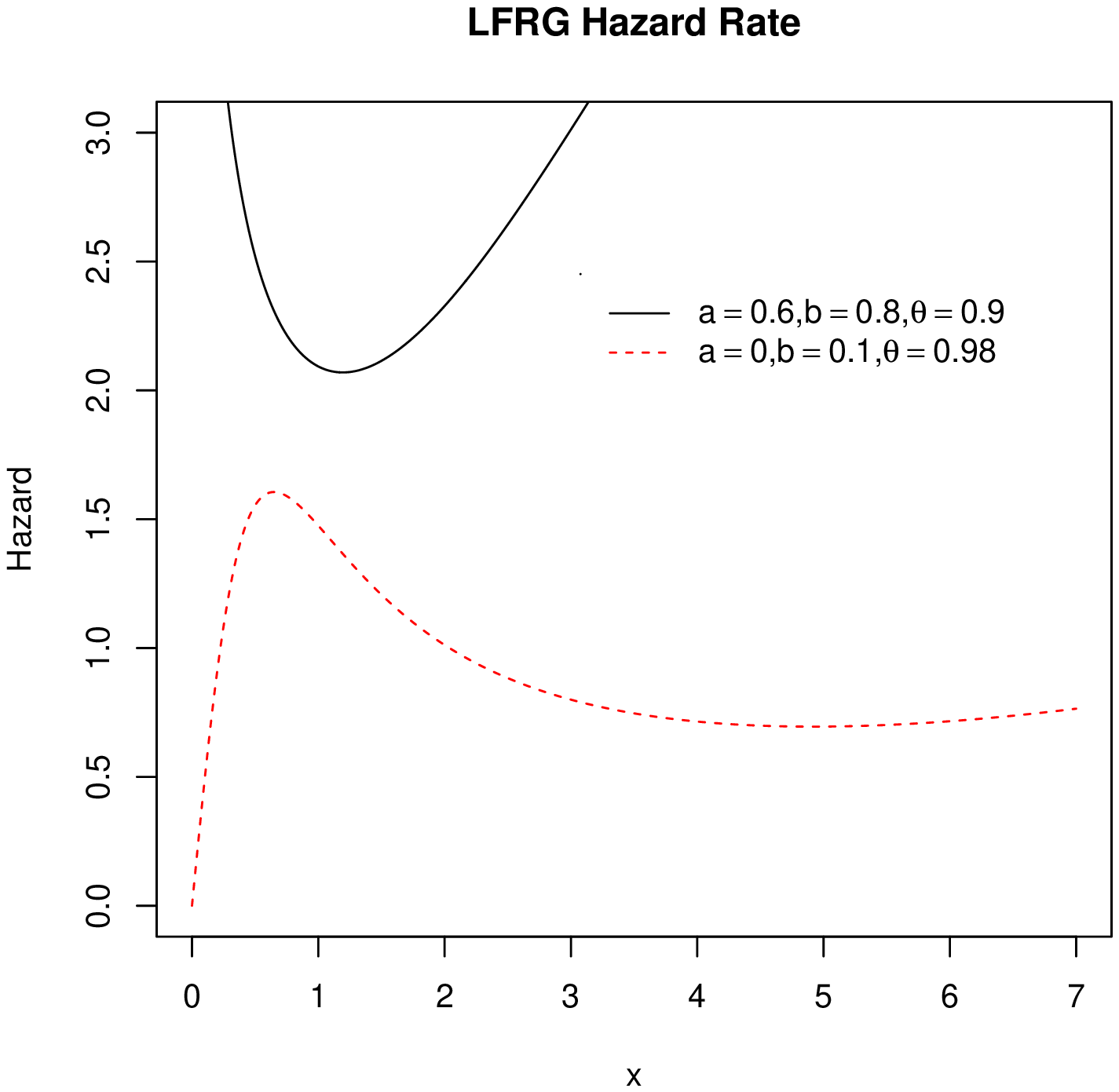}
\includegraphics[scale=0.40]{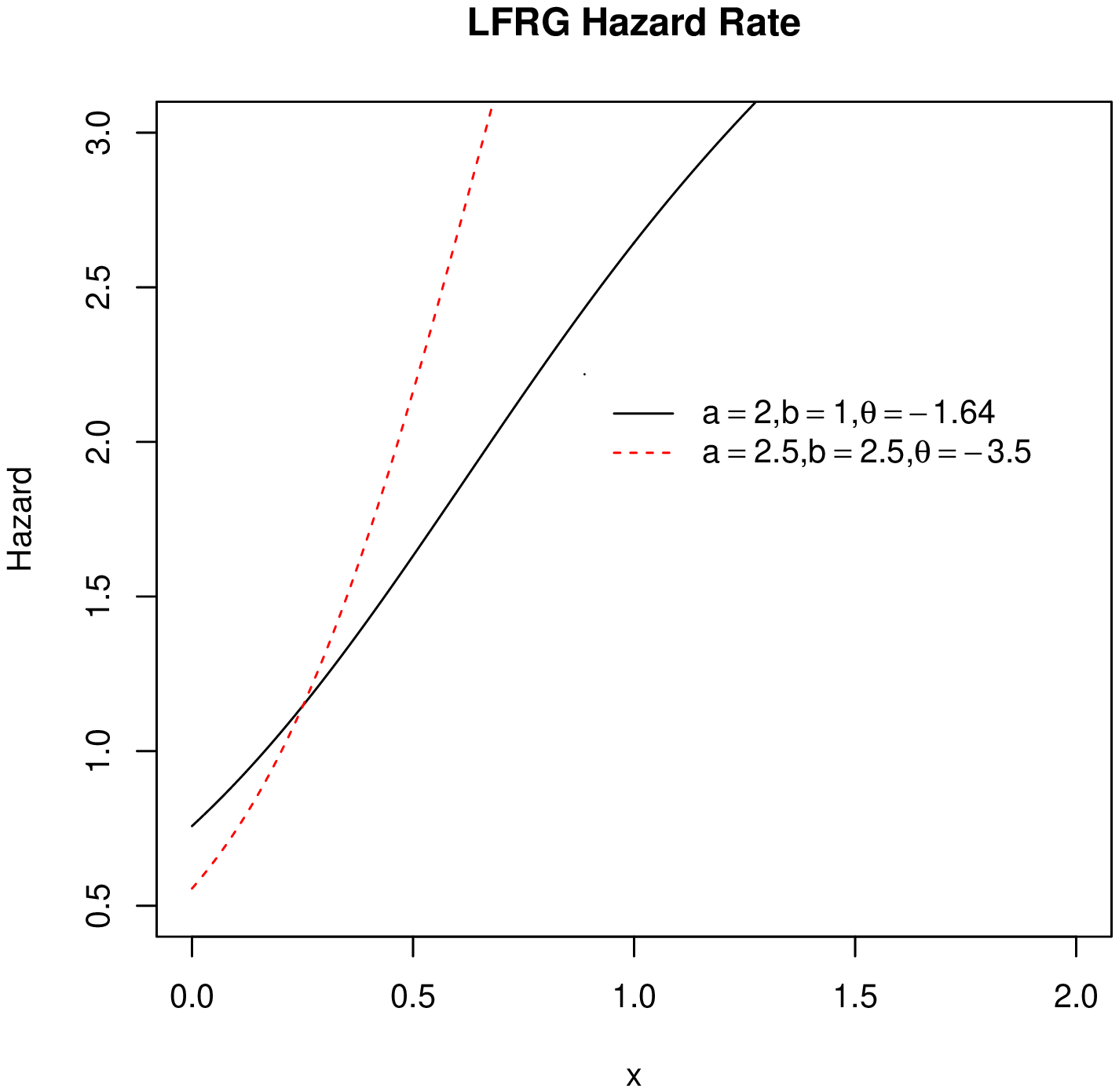}
\vspace{-0.8cm}
\caption[]{Plot of density and hazard rate function of the LFRG distribution for different values of its parameters.}
\end{figure}

\subsection{LFR-Poisson distribution}

The Poisson distribution (truncated at zero) is a special case of power series distributions with $a_n=\frac{1}{n!}$ and $C\left(\theta \right)=e^{\theta }-1$ ($\theta >0$). The density function of LFR-Poisson (LFRP) distribution is given as
\begin{equation}\label{eq.fLP}
f\left(x\right)=\theta (a+bx){\exp  \left(-ax-\frac{b}{2}x^2\right)\ }\frac{e^{\theta {\exp  \left(-ax-\frac{b}{2}x^2\right)\ }}}{e^{\theta }-1},
\end{equation}
and the hazard rate function of LFRP distribution is given by
\[h\left(x\right)=\theta (a+bx){\exp  \left(-ax-\frac{b}{2}x^2\right)\ }\frac{e^{\theta {\exp  \left(-ax-\frac{b}{2}x^2\right)\ }}}{e^{\theta \exp  \left(-ax-\frac{b}{2}x^2\right)}-1}.
\]
If $b=0$, the density function in (\ref{eq.fLP}) changes to the density of exponential-Poisson (EP) distribution \citep{kus-07}.
 The hazard rate function of the EP distribution is decreasing, increasing, bathtub and increasing-decreasing-increasing.

\begin{figure}[]
\centering
\includegraphics[scale=0.40]{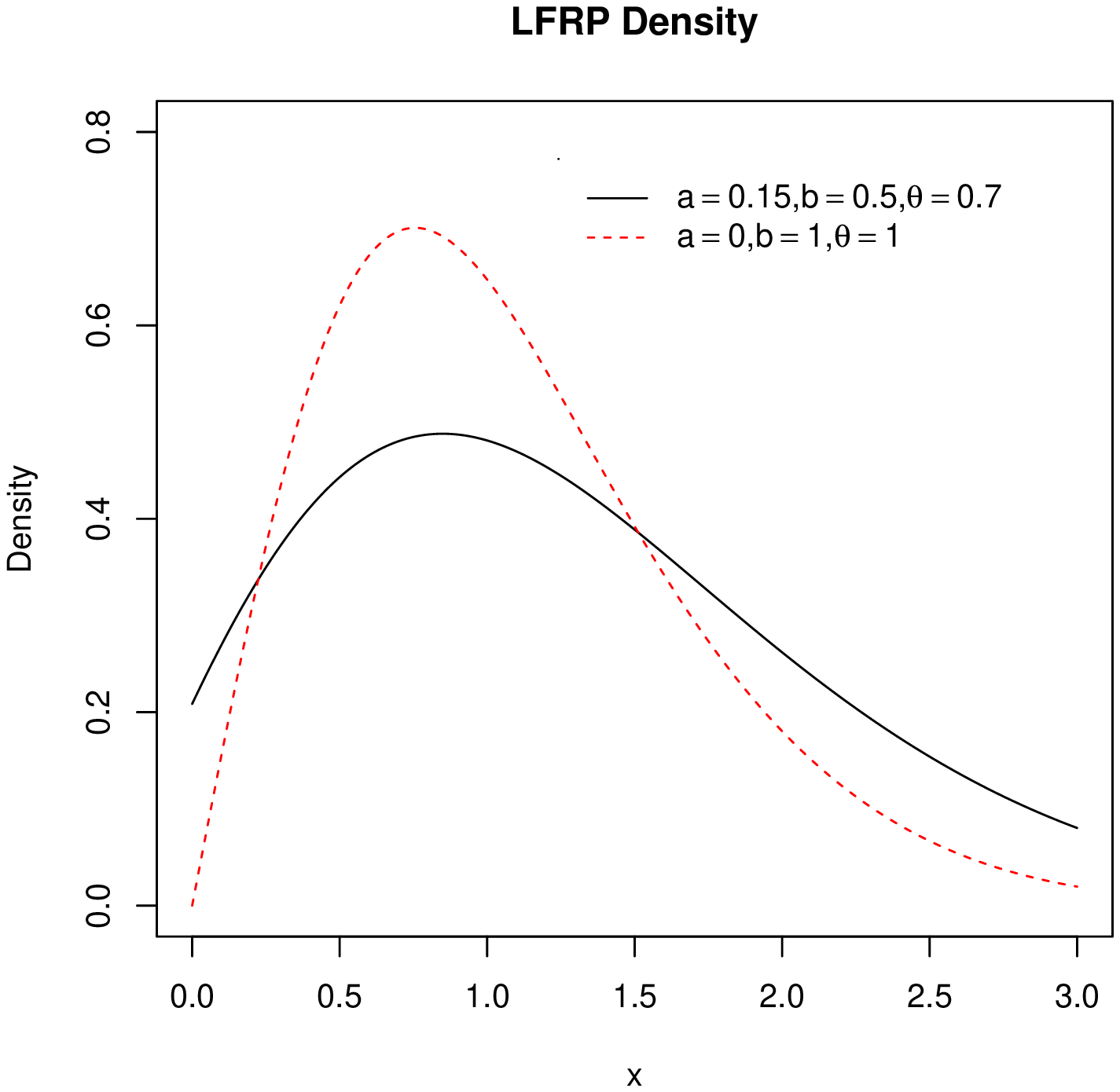}
\includegraphics[scale=0.40]{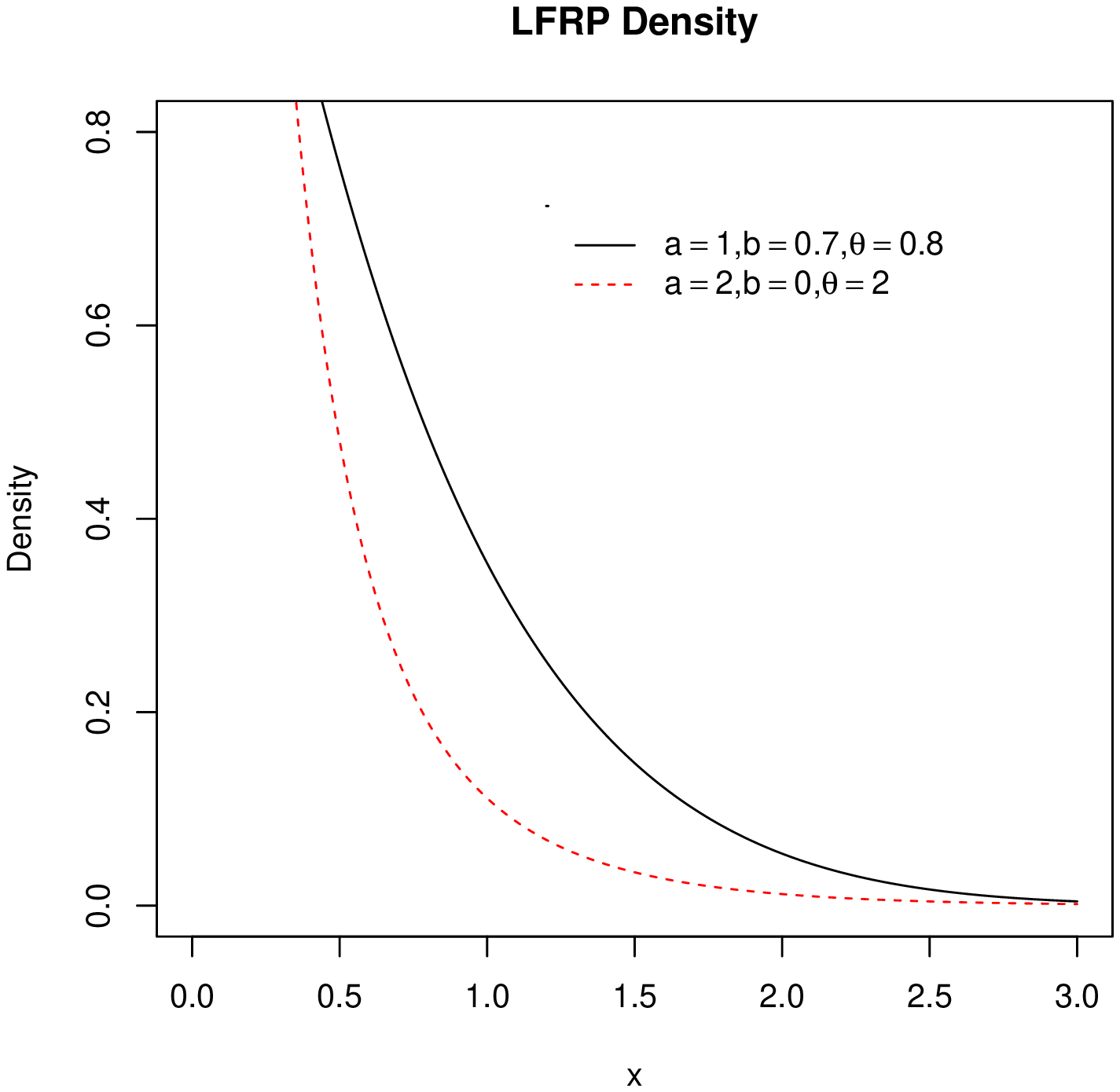}
\includegraphics[scale=0.40]{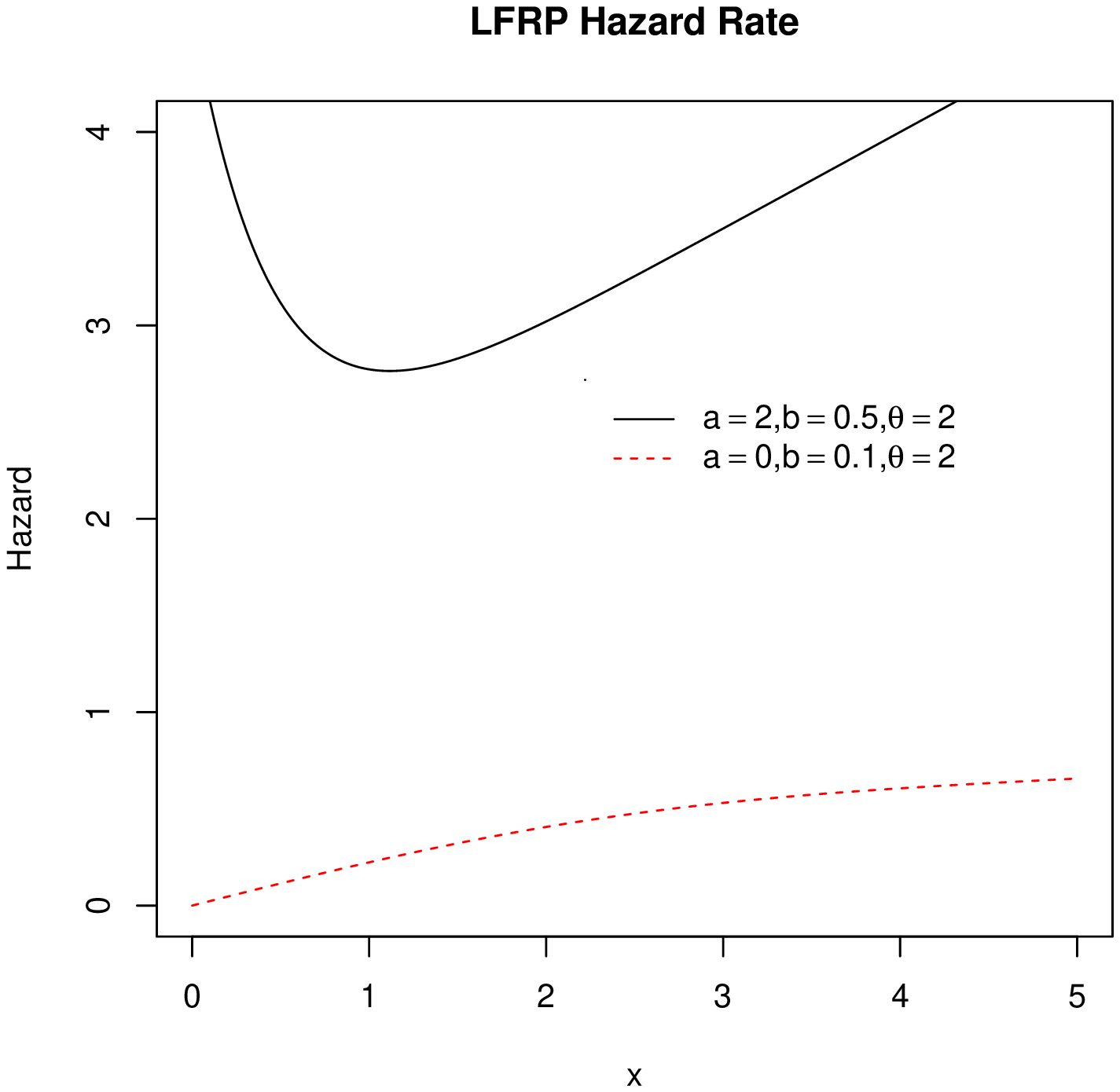}
\includegraphics[scale=0.40]{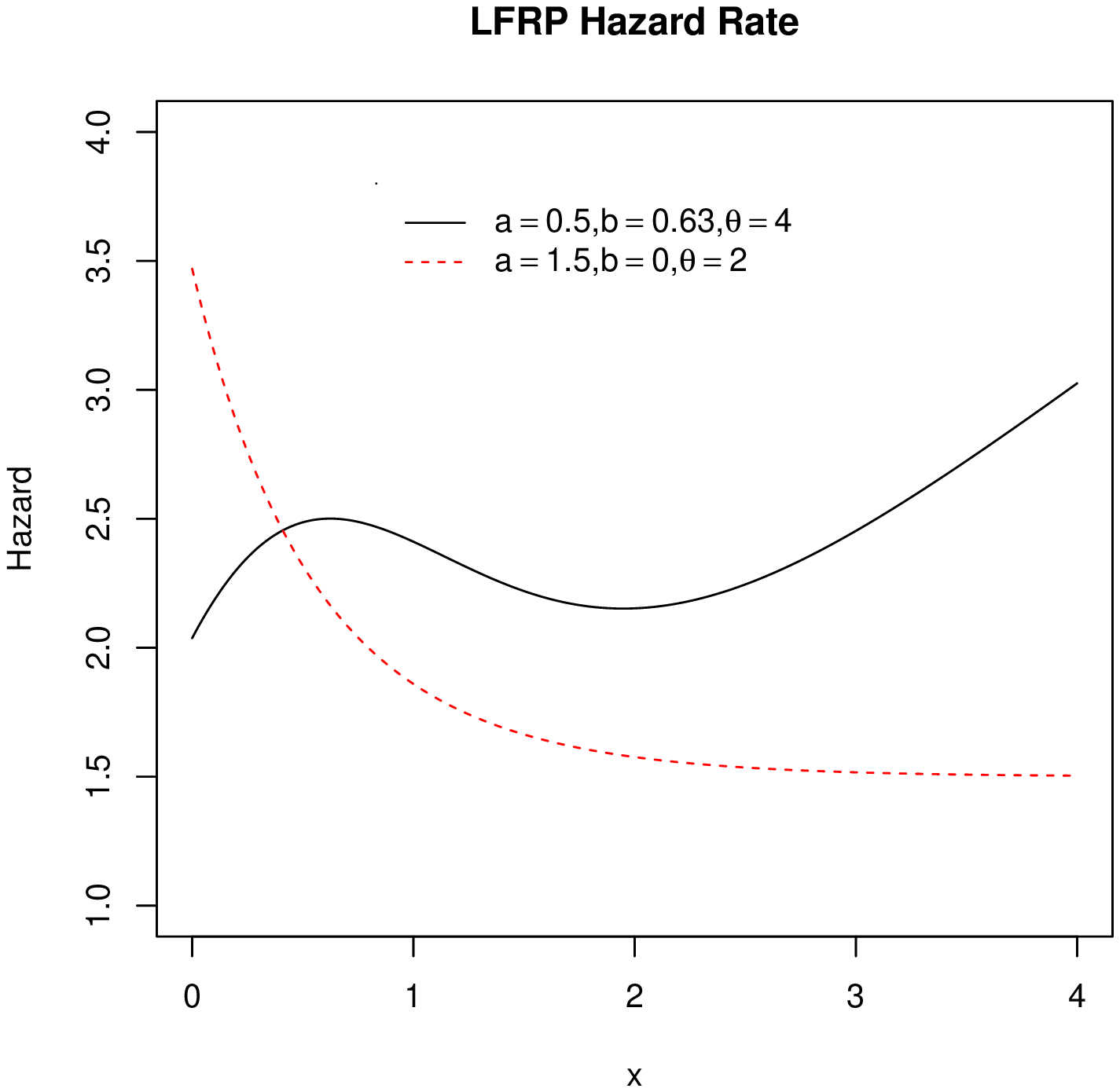}
\includegraphics[scale=0.40]{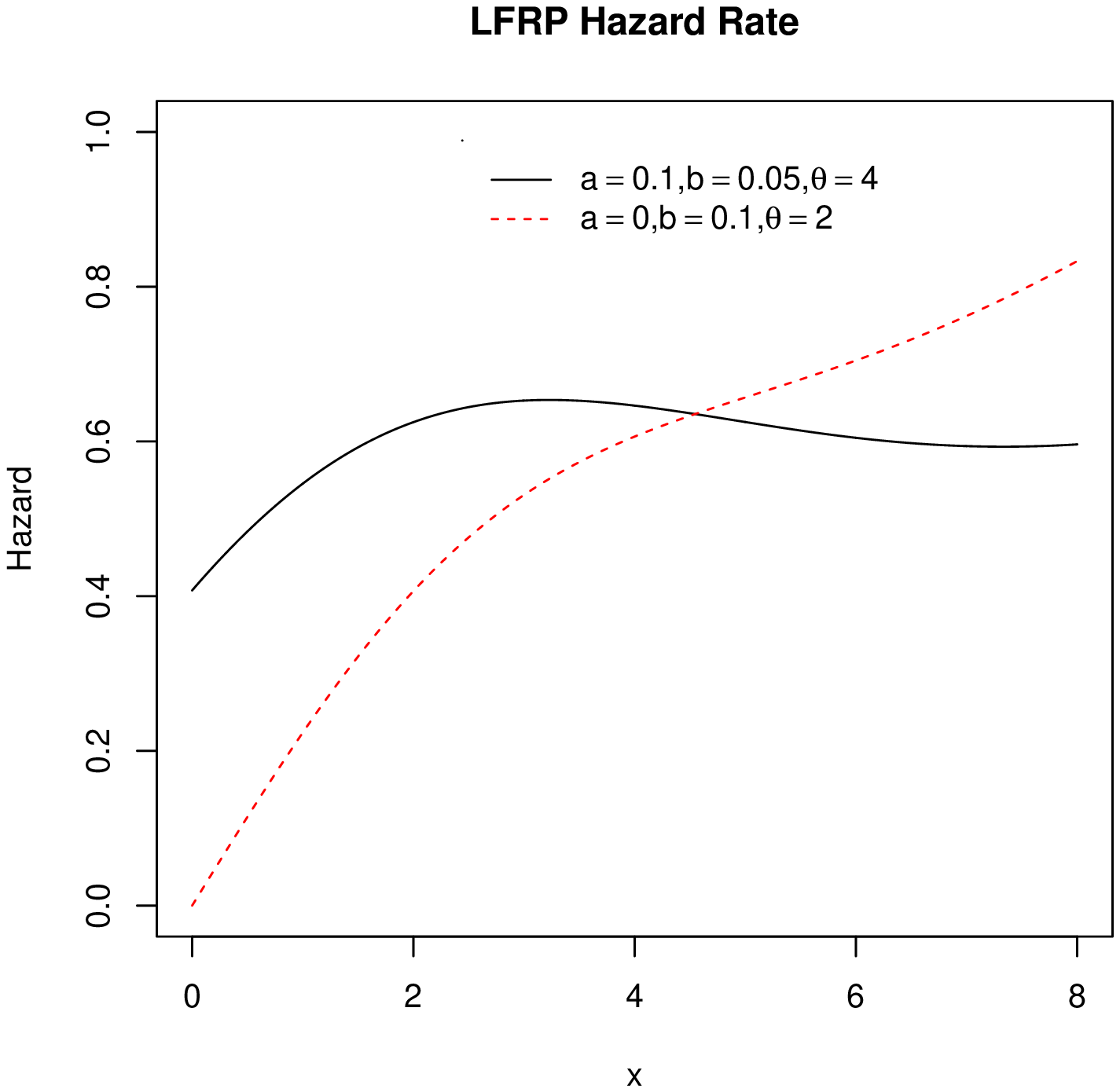}
\vspace{-0.8cm}
\caption[]{Plot of density and hazard rate function of the LFRP distribution for different values of its parameters.  }
\end{figure}

\subsection{LFR-binomial distribution}

The binomial distribution (truncated at zero) is also a special case of the class of power series distributions with
$a_n=\left(\begin{array}{c}
  m \\
  n \end{array}\right)$ and $C\left(\theta \right)={\left(\theta +1\right)}^m-1$ where $m$ ($n\le m$)  is the number of replicas. The density function of LFR-binomial (LFRB) distribution is given by
\begin{equation}\label{eq.fLB}
f\left(x\right)=\theta m(a+bx){\exp  \left(-ax-\frac{b}{2}x^2\right)}\frac{{\left(\theta {\exp  \left(-ax-\frac{b}{2}x^2\right)\ }+1\right)}^{m-1}}{{\left(\theta +1\right)}^m-1},
\end{equation}
and its hazard rate function is given as
\[h\left(x\right)=\frac {m \theta (a+bx)\exp \left(-ax-\frac{b}{2}x^2\right)(\theta \exp \left(-ax-\frac{b}{2}x^2\right)+1)^{m-1}}{(\theta \exp \left(-ax-\frac{b}{2}x^2\right)+1)^m-1}.\]
We can find that the LFRP distribution can be obtained as limiting of LFRB distribution if $m\theta \to \lambda >0$,  when $m\to \infty $.
If $b=0$, the density function in (\ref{eq.fLB}) becomes the density function of exponential binomial (EB) distribution \citep{ch-ga-09}.
If $m=1$, then the density function in (\ref{eq.fLB}) changes to the density of LFR distribution.

\begin{figure}[h]
\centering
\includegraphics[scale=0.40]{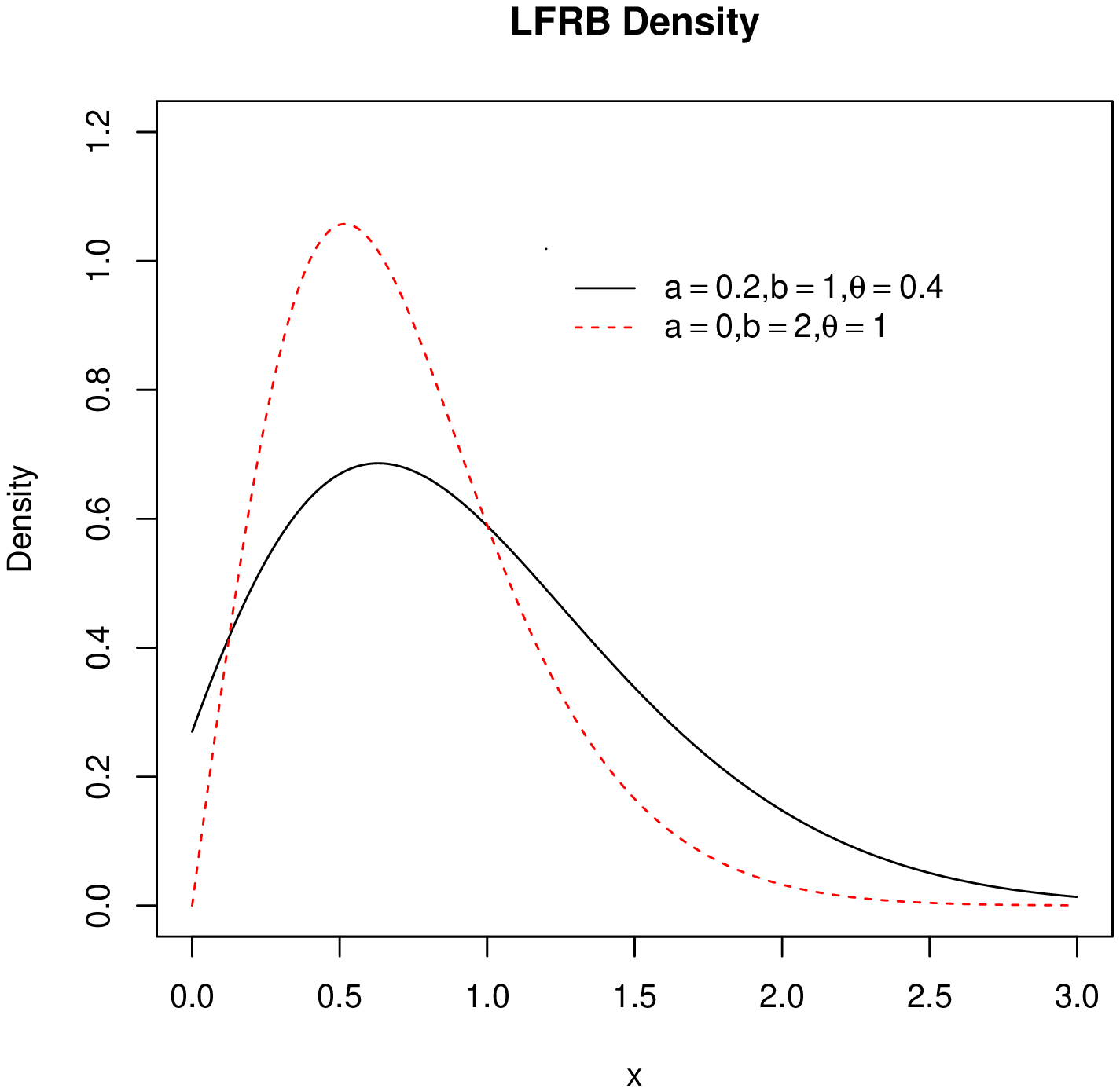}
\includegraphics[scale=0.40]{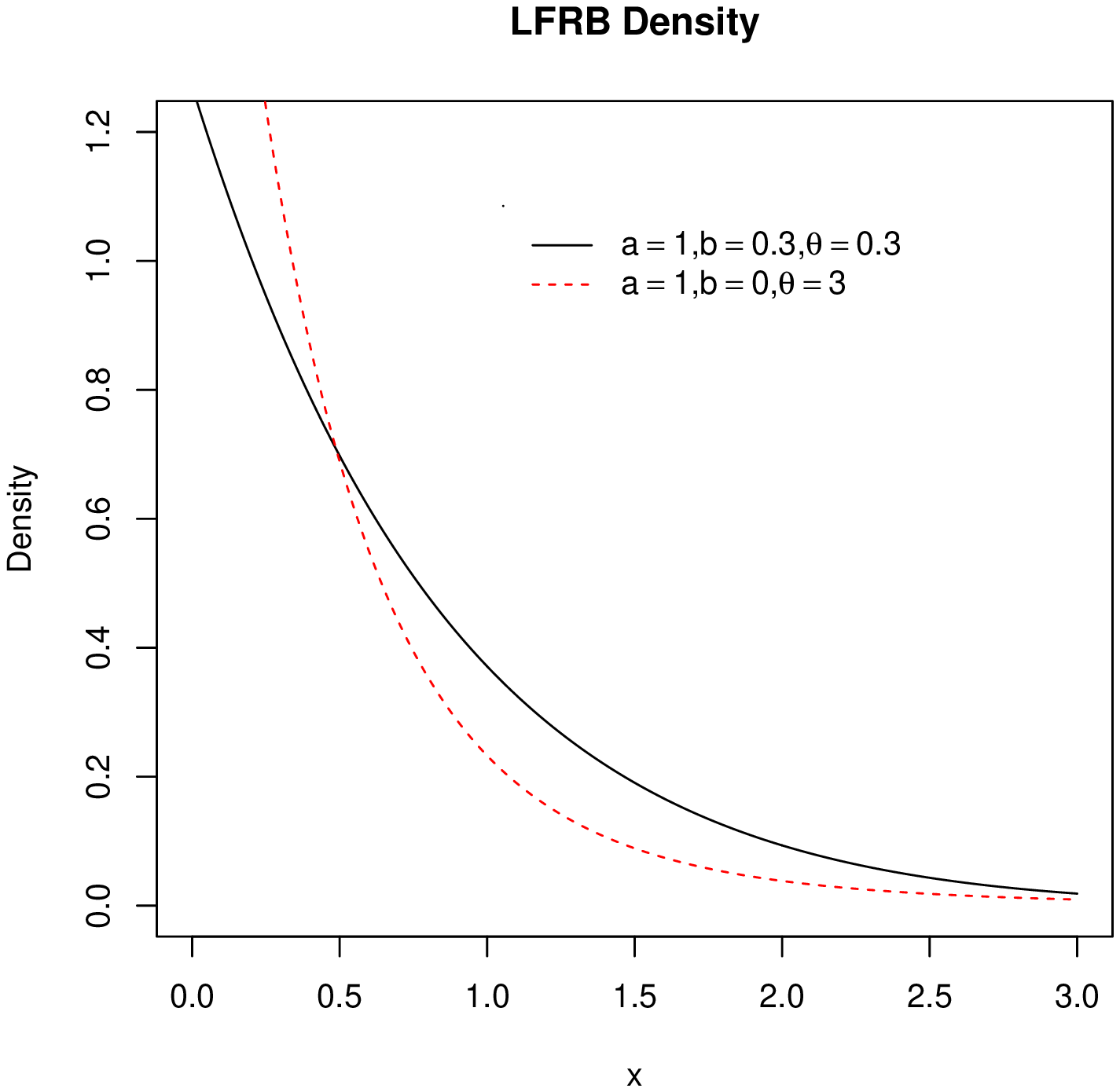}
\includegraphics[scale=0.40]{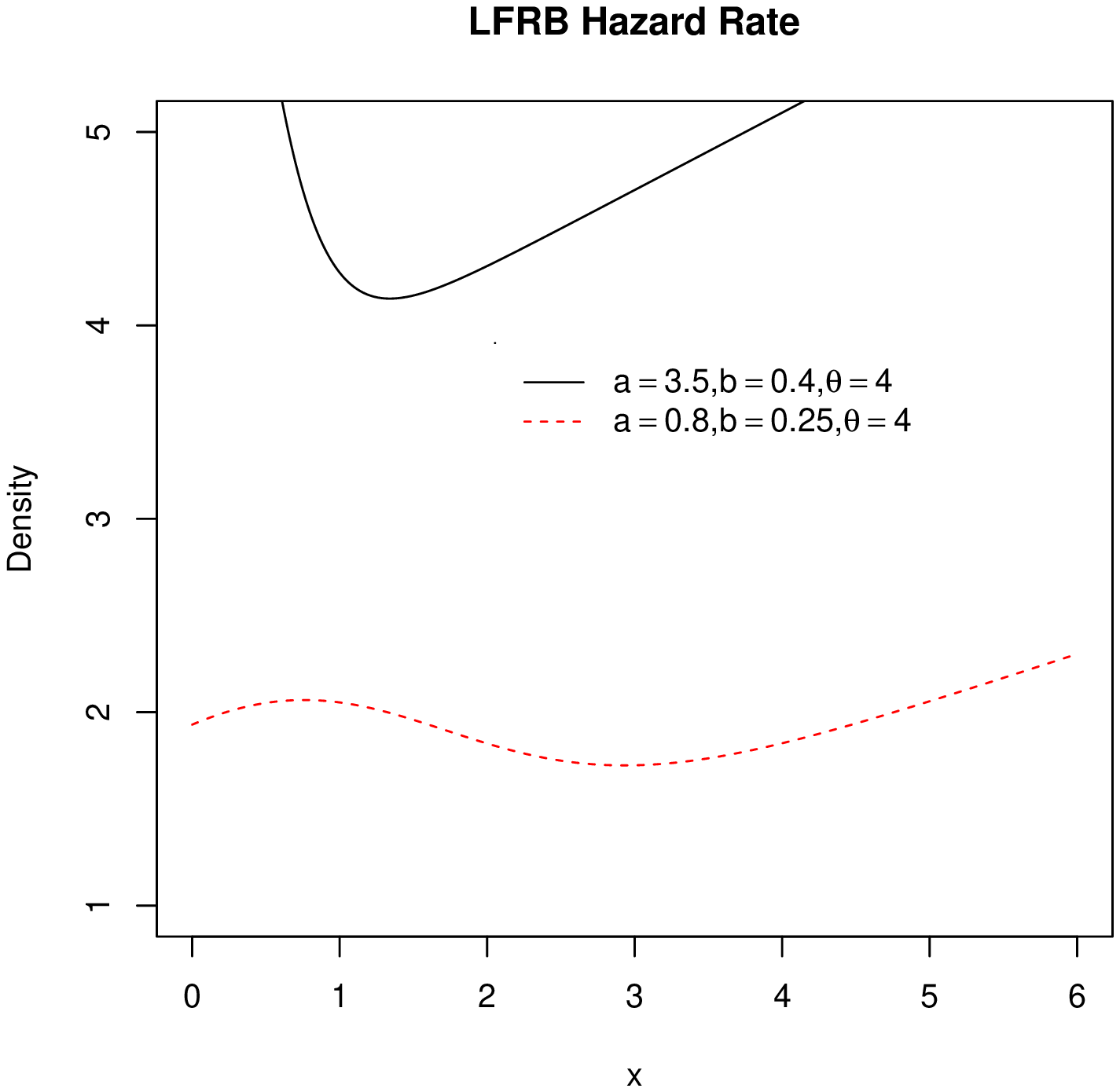}
\includegraphics[scale=0.40]{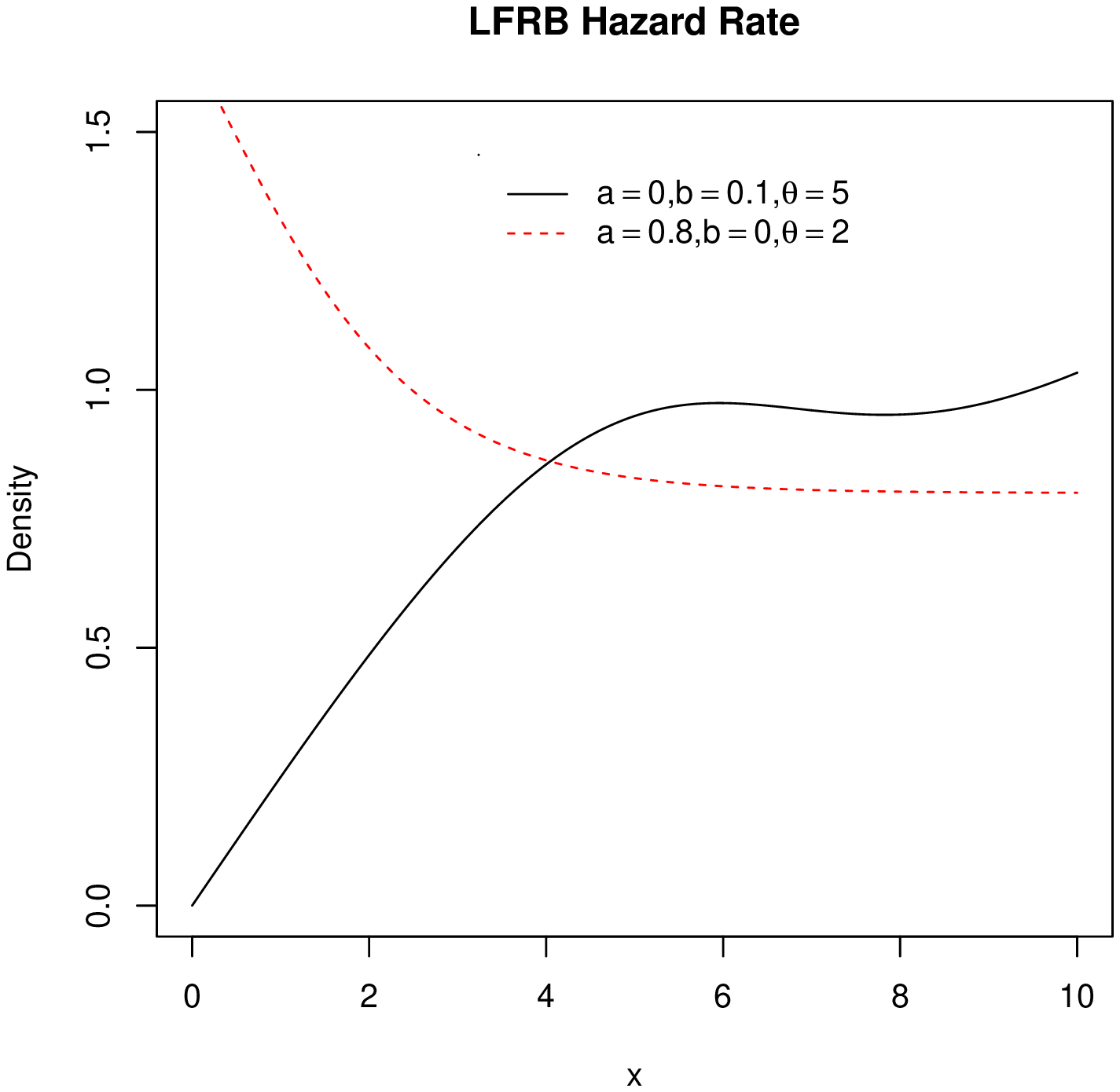}
\vspace{-0.8cm}
\caption[]{Plot of density and hazard rate function of the LFRB distribution for different values of its parameters.  }
\end{figure}

\subsection{ LFR-logarithmic distribution}
The logarithmic distribution (truncated at zero) is a special case of power series distributions with $a_n=\frac{1}{n}$ and $C\left(\theta \right)=-{\log  \left(1-\theta \right)\ }$ ($0<\theta <1$). The density function of LFR-logarithmic (LFRL) distribution is given by
\begin{equation}\label{eq.fLL}
f\left(x\right)=\frac{\theta (a+bx){\exp  \left(-ax-\frac{b}{2}x^2\right)\ }}{-{\log  \left(1-\theta \right)\ }\left(1-\theta {\exp  \left(-ax-\frac{b}{2}x^2\right)\ }\right)},
\end{equation}
and its hazard rate function is given by
\[h\left(x\right)=\frac {\theta (a+bx)\exp \left(-ax-\frac{b}{2}x^2\right)(\theta \exp \left(-ax-\frac{b}{2}x^2\right)-1)^{-1}}{\log\left(1-\theta \exp \left(-ax-\frac{b}{2}x^2\right)\right)}.\]
If $b=0$, the density function in (\ref{eq.fLL}) becomes the density function of the exponential-logarithmic (EL) distribution \citep{ta-re-08}.
\begin{figure}[]
\centering
\includegraphics[scale=0.40]{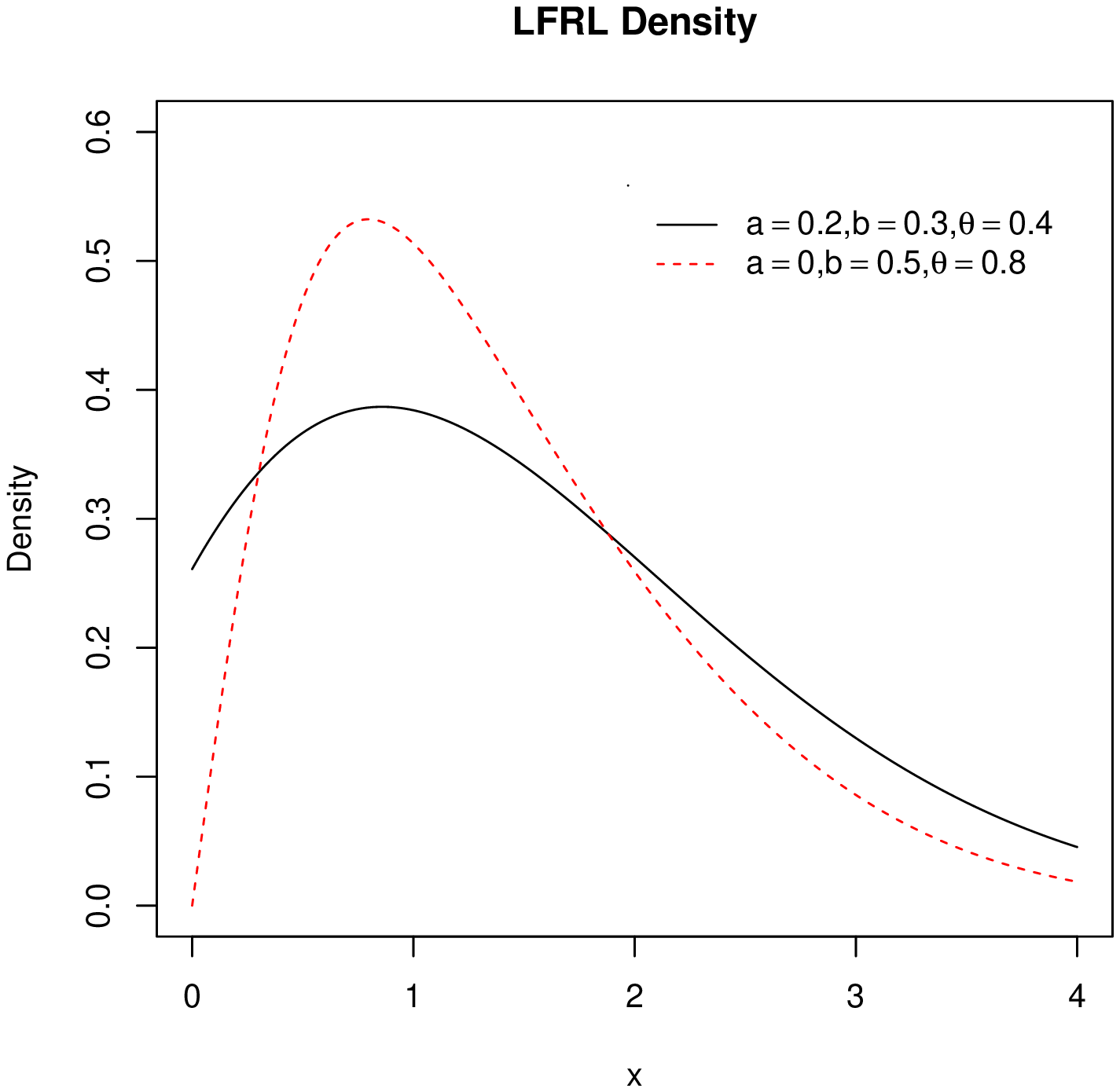}
\includegraphics[scale=0.40]{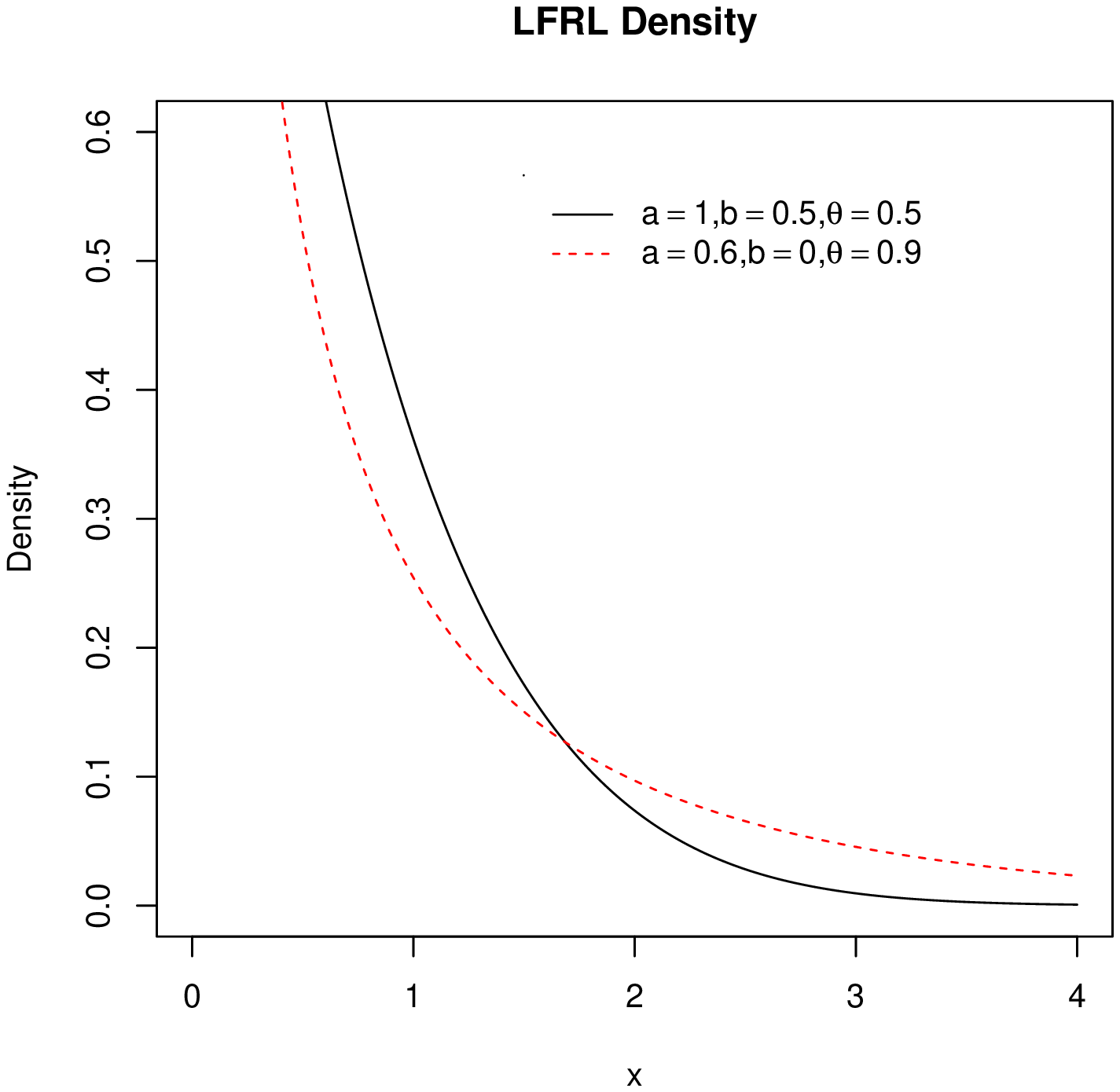}
\includegraphics[scale=0.40]{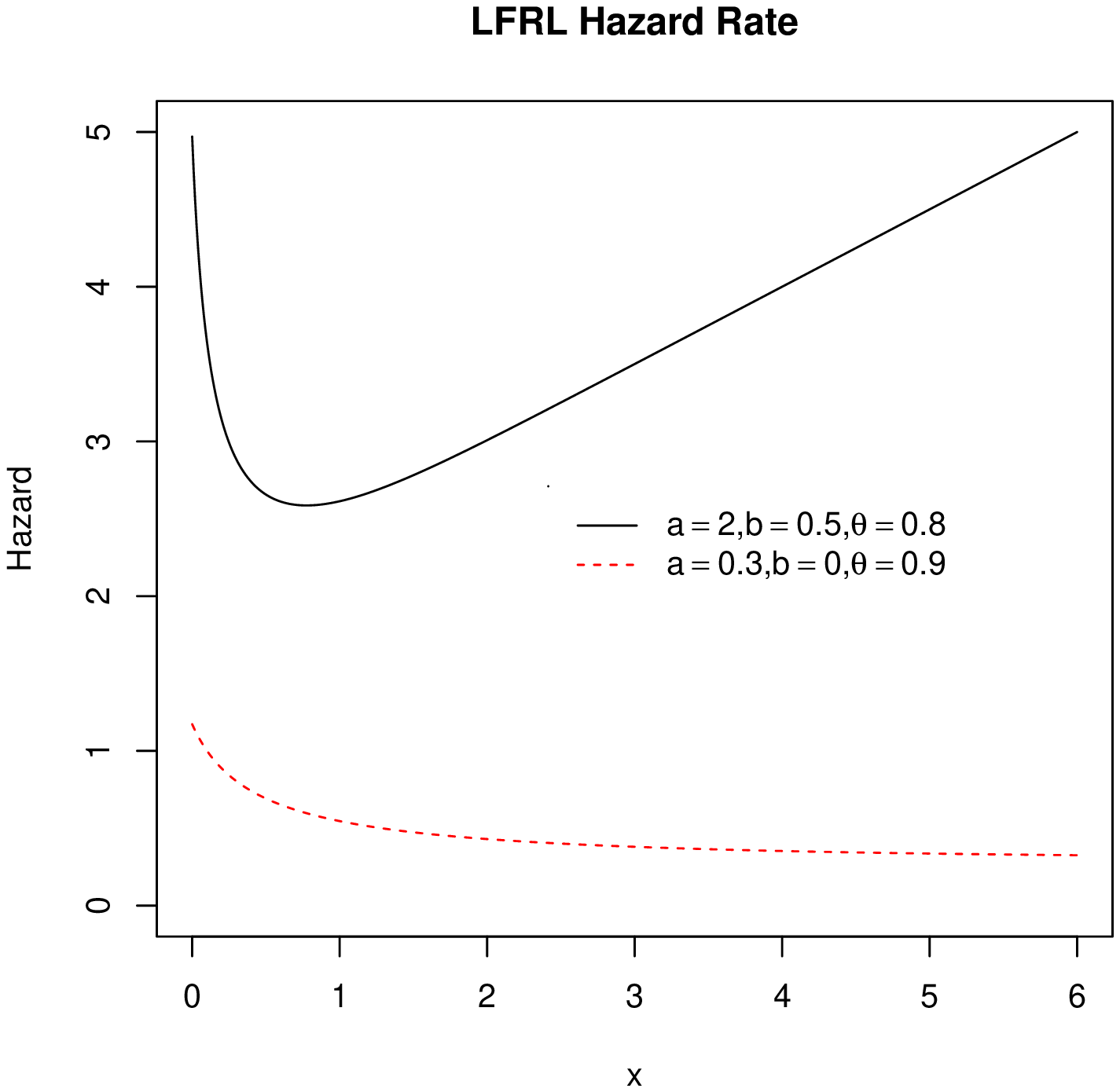}
\includegraphics[scale=0.40]{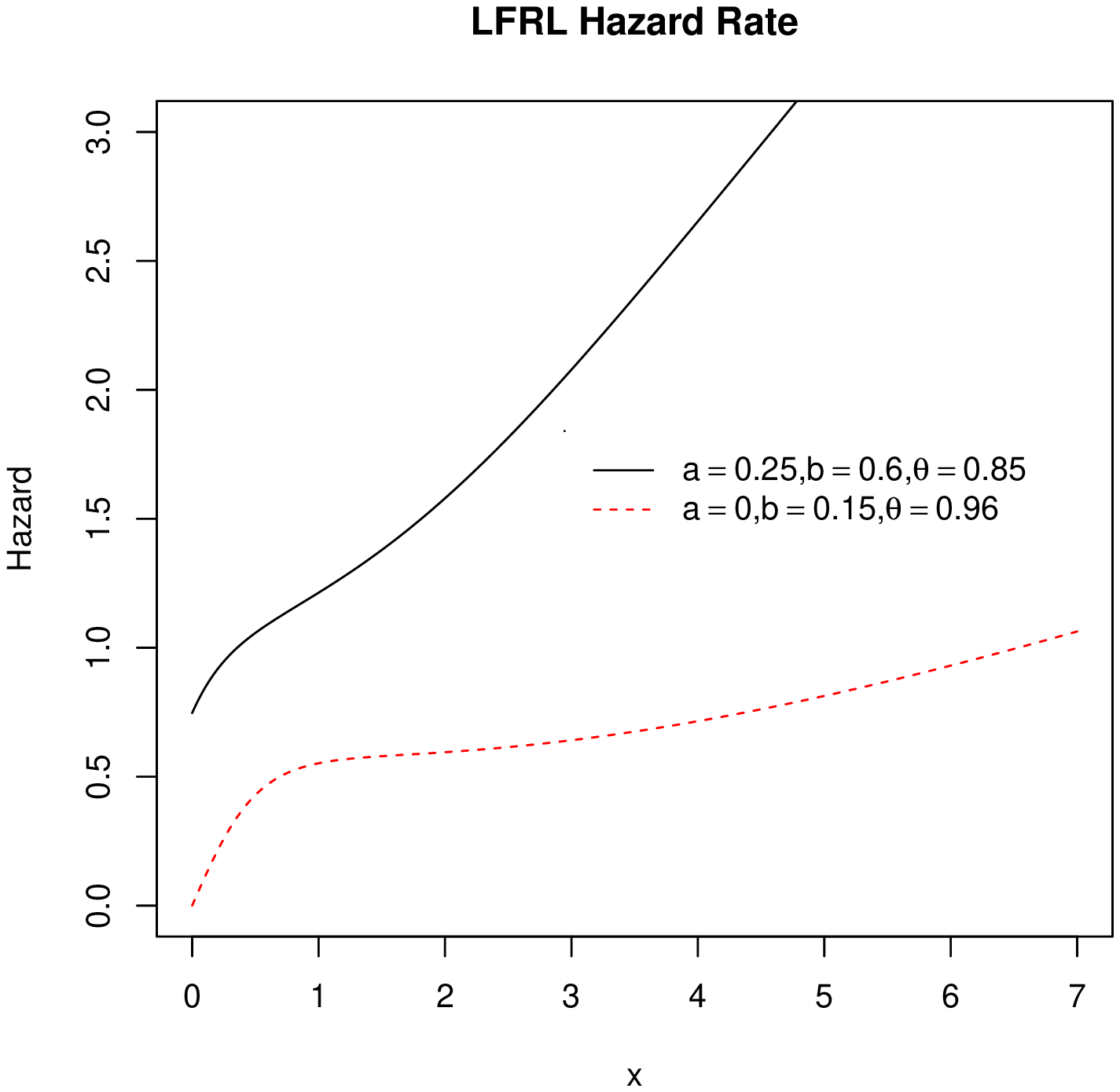}
\vspace{-0.8cm}
\caption[]{Plot of density and hazard rate function of the LFRL distribution for different values of its parameters.}
\end{figure}

\section{Estimation and inference}
\label{se.mle}
In this section, we discuss the maximum likelihood estimates (MLEs) of the parameters of the LFRPS distributions using a complete sample.

\subsection{MLE's}
Let $x_1,..., x_n$ be the observed values of a random sample of size \textit{n }from the ${\rm LFRPS}\left(a,b,\theta \right)$ distributions. The log-likelihood function for the vector of parameters ${\mathbf \Theta } ={\left(a,b,\theta \right)}^{{\rm T}}$ can be written as
\begin{eqnarray}\label{eq.ln}
{\ell }_n&=&{\ell }_n\left({\boldsymbol x};{\mathbf \Theta } \right)=n{\log  \left(\theta \right)\ }+\sum^n_{i=1}{{\log  (a+bx_i)\ }}-na\bar{x}-\frac{nb}{2}\bar{x^2}\nonumber\\
&&+\sum^n_{i=1}{{\log  \left(C'\left(\theta p_i\right)\right)\ }}-n{\log  \left(C(\theta )\right)\ },
\end{eqnarray}
where $p_i={\exp  (-ax_i-\frac{b}{2}x^2_i)\ }$, $\bar{x}=n^{-1}\sum^n_{i=1}{x_i}$ and $\bar{x^2}=n^{-1}\sum^n_{i=1}{x^2_i}$. The components of the score vector $U_n=\left(\frac{\partial{\ell }_n}{\partial a},\frac{\partial{\ell }_n}{\partial b}\ , \frac{\partial{\ell }_n}{\partial \theta}\right)$ are given by
\begin{eqnarray}
&&\frac{\partial{\ell }_n}{\partial a}=\sum^n_{i=1}{\frac{1}{a+bx_i}}-n\overline{x}-\sum^n_{i=1}{\frac{\theta x_ip_iC''\left(\theta p_i\right)}{C'\left(\theta p_i\right)}},\label{eq.mla}\\
&&\frac{\partial{\ell }_n}{\partial b}=\sum^n_{i=1}{\frac{x_i}{a+bx_i}}-\frac{n}{2}\bar{x^2}-\sum^n_{i=1}{\frac{\theta x^2_ip_iC''\left(\theta p_i\right)}{2C'\left(\theta p_i\right)}},\label{eq.mlb}\\
&& \frac{\partial{\ell }_n}{\partial \theta}=\frac{n}{\theta }+\sum^n_{i=1}{\frac{p_iC''\left(\theta p_i\right)}{C'\left(\theta p_i\right)}}-\frac{nC'(\theta )}{C(\theta )}.\label{eq.mlt}
\end{eqnarray}
The maximum likelihood estimator of ${\mathbf \Theta } ={\left(a, b,\theta \right)}^{{\rm T}}$ is obtained by numerically solving the nonlinear system of equations $U_n=0$. It is usually more convenient to use a nonlinear optimization algorithm (such as the quasi-Newton algorithm) to numerically maximize the log-likelihood function in (\ref{eq.ln}).

For interval estimation and hypothesis tests on the model parameters, we require the observed information matrix. The $3\times 3$ observed information matrix $I_{n}=I_{n}\left({\mathbf \Theta }\right)$ is obtained as
\[I_{n}({\mathbf \Theta } )=\left[ \begin{array}{ccc}
I_{aa} & I_{ab} & I_{a\theta } \\
I_{ba} & I_{bb} & I_{b\theta } \\
I_{\theta a} & I_{\theta b} & I_{\theta \theta } \end{array}
\right],\]
where the expressions for the elements of $I_{n}({\mathbf \Theta } )$ are given in Appendix A.

Applying the usual large sample approximation, MLE of ${\mathbf \Theta }$ i.e., $\widehat{\mathbf \Theta }$ can be treated as being approximately
$N_3({\mathbf \Theta } ,{J_n({\mathbf \Theta } )}^{-1}{\mathbf )}$, where $J_n\left({\mathbf \Theta }
\right)=E\left[I_n\left({\mathbf \Theta } \right)\right]$. Under conditions
that are fulfilled for parameters in the interior of the parameter
space but not on the boundary, the asymptotic distribution of
$\sqrt{n}(\widehat{{\mathbf \Theta } }{\rm -}{\mathbf \Theta } {\rm )}$ is $N_3({\mathbf
0},{J({\mathbf \Theta } )}^{-1})$, where $J\left({\mathbf \Theta }\right)={\mathop{\lim
}_{n\to \infty } {n^{-1}I}_n({\mathbf \Theta } )\ }$ is the unit information
matrix. This asymptotic behavior remains valid if $J({\mathbf \Theta }
)$ is replaced by the average sample information matrix
evaluated at $\widehat{{\mathbf \Theta } }$, say ${n^{-1}I}_n(\widehat{{\mathbf \Theta }
})$. The estimated asymptotic multivariate normal $N_3({\mathbf \Theta }
,{I_n(\widehat{{\mathbf \Theta }})}^{-1})$ distribution of $\widehat{{\mathbf \Theta }}$
can be used to construct approximate confidence intervals for the
parameters and for the hazard rate and survival functions. An
$100(1-\gamma )$ asymptotic confidence interval for each parameter
${{\mathbf \Theta }}_{{\rm r}}$ is given by
\[{ACI}_r=({\widehat{{\mathbf \Theta }
}}_r-Z_{\frac{\gamma }{2}}\sqrt{{\hat{I}}^{rr}},{\widehat{{\mathbf \Theta }
}}_r+Z_{\frac{\gamma }{2}}\sqrt{{\hat{I}}^{rr}}),\] where
${\hat{I}}^{rr}$ is the (\textit{r, r}) diagonal element of
$I_n(\widehat{{\mathbf \Theta } })^{-1}$ for $r=1,~2,~3$ and $Z_{\frac{\gamma
}{2}}$ is the quantile $1-\gamma /2$ of the standard normal
distribution.

For each element of the power-series distributions (geometric,
Poisson, logarithmic and binomial), we have the following theorems
for MLE's:
\begin{theorem} \label{th.b1}
Let ${\rm g}_ 1\left(a;b,\theta,{\boldsymbol x}\right)$ denotes the function on RHS of the expression in (\ref{eq.mla}), where $b$ and $\theta$ are the true value of the parameters. Then,

\noindent (i) for a given $b>0$, and $\theta>0$, the root of ${\rm g}_1\left(a;b,\theta,{\boldsymbol x}\right)=0$ lies in the interval:
\[\left(\ (\bar{x}+\frac{k_1}{n})^{-1}-bx_{(n)},\frac{1}{\bar{x}}-bx_{\left(1\right)}\right),\]

\noindent  (ii) for a given $b>0$, and $\theta<0$, the root of ${\rm g}_1\left(a;b,\theta,{\boldsymbol x}\right)=0$ lies in the interval:
\[\left(\frac{1}{\bar{x}}-bx_{(n)},(\bar{x}+\frac{k_1}{n})^{-1}-bx_{(1)} \right),\]
where $\bar{x}=n^{-1}\sum_{i=1}^n  x_i$,  $k_1=\sum^n_{i=1}{\frac{\theta x_iv^b_i C''\left(\theta v^b_i\right)}{C'\left(\theta v^b_i\right)}}$, $v_i=e^{-\frac{1}{2}x^2_i}$, $x_{(1)}=\min(x_1,\dots,x_n)$ and $x_{(n)}=\max(x_1,\dots,x_n)$.
\end{theorem}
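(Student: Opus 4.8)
The plan is to trap $g_1(a)$ between two explicit functions of $a$ whose zeros are exactly the two endpoints of the claimed interval. Writing $v_i=e^{-x_i^2/2}$ as in the statement, note that $p_i=e^{-ax_i-\frac{b}{2}x_i^2}=e^{-ax_i}v_i^{\,b}$, so each $p_i$ decreases in $a$ and $p_i\le v_i^{\,b}$ whenever $a\ge 0$. Setting $\psi(y):=yC''(y)/C'(y)$, the last sum in (\ref{eq.mla}) is exactly $T(a):=\sum_{i=1}^{n}x_i\,\psi(\theta p_i)$, and in particular $T(0)=k_1$. Hence $g_1(a)=\sum_{i=1}^{n}\frac{1}{a+bx_i}-n\bar{x}-T(a)$, and on the domain $a>-bx_{(1)}$ on which $\ell_n$ is finite we have the elementary squeeze $\frac{n}{a+bx_{(n)}}\le\sum_{i=1}^{n}\frac{1}{a+bx_i}\le\frac{n}{a+bx_{(1)}}$.

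The structural fact I need is that $\psi$ is non-decreasing. For $y>0$ this is a monotone-likelihood-ratio argument: writing $C'(y)=\sum_{n\ge1}na_ny^{n-1}$ and $yC''(y)=\sum_{n\ge1}n(n-1)a_ny^{n-1}$ gives $\psi(y)=\sum_{n\ge1}(n-1)w_n(y)$, where the weights $w_n(y)\propto na_ny^{n-1}$ form a probability mass function whose ratios $w_{n+1}(y)/w_n(y)$ increase in $y$; the associated distribution is therefore stochastically increasing in $y$, and $\psi$ is non-decreasing. For the four members named in the theorem it is equally quick to compute $\psi$ outright and read off monotonicity (and $\psi(0^+)=0$): $\psi(y)=2y/(1-y)$ for the geometric, $\psi(y)=y$ for the Poisson, $\psi(y)=y/(1-y)$ for the logarithmic, and $\psi(y)=(m-1)y/(1+y)$ for the binomial, each increasing on $y>0$ and on $y<0$ alike. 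I expect this monotonicity of $\psi$ to be the only real obstacle; everything afterwards is bookkeeping with the squeeze above.

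For part (i), fix $b>0$, $\theta>0$ and seek a root with $a\ge 0$. Then $\theta p_i>0$, so $\psi(\theta p_i)\ge0$; and since $p_i\le v_i^{\,b}$ with $\psi$ non-decreasing, $\psi(\theta p_i)\le\psi(\theta v_i^{\,b})$, whence multiplying by $x_i\ge0$ and summing yields $0\le T(a)\le k_1$. Combining with the squeeze on $\sum\frac{1}{a+bx_i}$ gives, on one side, $g_1(a)\le\frac{n}{a+bx_{(1)}}-n\bar{x}$, which is negative exactly when $a>\frac{1}{\bar{x}}-bx_{(1)}$, and on the other side $g_1(a)\ge\frac{n}{a+bx_{(n)}}-n\bar{x}-k_1$, which is positive exactly when $a<\bigl(\bar{x}+\frac{k_1}{n}\bigr)^{-1}-bx_{(n)}$. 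Hence every root lies in the open interval of part (i). That a root exists follows from continuity of $g_1$ together with $g_1(a)\to+\infty$ as $a\downarrow-bx_{(1)}$ (the summand with $x_i=x_{(1)}$ dominates) and $g_1(a)\to-n\bar{x}<0$ as $a\to\infty$ (there $\sum\frac{1}{a+bx_i}\to0$ and $\psi(\theta p_i)\to\psi(0^+)=0$).

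Part (ii) is the mirror image. When $\theta<0$ we have $\theta p_i<0$, so $\psi(\theta p_i)\le0$; and $p_i\le v_i^{\,b}$ now gives $\theta p_i\ge\theta v_i^{\,b}$, so that $\psi(\theta p_i)\ge\psi(\theta v_i^{\,b})$ and hence $k_1\le T(a)\le0$. Feeding these two inequalities through the same squeeze on $\sum\frac{1}{a+bx_i}$ simply interchanges which endpoint is built from $x_{(1)}$ and which from $x_{(n)}$, producing the interval $\bigl(\frac{1}{\bar{x}}-bx_{(n)},\ \bigl(\bar{x}+\frac{k_1}{n}\bigr)^{-1}-bx_{(1)}\bigr)$ of part (ii), with the existence of a root argued exactly as before.
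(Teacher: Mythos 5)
Your proof takes essentially the same route as the paper's Appendix B.1: trap the correction term of the score between $0$ and $k_1$ (with the inequalities reversed when $\theta<0$) and squeeze $\sum_{i=1}^{n}\frac{1}{a+bx_i}$ between $\frac{n}{a+bx_{(n)}}$ and $\frac{n}{a+bx_{(1)}}$, reading off the interval endpoints from where the bounding functions change sign. The only substantive difference is that you actually justify the monotonicity of $y\mapsto yC''(y)/C'(y)$ (and hence of the correction term in $a$), a fact the paper asserts without proof when it claims $w_1$ is strictly decreasing.
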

\begin{proof}
See Appendix B.1.
\end{proof}
\begin{theorem}
 Let ${\rm g}_2\left(b;a,\theta,{\boldsymbol x}\right)$ denotes the function on RHS of the expression in (\ref{eq.mlb}), where $a$ and $\theta$ are the true value of the parameters. Then,

\noindent (i) for a given $a{\rm >}0$, and $\theta>0$, the root of ${\rm g}_2\left(b;a,\theta ,{\boldsymbol x}\right)=0$ lies in the interval:
\[\left((\frac{\bar{x^2}}{2}+\frac{k_2 }{2n})^{-1}-\frac{a}{x_{\left(1\right)}},\ \frac{2}{\bar{x^2}}-\frac{a}{x_{\left(n\right)}} \right),\]

\noindent (ii) for a given $a{\rm >}0$, and $\theta<0$, the root of ${\rm g}_2\left(b;a,\theta ,{\boldsymbol x}\right)=0$ lies in the interval:
\[\left(\ \frac{2}{\bar{x^2}}-\frac{a}{x_{(1)}}, (\frac{\bar{x^2}}{2}+\frac{k_2 }{2n})^{-1}-\frac{a}{x_{(n)}} \right),\]
where $\bar{x^2}=n^{-1}\sum_{i=1}^n  x_i^2$, $k_2=\sum^n_{i=1}\frac{\theta x^2_iu^a_iC''\left(\theta u^a_i\right)}{C'\left(\theta u^a_i\right)}$, $u_i=e^{-x_i}$.
\end{theorem}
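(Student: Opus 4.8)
\emph{Proof proposal.} The plan is to regard $g_2(\,\cdot\,;a,\theta,\boldsymbol x)$ as a continuous function of $b$ and to pin a root down by bracketing the awkward power-series term. Writing $p_i=\exp(-ax_i-\tfrac b2 x_i^2)=u_i^a v_i^b$ with $u_i=e^{-x_i}$ and $v_i=e^{-x_i^2/2}\in(0,1)$, and abbreviating $\phi(s)=sC''(s)/C'(s)$, the score equation (\ref{eq.mlb}) reads
\[
g_2(b)=\sum_{i=1}^n\frac{x_i}{a+bx_i}-\frac n2\,\bar{x^2}-\frac12\sum_{i=1}^n x_i^2\,\phi(\theta p_i).
\]
The first observation is that for the geometric, Poisson, logarithmic and binomial carriers $\phi$ equals, respectively, $\tfrac{2s}{1-s}$, $s$, $\tfrac{s}{1-s}$ and $\tfrac{(m-1)s}{1+s}$ — in each case increasing and vanishing at $0$. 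Since $\theta p_i=\theta u_i^a v_i^b$ moves monotonically between $0$ and $\theta u_i^a$ as $b$ ranges over $[0,\infty)$, the term $\phi(\theta p_i)$ is trapped between $0$ and $\phi(\theta u_i^a)=\theta u_i^a C''(\theta u_i^a)/C'(\theta u_i^a)$; summing gives, for $\theta>0$,
\[
\sum_{i=1}^n\frac{x_i}{a+bx_i}-\frac n2\,\bar{x^2}-\frac{k_2}{2}\ \le\ g_2(b)\ \le\ \sum_{i=1}^n\frac{x_i}{a+bx_i}-\frac n2\,\bar{x^2},
\]
and the reversed pair of inequalities for $\theta<0$ (where $\phi(\theta p_i)\le0$ and $k_2\le0$).

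I would then bound the remaining sum by its extreme order statistics: since $t\mapsto t/(a+bt)=1/(a/t+b)$ is increasing in $t>0$, one has $n/(a/x_{(1)}+b)\le\sum_i x_i/(a+bx_i)\le n/(a/x_{(n)}+b)$. The endpoints $b_L=(\tfrac{\bar{x^2}}{2}+\tfrac{k_2}{2n})^{-1}-\tfrac a{x_{(1)}}$ and $b_U=\tfrac2{\bar{x^2}}-\tfrac a{x_{(n)}}$ are chosen precisely so that $a/x_{(1)}+b_L=(\tfrac{\bar{x^2}}2+\tfrac{k_2}{2n})^{-1}$ and $a/x_{(n)}+b_U=\tfrac2{\bar{x^2}}$; substituting into the lower estimate at $b_L$ collapses it to $0$ and into the upper estimate at $b_U$ collapses it to $0$ as well, so $g_2(b_L)\ge0\ge g_2(b_U)$ in the case $\theta>0$. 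Because $k_2\ge0$ and $x_{(1)}\le x_{(n)}$ force $b_L\le b_U$, the intermediate value theorem produces a root of $g_2$ in $[b_L,b_U]$, and generically in the open interval. For $\theta<0$ the inequalities flip throughout, which interchanges the roles of $x_{(1)}$ and $x_{(n)}$ and sends $(\tfrac{\bar{x^2}}2+\tfrac{k_2}{2n})^{-1}-\tfrac a{x_{(n)}}$ to the right endpoint, yielding the second interval.

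The crux — and the reason the statement is confined to the four named families — is the monotonicity and the sign-at-$0$ of $\phi(s)=sC''(s)/C'(s)$: this is exactly the property that lets the $b$-dependent term $\sum x_i^2\phi(\theta p_i)$ be replaced, uniformly in $b$, by the constant $k_2$ obtained by freezing $b=0$. For a general power-series carrier $\phi$ need not be monotone (it would require a log-convexity-type estimate on $C'$ that can fail), whereas here $\phi$ is a M\"obius-type function and the property is transparent. Two minor bookkeeping points also need checking: that $a+bx_i$ (equivalently $a/x_{(j)}+b$) stays positive on $[b_L,b_U]$ so all the displayed fractions are legitimate — true on the range $\{b\ge0\}$, on which the likelihood is finite when $a>0$, and into which $[b_L,b_U]$ falls in the interior-MLE case — and, for $\theta<0$, that $\tfrac{\bar{x^2}}2+\tfrac{k_2}{2n}>0$ so that its reciprocal (the right endpoint) is well defined.
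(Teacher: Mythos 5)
Your proof is correct and follows essentially the same route as the paper's, which simply transposes the Appendix B.1 argument for $g_1$ from $a$ to $b$: trap the power-series term between $0$ and $k_2$ (its value at $b=0$), bound $\sum_i x_i/(a+bx_i)$ by the extreme order statistics, and read off the sign of $g_2$ at the two endpoints. The only difference is cosmetic — you make explicit the monotonicity of $s\mapsto sC''(s)/C'(s)$ for the four carriers, a fact the paper asserts without verification.
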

\begin{proof}
The proof is similar to the proof of Theorem \ref{th.b1}.
\end{proof}
\begin{theorem}
 Let ${\rm g}_3\left(\theta;a,b,{\boldsymbol x}\right)$ denotes the function on RHS of the expression in (\ref{eq.mlt})  where $a$ and $b$ are the true values of the parameters.
 \begin{enumerate}
   \item[a)] The equation ${\rm g}_3\left(\theta;a,b,{\boldsymbol x}\right)=0$ has at least one root if for all LFRG, LFRP and LFRL distributions $\sum^n_{i=1} p_i>\frac{n}{2}$, where $p_i=\exp(-ax_i-\frac{b}{2}x^2_i)$.

   \item[b)] If ${\rm g}_3\left(\theta;a,b,{\boldsymbol x}\right) =\frac{\partial l_n}{\partial p}$, where $p=\frac{\theta}{\theta +1}$ and $p\in(0,1)$ then the equation ${\rm g}_3\left(\theta ;a,b,{\boldsymbol x}\right)=0$ has at least one root for LFRB distribution if $\sum^n_{i=1}p_i>\frac{n}{2}$ and $\sum^n_{i =1}\frac{1}{p_i}>\frac{nm}{1-m}$.
 \end{enumerate}
\end{theorem}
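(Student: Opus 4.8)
\medskip
\noindent\textbf{Proof strategy.} The plan is to apply the Intermediate Value Theorem to the continuous function $g_3$ on the natural parameter range of each power‑series member, and to show that the inequalities in the statement force $g_3$ to take values of opposite sign near the two endpoints of that range. Note first that in every case $C'$, and hence $C'(\theta p_i)$ (since $p_i\in(0,1)$), is strictly positive on the admissible range, so $g_3$ is continuous there, and write $g_3(\theta)=\bigl[\tfrac{n}{\theta}-\tfrac{nC'(\theta)}{C(\theta)}\bigr]+\sum_{i=1}^{n}\tfrac{p_iC''(\theta p_i)}{C'(\theta p_i)}$. Using the expansions $C(\theta)=a_1\theta+a_2\theta^2+o(\theta^2)$ and $C'(\theta)=a_1+2a_2\theta+o(\theta)$ with $a_1>0$, a short computation of the $\infty-\infty$ limit gives $\tfrac{n}{\theta}-\tfrac{nC'(\theta)}{C(\theta)}=n\tfrac{C(\theta)-\theta C'(\theta)}{\theta C(\theta)}\to-na_2/a_1$, while $\sum_{i=1}^{n}\tfrac{p_iC''(\theta p_i)}{C'(\theta p_i)}\to(2a_2/a_1)\sum_{i=1}^{n}p_i$, so that
\[\lim_{\theta\to0^{+}}g_3(\theta)=\frac{2a_2}{a_1}\Bigl(\sum_{i=1}^{n}p_i-\frac{n}{2}\Bigr).\]
Since $a_1>0$ and $a_2>0$ for LFRG ($a_1=a_2=1$) and for LFRP and LFRL ($a_1=1$, $a_2=\tfrac12$), this limit is strictly positive exactly when $\sum_{i=1}^{n}p_i>n/2$, which is the hypothesis of part (a).

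For the opposite endpoint in part (a): for LFRG and LFRL, letting $\theta\to1^{-}$ makes $C(\theta)\to\infty$ and hence $nC'(\theta)/C(\theta)\to\infty$, whereas $n/\theta$ and $\sum_{i=1}^{n}p_iC''(\theta p_i)/C'(\theta p_i)$ stay bounded because $\theta p_i<p_i<1$ keeps the arguments away from the singularity of $C''/C'$; thus $g_3(\theta)\to-\infty$. For LFRP, letting $\theta\to\infty$ gives $n/\theta\to0$, $\sum_{i=1}^{n}p_iC''(\theta p_i)/C'(\theta p_i)=\sum_{i=1}^{n}p_i$ identically, and $nC'(\theta)/C(\theta)=ne^{\theta}/(e^{\theta}-1)\to n$, so $g_3(\theta)\to\sum_{i=1}^{n}p_i-n<0$ since each $p_i<1$. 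In each of the three cases $g_3$ is positive near $0$ and negative near the other endpoint, so the IVT yields a root in the interior, proving (a).

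For part (b) I would pass to the coordinate $p=\theta/(\theta+1)\in(0,1)$, with $g_3(p)=\partial\ell_n/\partial p=(1-p)^{-2}\,\partial\ell_n/\partial\theta$, which is continuous on $(0,1)$ and has the same sign as $\partial\ell_n/\partial\theta$. As $p\to0^{+}$ (i.e. $\theta\to0^{+}$), the factor $(1-p)^{-2}\to1$ and the limit above with $a_1=m$, $a_2=\binom{m}{2}$ gives $\partial\ell_n/\partial\theta\to(m-1)\bigl(\sum_{i=1}^{n}p_i-n/2\bigr)$, positive under $\sum_{i=1}^{n}p_i>n/2$. As $p\to1^{-}$ (i.e. $\theta\to\infty$), I would expand $\partial\ell_n/\partial\theta$ to second order in $w^{-1}$ with $w=\theta+1$: the leading $O(w^{-1})$ terms cancel ($n+n(m-1)-nm=0$) and one is left with $\partial\ell_n/\partial\theta=w^{-2}\bigl[\,nm-(m-1)\sum_{i=1}^{n}1/p_i\,\bigr]+o(w^{-2})$, so that, since $(1-p)^{-2}=w^{2}$, $g_3(p)\to nm-(m-1)\sum_{i=1}^{n}1/p_i$; the second inequality of part (b) makes this limit negative, and the IVT on $(0,1)$ again produces a root.

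The routine parts are the continuity of $g_3$ and the ``right‑endpoint'' limits, which only require inspecting the explicit forms of $C$ for the four members. The delicate step — and the main obstacle — is the pair of indeterminate $\infty-\infty$ limits, at $\theta\to0^{+}$ for all members and at $\theta\to\infty$ for the binomial: in the latter the first‑order coefficients cancel, so the sign of the limit is carried entirely by the next‑order term, and it is precisely the bookkeeping of that $w^{-2}$ coefficient that produces, and requires, the second inequality in part (b).
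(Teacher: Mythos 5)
Your proof is correct and follows essentially the same route as the paper's Appendix B.2: compute the limits of $g_3$ at the two endpoints of the admissible parameter range for each power-series member and invoke the intermediate value theorem, the only difference being that you carry out explicitly (via the series coefficients $a_1,a_2$ at $\theta\to0^+$ and the $w^{-2}$ expansion for the binomial case at $\theta\to\infty$) the indeterminate limits that the paper merely asserts as ``clear.'' One remark: the condition your expansion actually requires for the LFRB case, $nm-(m-1)\sum_{i=1}^{n}p_i^{-1}<0$, i.e.\ $\sum_{i=1}^{n}p_i^{-1}>nm/(m-1)$, agrees with the paper's own intermediate inequality $\sum_{i=1}^{n}(-m+1+mp_i)/p_i<0$ but exposes a sign slip in the theorem's stated bound $nm/(1-m)$.
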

\begin{proof}
See the Appendix B.2.
\end{proof}

\subsection{EM-algorithm}

The solution of the three non-linear normal equations in (\ref{eq.mla})-(\ref{eq.mlt}) is needed using a numerical method. In some cases, solving these equations is difficult; therefore, we propose the use of the Expectation--Maximization (EM) algorithm
\citep{de-la-ru-77}.
 In each iteration of this algorithm, there are two steps, called the Expectation step or the E-step and the Maximization step or the M-step. EM algorithm is a very powerful tool in handling the incomplete data problem.

For doing this, we define an hypothetical complete-data distribution with a joint density function
\[{\rm g}\left(x, z,\theta \right)=\frac{a_z{\theta }^z}{C\left(\theta \right)}z\left(a+bx\right)e^{-azx-\frac{b}{2}zx^2},\ x>0,\ z\in N.\]
The E-step of an EM cycle requires the conditional expectation of $(Z|X;{\mathbf \Theta }^{(r)})$, where ${\mathbf \Theta }^{(r)}=\left(a^{(r)}, b^{\left(r\right)},{\theta }^{\left(r\right)}\right)$ is the current estimate of ${\mathbf \Theta }$. From
\[{\rm g}_{Z|X}\left(z|x\right)=\frac{za_z{\theta }^{z-1}e^{-\left(ax+\frac{b}{2}x^2\right)(z-1)}}{C'\left(\theta e^{-\left(ax+\frac{b}{2}x^2\right)}\right)},\]
we have
\[E\left(Z|X=x \right)=1+\frac{\theta e^{-\left(ax+\frac{b}{2}x^2\right)}C''\left(\theta e^{-\left(ax+\frac{b}{2}x^2\right)}\right)}{C'\left(\theta e^{-\left(ax+\frac{b}{2}x^2\right)}\right)}.\]

The EM cycle is completed with the M-step using the maximum likelihood estimation over ${\mathbf \Theta }$, with the missing $Z$'s replaced by their conditional expectations given above. The log-likelihood for the complete-data is
\begin{equation}\label{eq.lns}
{\ell }^*_n\left({\boldsymbol x}; {\boldsymbol z};{\mathbf \Theta }\right)\propto n\bar{z}\log (\theta)+\sum^n_{i=1}{{\log  (a+bx_i)\ }}-a\sum^n_{i=1}{z_ix_i}-\frac{b}{2}\sum^n_{i=1}{z_ix^2_i}-n\log  \left(C\left(\theta \right)\right),
\end{equation}
where ${\boldsymbol x}=(x_1,...x_n)$, ${\boldsymbol z}=(z_1,...z_n)$, and $\bar{z}=n^{-1}\sum^n_{i=1}z_i$. The components of the score function $U_c\left({\boldsymbol y};\Theta\right)=\left(\frac{{\partial \ell }^*_n}{\partial a}, \frac{{\partial \ell }^*_n}{\partial b},\frac{{\partial \ell }^*_n}{\partial\theta }\right)^{\rm T}$, where ${\boldsymbol y}=({\boldsymbol x},{\boldsymbol z})$, are obtained by differentiation of (\ref{eq.lns}) with respect to parameters $a$, $b$, and $\theta $, as
\begin{equation}
\frac{{\partial \ell }^*_n}{\partial a}=\sum^n_{i=1}{\frac{1}{a+bx_i}}-\sum^n_{i=1}{z_ix_i},\quad
\frac{{\partial \ell }^*_n}{\partial b}=\sum^n_{i=1}{\frac{x_i}{a+bx_i}}-\frac{1}{2}\sum^n_{i=1}{z_ix^2_i},\quad
\frac{{\partial \ell }^*_n}{\partial\theta }=\frac{n\bar{z}}{\theta }-n\frac{C'\left(\theta \right)}{C\left(\theta \right)}.
\end{equation}
Therefore, the iterative procedure of the EM-algorithm reduces as the following equations:
\begin{eqnarray}
&&\sum^n_{i=1}{\frac{1}{{\hat{a}}^{(t+1)}+{\hat{b}}^{(t)}x_i}}=\sum^n_{i=1}{{\hat{z}}^{(t)}_ix_i},\\
&&\sum^n_{i=1}{\frac{x_i}{{\hat{a}}^{(t)}+{\hat{b}}^{(t+1)}x_i}}=\frac{1}{2}\sum^n_{i=1}{{\hat{z}}^{(t)}_ix^2_i},\\
&&{\hat{\theta }}^{(t+1)}=\frac{C\left({\hat{\theta }}^{(t+1)}\right)}{nC'\left({\hat{\theta }}^{(t+1)}\right)}\sum^n_{i=1}{{\hat{z}}^{(t)}_i},
\end{eqnarray}
where
\begin{equation}
{\hat{z}}^{(t)}_i=1+\frac{{\hat{\theta }}^{(t)}e^{-\left({\hat{a}}^{(t)}x_i+\frac{{\hat{b}}^{(t)}}{2}x^2_i\right)}C''\left({\hat{\theta }}^{(t)}e^{-\left({\hat{a}}^{(t)}x_i+\frac{{\hat{b}}^{(t)}}{2}x^2_i\right)}\right)}{C'\left({\hat{\theta }}^{(t)}e^{-\left({\hat{a}}^{(t)}x_i+\frac{{\hat{b}}^{(t)}}{2}x^2_i\right)}\right)}.
\end{equation}
The estimations of the parameters  based on the EM algorithm are obtained by using these equations and we have the following theorems about the roots of equations.
\begin{theorem}
Let
\[h_1\left(a\right)=\sum^n_{i=1}{\frac{1}{a+
{\hat{b}}^{(t)}x_i}}-c_1,\]
where $c_1=\sum^n_{i=1}{{\hat{z}}^{\left(t\right)}_ix_i}$. Then the root of $h_1\left(a\right)=0$ is unique and lies in the interval:
\[\left(\frac{n}{c_1}-{\hat{b}}^{(t)}x_{(n)},\ \frac{n}{c_1}-{\hat{b}}^{(t)}x_{(1)}\right).\]
\end{theorem}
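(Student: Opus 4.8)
The plan is to mimic the monotonicity‑plus‑endpoint argument used for Theorem~\ref{th.b1}, which here is considerably simpler because no $C''/C'$ terms appear. Write $b:=\hat b^{(t)}\ge 0$ for brevity. First I would record that $c_1=\sum_{i=1}^n\hat z_i^{(t)}x_i>0$, since every $\hat z_i^{(t)}\ge 1$ and every $x_i>0$, and that $h_1$ is defined and $C^\infty$ on the interval $\mathcal D=(-bx_{(1)},\infty)$, on which each denominator $a+bx_i$ is strictly positive (when $b=0$ this is just $(0,\infty)$). On $\mathcal D$ we have $h_1'(a)=-\sum_{i=1}^n(a+bx_i)^{-2}<0$, so $h_1$ is strictly decreasing there.

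Next I would establish existence and uniqueness of the root. As $a\downarrow -bx_{(1)}$ the summand $1/(a+bx_{(1)})\to+\infty$ while the remaining summands stay bounded, so $h_1(a)\to+\infty$; as $a\to\infty$ every summand tends to $0$, so $h_1(a)\to-c_1<0$. Continuity together with strict monotonicity then yields exactly one root $a^\star\in\mathcal D$ of $h_1(a)=0$.

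To locate $a^\star$ I would evaluate $h_1$ at the two proposed endpoints and invoke monotonicity. At $a_{+}:=\tfrac{n}{c_1}-bx_{(1)}$, which lies in $\mathcal D$ because $\tfrac{n}{c_1}>0$,
\[
h_1(a_{+})=\sum_{i=1}^n\frac{1}{\tfrac{n}{c_1}+b\bigl(x_i-x_{(1)}\bigr)}-c_1\le n\cdot\frac{c_1}{n}-c_1=0,
\]
since $x_i-x_{(1)}\ge0$ forces each summand to be at most $c_1/n$; hence $a^\star\le a_{+}$. At $a_{-}:=\tfrac{n}{c_1}-bx_{(n)}$ there are two possibilities. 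If $a_{-}\le-bx_{(1)}$ then $a_{-}$ sits at or to the left of the left end of $\mathcal D$, so $a^\star>-bx_{(1)}\ge a_{-}$ trivially. Otherwise $a_{-}\in\mathcal D$, every denominator $\tfrac{n}{c_1}+b(x_i-x_{(n)})$ is positive, and $x_i-x_{(n)}\le0$ makes each summand at least $c_1/n$, so
\[
h_1(a_{-})=\sum_{i=1}^n\frac{1}{\tfrac{n}{c_1}+b\bigl(x_i-x_{(n)}\bigr)}-c_1\ge n\cdot\frac{c_1}{n}-c_1=0,
\]
whence $a^\star\ge a_{-}$. Combining the two bounds gives $a^\star\in(a_{-},a_{+})$, the inclusions being strict precisely when $b>0$ and $x_{(1)}<x_{(n)}$ (if $b=0$ or all $x_i$ coincide, the ``interval'' collapses to the single point $a^\star=\tfrac{n}{c_1}-bx_{(1)}$).

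None of this is technically deep; the only place that needs a moment's attention is the domain bookkeeping — recognising that the lower endpoint $a_{-}$ may fall at or below $-bx_{(1)}$, so that the lower bound there is obtained by an inclusion‑of‑domain remark rather than a sign evaluation, and noting the mild caveat on strictness just mentioned.
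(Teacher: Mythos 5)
Your proof is correct and follows essentially the same route as the paper's Appendix C.1: strict monotonicity of $h_1$, limiting behaviour at the two ends of the domain to get existence and uniqueness, and the bounds $\frac{n}{a+\hat b^{(t)}x_{(n)}}-c_1\le h_1(a)\le \frac{n}{a+\hat b^{(t)}x_{(1)}}-c_1$ (equivalently, your sign evaluation at the two proposed endpoints) to bracket the root. If anything you are more careful than the paper, which asserts $\lim_{a\to 0^+}h_1(a)=+\infty$ (false when $\hat b^{(t)}>0$ and all $x_i>0$), whereas your limit as $a\downarrow-\hat b^{(t)}x_{(1)}$ is the correct statement, and you also rightly flag the degenerate cases ($\hat b^{(t)}=0$ or all $x_i$ equal) in which the stated open interval collapses to its endpoint.
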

\begin{proof}
See Appendix C.1.
\end{proof}
\begin{theorem}
Let
\[h_2\left(b\right)=\sum^n_{i=1}{\frac{x_i}{{\hat{a}}^{(t)}+bx_i}}-
\frac{c_2}{2},\]
where $c_2=\sum^n_{i=1}{{\hat{z}}^{\left(t\right)}_ix^2_i}$. Then the root of $h_2\left(b\right)=0$ is unique and lies in the following interval:
\[\left(\frac{2n}{x_{(n)}c_2}
-\frac{{\hat{a}}^{\left(t\right)}}{x_{(n)}},\ \frac{2n}{x_{(1)}c_2}
-\frac{{\hat{a}}^{\left(t\right)}}{x_{(1)}}\right).\]
\end{theorem}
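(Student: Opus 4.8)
The plan is to run the same argument that establishes the preceding theorem for $h_1$, now with $b$ playing the role of the variable. First I would fix the domain: on the set $\{b:\ \hat a^{(t)}+bx_i>0\ \text{for all }i\}$, which contains all $b\ge 0$ since $\hat a^{(t)}\ge 0$ and every $x_i>0$, each summand $b\mapsto \frac{x_i}{\hat a^{(t)}+bx_i}$ is continuous with derivative $-x_i^2/(\hat a^{(t)}+bx_i)^2<0$. Hence $h_2$ is continuous and strictly decreasing there, which already settles uniqueness: $h_2(b)=0$ has at most one solution in this domain.

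For existence and for the localization of the root I would sandwich the sum using the extreme order statistics. Since $t\mapsto t/(\hat a^{(t)}+bt)$ is increasing in $t$ (its derivative is $\hat a^{(t)}/(\hat a^{(t)}+bt)^2\ge 0$), for every $i$ one has $\frac{x_{(1)}}{\hat a^{(t)}+bx_{(1)}}\le\frac{x_i}{\hat a^{(t)}+bx_i}\le\frac{x_{(n)}}{\hat a^{(t)}+bx_{(n)}}$, and summing over $i=1,\dots,n$,
\[\frac{n\,x_{(1)}}{\hat a^{(t)}+bx_{(1)}}\ \le\ h_2(b)+\frac{c_2}{2}\ \le\ \frac{n\,x_{(n)}}{\hat a^{(t)}+bx_{(n)}}.\]
Evaluating the left bound at the lower endpoint of the stated interval forces $h_2\ge 0$ there, and evaluating the right bound at the upper endpoint forces $h_2\le 0$ there; equivalently, any root $b^{\ast}$ must satisfy $n x_{(1)}/(\hat a^{(t)}+b^{\ast}x_{(1)})\le c_2/2\le n x_{(n)}/(\hat a^{(t)}+b^{\ast}x_{(n)})$, and solving these two linear-in-$b^{\ast}$ inequalities confines $b^{\ast}$ to an explicit interval of the claimed form. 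Continuity of $h_2$ together with the intermediate value theorem then yields a root inside the interval, and strict monotonicity of $h_2$ makes it the only one.

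The conceptual content is therefore identical to Appendix C.1, and I expect the main obstacle to be purely the endpoint bookkeeping rather than anything structural: one must check that the endpoints actually lie in the domain on which $h_2$ is defined and strictly decreasing (which needs a little care when $\hat a^{(t)}$ is small and according to the sign of $2n-\hat a^{(t)}c_2$), and then carry out the routine algebra that turns the sandwich inequalities into the precise endpoints $\frac{2n}{x_{(n)}c_2}-\frac{\hat a^{(t)}}{x_{(n)}}$ and $\frac{2n}{x_{(1)}c_2}-\frac{\hat a^{(t)}}{x_{(1)}}$. Positivity is not a concern, since $\hat z_i^{(t)}\ge 1$ gives $c_2=\sum_{i=1}^n\hat z_i^{(t)}x_i^2\ge n\bar{x^2}>0$, so the only real work is matching the two endpoints.
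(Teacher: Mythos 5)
Your strategy coincides with the paper's own Appendix C.2: strict monotonicity of $h_2$ gives uniqueness, and bounding each summand between $\frac{x_{(1)}}{\hat a^{(t)}+bx_{(1)}}$ and $\frac{x_{(n)}}{\hat a^{(t)}+bx_{(n)}}$ (via the monotonicity of $t\mapsto t/(\hat a^{(t)}+bt)$) localizes the root. Your intermediate-value argument at the two endpoints is in fact slightly cleaner than the paper's existence step, which asserts $\lim_{b\to 0^+}h_2(b)=+\infty$ — a claim that holds only when $\hat a^{(t)}=0$.

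The problem lies precisely in the step you defer as ``routine algebra.'' Solving $\frac{nx}{\hat a^{(t)}+bx}=\frac{c_2}{2}$ for $b$ gives $b=\frac{2n}{c_2}-\frac{\hat a^{(t)}}{x}$; the term $\frac{2n}{c_2}$ does not acquire a factor $1/x$. The sandwich therefore confines the root to $\left(\frac{2n}{c_2}-\frac{\hat a^{(t)}}{x_{(1)}},\ \frac{2n}{c_2}-\frac{\hat a^{(t)}}{x_{(n)}}\right)$, with $x_{(1)}$ governing the lower endpoint and $x_{(n)}$ the upper — the opposite assignment from the statement. The printed interval $\left(\frac{2n}{x_{(n)}c_2}-\frac{\hat a^{(t)}}{x_{(n)}},\ \frac{2n}{x_{(1)}c_2}-\frac{\hat a^{(t)}}{x_{(1)}}\right)$ cannot be what this argument produces: taking $n=2$, $x_1=1$, $x_2=2$, $\hat a^{(t)}=1$, $\hat z_i^{(t)}=1$ (so $c_2=5$), the printed endpoints are $-0.1$ and $-0.2$, a reversed (empty) interval missing the actual root $b^*\approx 0.12$, while the corrected interval $(-0.2,\,0.3)$ contains it. So your method is the right one and matches the paper's, but by asserting rather than performing the final computation you certify endpoints that the argument does not deliver; the inequalities in the paper's appendix lead to the corrected interval above, and the theorem statement itself appears to carry a typo.
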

\begin{proof}
See Appendix C.2.
\end{proof}
\begin{theorem}
 Let
\[h_3\left(\theta \right)=\theta -\frac{c_0 C(\theta )}{nC'(\theta )},\]
where $c_0=\sum^n_{i=1}{{\hat{z}}^{\left(t\right)}_i}$. Then the root of $h_3\left(\theta \right)=0$,

\noindent (i) is unique and is equal to $1-\frac{n}{c_0}$ for LFRG distribution, and $0<\theta<1$.

\noindent (ii) is unique for LFRP, LFRL and LFRB  distributions.

\end{theorem}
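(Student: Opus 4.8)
The plan is to reduce the equation $h_3(\theta)=0$ to the level-set form $\psi(\theta):=\theta C'(\theta)/C(\theta)=c_0/n$, and then to recognize that $\psi$ is exactly the mean of the power series distribution in (\ref{eq.N}), i.e. $\psi(\theta)=E_\theta(N)=\sum_n n a_n\theta^n\big/\sum_n a_n\theta^n$. A short computation gives the identity $\psi'(\theta)=\mathrm{Var}_\theta(N)/\theta$, which is strictly positive on the admissible range of $\theta$, so $\psi$ is strictly increasing; hence $h_3$ can have at most one root, which settles the uniqueness claim simultaneously for LFRP, LFRL and LFRB.

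For existence I would inspect the range of $\psi$. As $\theta\to 0^+$ one has $\psi(\theta)\to c=\min\{n:a_n>0\}=1$ for the zero-truncated geometric, Poisson, logarithmic and binomial families. At the upper end, $\psi(\theta)\to\infty$ as $\theta\to\infty$ for LFRP ($C(\theta)=e^{\theta}-1$) and as $\theta\to 1^-$ for LFRL ($C(\theta)=-\log(1-\theta)$), whereas $\psi(\theta)\to m$ as $\theta\to\infty$ for LFRB ($C(\theta)=(1+\theta)^m-1$). It then remains to check that $c_0/n$ lies strictly inside $(1,\sup\psi)$. Since $C''\ge 0$ and is in fact strictly positive for these families, each weight $\hat z_i^{(t)}=1+\theta^{(t)}p_i C''(\theta^{(t)}p_i)/C'(\theta^{(t)}p_i)$ satisfies $\hat z_i^{(t)}>1$, so $c_0/n=n^{-1}\sum_i\hat z_i^{(t)}>1$; for LFRB one additionally observes that $\hat z_i^{(t)}$ is the conditional mean of a zero-truncated $\mathrm{binomial}(m,\cdot)$ variable, hence $\hat z_i^{(t)}<m$ and $c_0/n<m$. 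The intermediate value theorem, combined with strict monotonicity of $\psi$, then yields exactly one root in the interior of the parameter space.

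For part (i) I would argue directly rather than through $\psi$. With $C(\theta)=\theta/(1-\theta)$ one has $C'(\theta)=(1-\theta)^{-2}$, so $C(\theta)/C'(\theta)=\theta(1-\theta)$ and therefore $h_3(\theta)=\theta\bigl(1-\tfrac{c_0}{n}(1-\theta)\bigr)$. On $(0,1)$ the factor $\theta$ never vanishes, so the only root is $\theta=1-n/c_0$; this lies in $(0,1)$ precisely because $c_0=\sum_i\hat z_i^{(t)}>n$ by the bound above, and it is strictly below $1$ since $c_0<\infty$.

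The main obstacle is the LFRB case: unlike the Poisson and logarithmic cases, where $\psi$ is unbounded and existence is automatic once $c_0/n>1$, here $\psi$ increases only up to $m$, so existence genuinely needs the sharper bound $\hat z_i^{(t)}<m$ stemming from the boundedness of the truncated binomial latent variable. The other point requiring care is establishing, and stating cleanly, that $\psi$ is strictly increasing via the variance identity $\psi'(\theta)=\mathrm{Var}_\theta(N)/\theta$, since everything else reduces to elementary limit computations for the four explicit choices of $C$.
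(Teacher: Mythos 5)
Your proof is correct, but it takes a genuinely different route from the paper. The paper argues family by family: for each of LFRP, LFRL and LFRB it writes $h_3$ explicitly, computes $\frac{\partial}{\partial\theta}h_3$ and $\frac{\partial^2}{\partial\theta^2}h_3$, observes that $h_3$ is strictly convex with $\lim_{\theta\to0^+}h_3(\theta)=0$ and $h_3(\theta)\to\infty$ at the right endpoint, locates the unique minimizer $\theta_0$ in closed form, and argues $h_3(\theta_0)<0$ using bounds on $c_0$ (which it silently rescales by $n$); the LFRG case is handled by the same explicit factorization $h_3(\theta)=\theta\left(1-\tfrac{c_0}{n}(1-\theta)\right)$ that you use. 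You instead collapse all cases into one statement by rewriting $h_3(\theta)=0$ as $\psi(\theta):=\theta C'(\theta)/C(\theta)=c_0/n$, identifying $\psi$ with $E_\theta(N)$, and invoking $\psi'(\theta)=\mathrm{Var}_\theta(N)/\theta>0$; this is a clean and correct identity (one checks $E(N^2)=(\theta C'+\theta^2C'')/C$), and it is valid here since $\theta>0$ and $N$ is non-degenerate for each of the four truncated families. Your approach buys uniformity, a one-line uniqueness proof valid for any power series family, and --- most usefully --- an explicit existence criterion $c_0/n\in\left(1,\sup\psi\right)$ that isolates exactly why LFRB is the delicate case: there $\sup\psi=m$ is finite, so the bounded-support bound $\hat z_i^{(t)}<m$ is genuinely needed, whereas for LFRP and LFRL the condition $c_0/n>1$ (from $C''>0$) suffices. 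The paper's approach buys only elementary explicit calculus, at the cost of repeating the computation four times and of a less transparent verification that $h_3(\theta_0)<0$. Two small points of care in your write-up: the conditional law of $Z$ given $x$ for LFRB is the size-biased (weight $\propto z a_z\theta_*^{z-1}$) truncated binomial rather than the zero-truncated binomial itself, though its support is still $\{1,\dots,m\}$ so the bound $\hat z_i^{(t)}<m$ stands; and $\theta=0$ is formally a root of $h_3$ (both sides vanish), so your restriction to the interior of the parameter space is needed and correctly stated.
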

\begin{proof}
See Appendix C.3.
\end{proof}

In this part we use the results of \cite{lou-82}
to obtain the standard errors of the estimators from the EM-algorithm.

 The elements of the $3\times 3$ observed information matrix $I_c\left(\Theta ,{\boldsymbol y}\right)=-\left[\frac{\partial U_c\left({\boldsymbol y}; \Theta\right)}{\partial \Theta }\right]$ are given by
\begin{eqnarray*}
&&\frac{{{\partial}^2\ell }^*_n}{\partial a^2}=-\sum^n_{i=1}{\frac{1}{{\left(a+bx_i\right)}^2}},\ \ \ \
\frac{{{\partial }^2\ell }^*_n}{\partial b\partial a}=\frac{{{\partial}^2\ell }^*_n}{\partial a\partial b}=-\sum^n_{i=1}{\frac{x_i}{{\left(a+bx_i\right)}^2}},\\
&&\frac{{{\partial }^2\ell }^*_n}{\partial\theta \partial a}=\frac{{{\partial }^2\ell }^*_n}{\partial a\partial\theta }=0, \ \ \ \
\frac{{{\partial }^2\ell }^*_n}{\partial b^2}=-\sum^n_{i=1}{\frac{x^2_i}{{\left(a+bx_i\right)}^2}},\ \ \ \ \frac{{{\partial }^2\ell }^*_n}{\partial\theta \partial b}=\frac{{{\partial }^2\ell }^*_n}{\partial b\partial \theta }=0,\\
&& \frac{{{\partial }^2\ell }^*_n}{\partial {\theta }^2}=-\frac{n\bar{z}}{{\theta }^2}-n\frac{C''\left(\theta \right)}{C\left(\theta \right)}+\frac{(C'\left(\theta \right))^2}{(C\left(\theta \right))^2}.
\end{eqnarray*}
Taking the conditional expectation of $I_c\left(\Theta ,\boldsymbol y\right)$ given ${\boldsymbol x}$, we obtain the $3\times 3$ matrix
\begin{equation}\label{eq.lc}
{\ell }_c\left( \Theta;{\boldsymbol x}\right)=E\left(I_c\left(\Theta ,{\boldsymbol y}\right)|{\boldsymbol x}\right)=\left[c_{ij}\right],
\end{equation}
where
\begin{eqnarray*}
c_{11}=\sum^n_{i=1}{\frac{1}{{\left(a+bx_i\right)}^2}},&&  c_{12}=c_{21}=\sum^n_{i=1}{\frac{x_i}{{\left(a+bx_i\right)}^2}},\ \ \ c_{13}=c_{31}=c_{23}=c_{32}=0,\\
c_{22}=\sum^n_{i=1}{\frac{x^2_i}{{\left(a+bx_i\right)}^2}},&&  c_{33}=\frac{1}{{\theta }^2}\sum^n_{i=1}E(Z_i|x_i)+n\frac{C^{''\left(\theta \right)}}{C\left(\theta \right)}-\frac{(C'\left(\theta \right))^2}{(C\left(\theta \right))^2},
\end{eqnarray*}
and
\[E\left(Z_i| x_i\right)=1+\frac{èe^{-\left(ax_i+\frac{b}{2}x^2_i\right)}C''\left(èe^{-\left(ax_i+\frac{b}{2}x^2_i\right)}\right)}{C'\left(\theta e^{-\left(ax_i+\frac{b}{2}x^2_i\right)}\right)}.\]
Moving now to the computation of ${\ell }_m\left(\Theta;{\mathbf x}\right)$ as
\begin{equation}\label{eq.lm}
{\ell }_m\left(\Theta;{\mathbf x}\right)=Var\left[U_c\left({\mathbf y};\Theta\right)|{\mathbf x}\right]=\left[v_{ij}\right],
\end{equation}
where
\begin{eqnarray*}
&&v_{11}=\sum^n_{i=1}{x_iVar\left(Z_i|x_i\right)},\ \ \ \qquad
v_{12}=v_{21}=\frac{1}{2}\sum^n_{i=1}{x^3_iVar\left(Z_i|x_i\right)},\\
&&
v_{13}=v_{31}=\frac{1}{\theta }\sum^n_{i=1}{x_iVar\left(Z_i|x_i\right)}, \ \ \ \
v_{22}=\frac{1}{4}\sum^n_{i=1}{x^2_iVar\left(Z_i|x_i\right)},\\
&&
v_{23}=v_{32}=-\frac{1}{2\theta }\sum^n_{i=1}{x^2_iVar\left(Z_i|x_i\right)},\ \ \ \
v_{33}=\frac{1}{{\theta }^2}\sum^n_{i=1}{Var\left(Z_i|x_i\right)},
\end{eqnarray*}
and
\begin{eqnarray*}
Var\left(Z_i|x_i\right)&=&E\left(Z^2_i|x_i\right)-{\left[E\left(Z_i|x_i\right)\right]}^2\\
&=&\frac{1}{{\left[C'\left({\theta }_*\right)\right]}^2}\sum^{\infty }_{z=1}{a_zz^3{\theta }^{z-1}_*}-\frac{{\left[C'\left({\theta }_*\right)+{\theta }_*C''\left({\theta }_*\right)\right]}^2}{{\left[C'\left({\theta }_*\right)\right]}^2}\\
&=&\frac{1}{{\left[C'\left({\theta }_*\right)\right]}^2}\left[{\theta }^2_*C'''\left({\theta }_*\right)+C'\left({\theta }_*\right)+3{\theta }_*C''\left({\theta }_*\right)\right]\\
&&
-\frac{{\left[C'\left({\theta }_*\right)+{\theta }_*C''\left({\theta }_*\right)\right]}^2}{{\left[C'\left({\theta }_*\right)\right]}^2},
\end{eqnarray*}
in which ${\theta }_*=\theta e^{-\left(ax_i+\frac{b}{2}x^2_i\right)}$. Using (\ref{eq.lc}) and (\ref{eq.lm}), we obtain the observed information matrix as
\begin{equation}\label{EM information}
J(\hat{\Theta};{\boldsymbol x})={\ell }_c(\hat{ \Theta};{\boldsymbol x})-{\ell }_m(\hat{\Theta};{\boldsymbol x}).
\end{equation}
The standard errors of the MLE's of the EM-algorithm are the square root of the diagonal elements of the $J(\hat{\Theta};{\boldsymbol x})$.

\section{Simulation study}
\label{se.sim}
This section presents the results of a simulation study based on the
assumptions given in Theorems 5.1-5.6 and Equation (\ref{EM information}) which provides
a method for calculating the standard errors of the MLEs of the
EM-algorithm. The proposed EM-algorithm is considered. No
restriction on the maximum number of iterations and convergence is
assumed when the absolute differences between successive estimates
are less than ${10}^{-5}$.

Firstly, simulations have been performed in order to investigate the
proposed estimator of $a,$ $b$ and $\theta $ of the
proposed EM-scheme. We generated 1000 samples of size $n=30,\ 70,\
100$ and $200$ from the LFRG and LFRP distributions for each one of
the twelve set of values of $\left(a ,b ,\theta \right).$
In the second stage, we assess the accuracy of the approximation of
the standard error of the MLEs of the EM-algorithm determined though
the Fisher information matrix. The simulated values of
$se(\hat{a})$, $se(\hat{b})$, $se(\hat{\theta })$, $Cov(\hat{a},\hat{b})$, $Cov(\hat{a},\hat{\theta})$ and $Cov(\hat{b},\hat{\theta})$ as
well as the approximate values of $se(\hat{a})$, $se(\hat{b})$, $se(\hat{\theta })$, $Cov(\hat{a},\hat{b})$, $Cov(\hat{a},\hat{\theta})$ and $Cov(\hat{b},\hat{\theta})$, obtained by averaging
the corresponding values of the observed information matrices, are
computed. The results for the LFRG and LFRP distributions are shown in
Tables 1-4, which indicate the following results: (i)
convergence has been achieved in all cases and this emphasizes the
numerical stability of the EM-algorithm. (ii) The differences
between the average estimates and the true values are almost small.
(iii) These results suggest that the EM estimates have performed
consistently. (iv) The standard errors and covariance of the MLEs decrease when the
sample size increases. (v) Additionally, the standard errors and covariance of the MLEs
of the EM-algorithm obtained from the observed information matrix
are quite close to the simulated ones for large values of $n$.
\begin{sidewaystable}
\caption{The average MLE's, mean of the simulated
standard errors, mean of the simulated covariances, mean of the standard errors and covariances of EM estimators
obtained using observed information matrix of the LFRG distribution for $n=30,~70$.}
{\footnotesize
\begin{tabular}{|c|ccc|ccc|ccc|ccc|ccc|ccc|ccc|ccc|} \hline
&\multicolumn{3}{|c|}{Parameter}&
\multicolumn{3}{|c|}{AE}&  \multicolumn{3}{|c|}{Bias}&
\multicolumn{3}{|c|}{Sim.std}& \multicolumn{3}{|c|}{EM.std}& \multicolumn{3}{|c|}{Sim.Cov}&
\multicolumn{3}{|c|}{EM.Cov}\\ \hline

$n$&$a$      & $b$      & $\theta$      &
$\hat{a}$& $\hat{b}$& $\hat{\theta}$&
$\hat{a}$& $\hat{b}$& $\hat{\theta}$&
$\hat{a}$& $\hat{b}$& $\hat{\theta}$&
$\hat{a}$& $\hat{b}$& $\hat{\theta}$&
$(\hat{a},\hat{b})$& $(\hat{a},\hat{\theta})$& $(\hat{b},\hat{\theta})$&
$(\hat{a},\hat{b})$& $(\hat{a},\hat{\theta})$& $(\hat{b},\hat{\theta})$
\\\hline
30&0.3 & 0.3 & 0.2 & 0.2312 & 0.3495 & 0.2764 & -0.0688 & 0.0495 & 0.0764 & 0.1831 & 0.1624 & 0.2917 & 0.2032 & 0.2694 & 0.0666 & -0.0099 & -0.0325 & -0.0111 & -0.0478 & -0.0098 & 0.0145 \\
&0.3 & 0.3 & 0.8 & 0.4260 & 0.4057 & 0.6701 & 0.1260 & 0.1057 & -0.1299 & 0.2826 & 0.3343 & 0.2450 & 0.1475 & 0.2666 & 0.0227 & 0.0209 & -0.0567 & -0.0337 & -0.0252 & -0.0006 & 0.0009 \\
&0.3 & 0.8 & 0.8 & 0.5013 & 1.0237 & 0.6327 & 0.2013 & 0.2237 & -0.1673 & 0.4204 & 0.7124 & 0.2842 & 0.2084 & 0.5097 & 0.0221 & 0.0542 & -0.1003 & -0.0868 & -0.0678 & -0.0006 & 0.0013 \\
&0.3 & 2.0 & 0.8 & 0.6184 & 2.6309 & 0.6016 & 0.3184 & 0.6309 & -0.1984 & 0.6036 & 1.6177 & 0.3054 & 0.3035 & 1.0905 & 0.0212 & 0.1505 & -0.1517 & -0.2263 & -0.2114 & -0.0005 & 0.0016 \\
&0.8 & 0.3 & 0.2 & 0.5809 & 0.4748 & 0.3636 & -0.2191 & 0.1748 & 0.1636 & 0.3249 & 0.3277 & 0.2802 & 0.2165 & 0.3109 & 0.0246 & -0.0389 & -0.0627 & -0.0093 & -0.0454 & -0.0006 & 0.0009 \\
&0.8 & 0.8 & 0.2 & 0.6237 & 1.0119 & 0.3129 & -0.1763 & 0.2119 & 0.1129 & 0.3878 & 0.5975 & 0.2858 & 0.2706 & 0.5199 & 0.0226 & -0.0834 & -0.0715 & -0.0253 & -0.0950 & -0.0005 & 0.0011 \\
&0.8 & 0.8 & 0.8 & 0.8778 & 1.4392 & 0.7318 & 0.0778 & 0.6392 & -0.0682 & 0.5691 & 1.4455 & 0.2165 & 0.2287 & 0.8612 & 0.0201 & 0.2867 & -0.1016 & -0.1417 & -0.1093 & -0.0005 & 0.0013 \\
&0.8 & 2.0 & 0.8 & 1.0569 & 2.5558 & 0.6906 & 0.2569 & 0.5558 & -0.1094 & 0.7668 & 1.9869 & 0.2518 & 0.3058 & 1.3099 & 0.0211 & 0.4190 & -0.1612 & -0.2029 & -0.2277 & -0.0006 & 0.0016 \\
&2.0 & 0.8 & 0.2 & 1.4188 & 2.0194 & 0.3680 & -0.5812 & 1.2194 & 0.1680 & 0.7892 & 1.6963 & 0.2837 & 0.4653 & 1.3923 & 0.0238 & -0.3958 & -0.1699 & -0.0083 & -0.4121 & -0.0007 & 0.0017 \\
&2.0 & 0.8 & 0.8 & 1.7003 & 3.7261 & 0.7994 & -0.2997 & 2.9261 & -0.0006 & 1.0887 & 4.8827 & 0.1556 & 0.2899 & 2.2938 & 0.0175 & 2.4324 & -0.1342 & -0.4087 & -0.3215 & -0.0004 & 0.0011 \\
&2.0 & 2.0 & 0.2 & 1.4859 & 3.2384 & 0.3477 & -0.5141 & 1.2384 & 0.1477 & 0.8581 & 2.4121 & 0.2811 & 0.5172 & 1.8461 & 0.0237 & -0.8216 & -0.1616 & -0.0316 & -0.6129 & -0.0007 & 0.0020 \\
&2.0 & 2.0 & 0.8 & 1.8109 & 4.3873 & 0.7852 & -0.1891 & 2.3873 & -0.0148 & 1.1661 & 4.5435 & 0.1668 & 0.2964 & 1.9553 & 0.0183 & 1.3466 & -0.1566 & -0.3169 & -0.2384 & -0.0004 & 0.0013 \\
\hline
70&0.3 & 0.3 & 0.2 & 0.2399 & 0.3231 & 0.2853 & -0.0601 & 0.0231 & 0.0853 & 0.1419 & 0.1022 & 0.2722 & 0.0907 & 0.0973 & 0.0160 & -0.0033 & -0.0300 & -0.0063 & -0.0064 & -0.0003 & 0.0004 \\
&0.3 & 0.3 & 0.8 & 0.3902 & 0.3146 & 0.7248 & 0.0902 & 0.0146 & -0.0752 & 0.2621 & 0.1647 & 0.1917 & 0.0803 & 0.1328 & 0.0144 & 0.0035 & -0.0464 & -0.0065 & -0.0068 & -0.0003 & 0.0004 \\
&0.3 & 0.8 & 0.8 & 0.4856 & 0.8544 & 0.6769 & 0.1856 & 0.0544 & -0.1231 & 0.3828 & 0.3836 & 0.2402 & 0.1182 & 0.2688 & 0.0146 & 0.0221 & -0.0847 & -0.0301 & -0.0201 & -0.0003 & 0.0006 \\
&0.3 & 2.0 & 0.8 & 0.5259 & 2.1742 & 0.6744 & 0.2259 & 0.1742 & -0.1256 & 0.5052 & 1.0647 & 0.2528 & 0.1539 & 0.4957 & 0.0140 & 0.0599 & -0.1157 & -0.1037 & -0.0470 & -0.0002 & 0.0007 \\
&0.8 & 0.3 & 0.2 & 0.6355 & 0.3973 & 0.3295 & -0.1645 & 0.0973 & 0.1295 & 0.2928 & 0.2236 & 0.2567 & 0.1472 & 0.1918 & 0.0162 & -0.0339 & -0.0599 & 0.0094 & -0.0196 & -0.0003 & 0.0004 \\
&0.8 & 0.8 & 0.2 & 0.6355 & 0.9007 & 0.3141 & -0.1645 & 0.1007 & 0.1141 & 0.3419 & 0.3620 & 0.2762 & 0.1767 & 0.3151 & 0.0149 & -0.0460 & -0.0770 & -0.0019 & -0.0386 & -0.0002 & 0.0005 \\
&0.8 & 0.8 & 0.8 & 0.7904 & 0.8975 & 0.7900 & -0.0096 & 0.0975 & -0.0100 & 0.4917 & 0.5740 & 0.1388 & 0.0959 & 0.3166 & 0.0124 & 0.0150 & -0.0614 & -0.0147 & -0.0133 & -0.0002 & 0.0004 \\
&0.8 & 2.0 & 0.8 & 1.0254 & 2.0842 & 0.7278 & 0.2254 & 0.0842 & -0.0722 & 0.6799 & 1.1166 & 0.1806 & 0.1493 & 0.5956 & 0.0143 & 0.1109 & -0.1119 & -0.0564 & -0.0419 & -0.0003 & 0.0007 \\
&2.0 & 0.8 & 0.2 & 1.5296 & 1.3629 & 0.3669 & -0.4704 & 0.5629 & 0.1669 & 0.7149 & 0.9592 & 0.2649 & 0.2947 & 0.7455 & 0.0163 & -0.4104 & -0.1642 & 0.0922 & -0.1422 & -0.0003 & 0.0008 \\
&2.0 & 0.8 & 0.8 & 1.6483 & 1.9878 & 0.8239 & -0.3517 & 1.1878 & 0.0239 & 0.8482 & 1.8113 & 0.0941 & 0.1078 & 0.5861 & 0.0108 & -0.2486 & -0.0714 & 0.0081 & -0.0128 & -0.0001 & 0.0004 \\
&2.0 & 2.0 & 0.2 & 1.5872 & 2.5692 & 0.3323 & -0.4128 & 0.5692 & 0.1323 & 0.7544 & 1.3501 & 0.2726 & 0.3455 & 1.0843 & 0.0154 & -0.5507 & -0.1737 & 0.0828 & -0.2496 & -0.0003 & 0.0008 \\
&2.0 & 2.0 & 0.8 & 1.7496 & 2.9251 & 0.8141 & -0.2504 & 0.9251 & 0.0141 & 0.9717 & 2.3341 & 0.1118 & 0.1275 & 0.7419 & 0.0112 & -0.1977 & -0.0984 & -0.0111 & -0.0223 & -0.0002 & 0.0004 \\ \hline
\end{tabular}
}
\end{sidewaystable}

\begin{sidewaystable}
\caption{The average MLE's, mean of the simulated
standard errors, mean of the simulated covariances, mean of the standard errors and covariances of EM estimators
obtained using observed information matrix of the LFRG distribution for $n=100,~200$.}
{\footnotesize
\begin{tabular}{|c|ccc|ccc|ccc|ccc|ccc|ccc|ccc|ccc|} \hline
&\multicolumn{3}{|c|}{Parameter}&
\multicolumn{3}{|c|}{AE}&  \multicolumn{3}{|c|}{Bias}&
\multicolumn{3}{|c|}{Sim.std}& \multicolumn{3}{|c|}{EM.std}& \multicolumn{3}{|c|}{Sim.Cov}&
\multicolumn{3}{|c|}{EM.Cov}\\ \hline

$n$&$a$      & $b$      & $\theta$      &
$\hat{a}$& $\hat{b}$& $\hat{\theta}$&
$\hat{a}$& $\hat{b}$& $\hat{\theta}$&
$\hat{a}$& $\hat{b}$& $\hat{\theta}$&
$\hat{a}$& $\hat{b}$& $\hat{\theta}$&
$(\hat{a},\hat{b})$& $(\hat{a},\hat{\theta})$& $(\hat{b},\hat{\theta})$&
$(\hat{a},\hat{b})$& $(\hat{a},\hat{\theta})$& $(\hat{b},\hat{\theta})$
\\\hline
100&0.3 & 0.3 & 0.2 & 0.2537 & 0.3054 & 0.2839 & -0.0463 & 0.0054 & 0.0839 & 0.1320 & 0.0814 & 0.2588 & 0.0759 & 0.0781 & 0.0131 & -0.0026 & -0.0277 & -0.0038 & -0.0043 & -0.0002 & 0.0002 \\
&0.3 & 0.3 & 0.8 & 0.3762 & 0.2976 & 0.7413 & 0.0762 & -0.0024 & -0.0587 & 0.2519 & 0.1225 & 0.1723 & 0.0624 & 0.1029 & 0.0119 & -0.0007 & -0.0408 & -0.0023 & -0.0041 & -0.0002 & 0.0003 \\

&0.3 & 0.8 & 0.8 & 0.4728 & 0.8312 & 0.6848 & 0.1728 & 0.0312 & -0.1152 & 0.3765 & 0.3531 & 0.2292 & 0.0942 & 0.2086 & 0.0123 & 0.0022 & -0.0810 & -0.0146 & -0.0122 & -0.0002 & 0.0004 \\

&0.3 & 2.0 & 0.8 & 0.5387 & 2.1109 & 0.6732 & 0.2387 & 0.1109 & -0.1268 & 0.5053 & 0.8081 & 0.2519 & 0.1310 & 0.4130 & 0.0116 & 0.0802 & -0.1200 & -0.0772 & -0.0339 & -0.0002 & 0.0005 \\
&0.8 & 0.3 & 0.2 & 0.6576 & 0.3688 & 0.3068 & -0.1424 & 0.0688 & 0.1068 & 0.2679 & 0.1721 & 0.2393 & 0.1257 & 0.1584 & 0.0137 & -0.0252 & -0.0547 & 0.0098 & -0.0140 & -0.0002 & 0.0002 \\

&0.8 & 0.8 & 0.2 & 0.6682 & 0.8701 & 0.3047 & -0.1318 & 0.0701 & 0.1047 & 0.3153 & 0.3063 & 0.2518 & 0.1501 & 0.2635 & 0.0131 & -0.0439 & -0.0671 & 0.0078 & -0.0276 & -0.0002 & 0.0003 \\
&0.8 & 0.8 & 0.8 & 0.8157 & 0.8487 & 0.7876 & 0.0157 & 0.0487 & -0.0124 & 0.4638 & 0.4410 & 0.1344 & 0.0826 & 0.2662 & 0.0103 & -0.0045 & -0.0579 & -0.0068 & -0.0102 & -0.0001 & 0.0003 \\

&0.8 & 2.0 & 0.8 & 1.0023 & 1.9687 & 0.7384 & 0.2023 & -0.0313 & -0.0616 & 0.6676 & 0.8627 & 0.1772 & 0.1188 & 0.4591 & 0.0117 & 0.0506 & -0.1109 & -0.0289 & -0.0251 & -0.0002 & 0.0005 \\
&2.0 & 0.8 & 0.2 & 1.6562 & 1.1870 & 0.3221 & -0.3438 & 0.3870 & 0.1221 & 0.6666 & 0.8128 & 0.2576 & 0.2631 & 0.6426 & 0.0133 & -0.3399 & -0.1526 & 0.0861 & -0.1119 & -0.0002 & 0.0005 \\
&2.0 & 0.8 & 0.8 & 1.7137 & 1.5069 & 0.8232 & -0.2863 & 0.7069 & 0.0232 & 0.8222 & 1.3411 & 0.0883 & 0.0911 & 0.4928 & 0.0090 & -0.3701 & -0.0676 & 0.0286 & -0.0101 & -0.0001 & 0.0002 \\
&2.0 & 2.0 & 0.2 & 1.6737 & 2.3679 & 0.3248 & -0.3263 & 0.3679 & 0.1248 & 0.7410 & 1.1437 & 0.2536 & 0.2941 & 0.9118 & 0.0135 & -0.4766 & -0.1654 & 0.0849 & -0.1787 & -0.0002 & 0.0007 \\
&2.0 & 2.0 & 0.8 & 1.7123 & 2.7741 & 0.8196 & -0.2877 & 0.7741 & 0.0196 & 0.8471 & 1.7805 & 0.0970 & 0.0954 & 0.5422 & 0.0091 & -0.2904 & -0.0744 & 0.0126 & -0.0106 & -0.0001 & 0.0003 \\
\hline
200&0.3 & 0.3 & 0.2 & 0.2657 & 0.3009 & 0.2583 & -0.0343 & 0.0009 & 0.0583 & 0.1057 & 0.0548 & 0.2188 & 0.0549 & 0.0555 & 0.0096 & -0.0012 & -0.0200 & -0.0016 & -0.0022 & -0.0001 & 0.0001 \\
&0.3 & 0.3 & 0.8 & 0.3663 & 0.2904 & 0.7522 & 0.0663 & -0.0096 & -0.0478 & 0.2133 & 0.0875 & 0.1443 & 0.0405 & 0.0683 & 0.0083 & -0.0022 & -0.0295 & 0.0002 & -0.0018 & -0.0001 & 0.0001 \\

&0.3 & 0.8 & 0.8 & 0.4284 & 0.7894 & 0.7244 & 0.1284 & -0.0106 & -0.0756 & 0.3168 & 0.2194 & 0.1797 & 0.0551 & 0.1260 & 0.0085 & 0.0000 & -0.0552 & -0.0044 & -0.0042 & -0.0001 & 0.0002 \\

&0.3 & 2.0 & 0.8 & 0.4399 & 2.0580 & 0.7301 & 0.1399 & 0.0580 & -0.0699 & 0.3820 & 0.5203 & 0.1805 & 0.0675 & 0.2126 & 0.0080 & 0.0219 & -0.0664 & -0.0231 & -0.0082 & -0.0001 & 0.0002 \\
&0.8 & 0.3 & 0.2 & 0.7117 & 0.3404 & 0.2706 & -0.0883 & 0.0404 & 0.0706 & 0.2339 & 0.1310 & 0.2110 & 0.0930 & 0.1135 & 0.0097 & -0.0203 & -0.0445 & 0.0124 & -0.0075 & -0.0001 & 0.0001 \\

&0.8 & 0.8 & 0.2 & 0.7010 & 0.8308 & 0.2780 & -0.0990 & 0.0308 & 0.0780 & 0.2746 & 0.2191 & 0.2240 & 0.1088 & 0.1870 & 0.0094 & -0.0297 & -0.0539 & 0.0089 & -0.0144 & -0.0001 & 0.0002 \\
&0.8 & 0.8 & 0.8 & 0.8091 & 0.8036 & 0.7900 & 0.0091 & 0.0036 & -0.0100 & 0.4127 & 0.3043 & 0.1134 & 0.0522 & 0.1649 & 0.0073 & -0.0352 & -0.0448 & 0.0067 & -0.0036 & -0.0001 & 0.0002 \\

&0.8 & 2.0 & 0.8 & 0.9619 & 1.8875 & 0.7528 & 0.1619 & -0.1125 & -0.0472 & 0.6306 & 0.6019 & 0.1577 & 0.0774 & 0.2858 & 0.0080 & -0.0733 & -0.0966 & 0.0120 & -0.0100 & -0.0001 & 0.0002 \\
&2.0 & 0.8 & 0.2 & 1.6657 & 1.1160 & 0.3194 & -0.3343 & 0.3160 & 0.1194 & 0.6310 & 0.6689 & 0.2503 & 0.1846 & 0.4392 & 0.0095 & -0.3121 & -0.1473 & 0.1010 & -0.0538 & -0.0001 & 0.0003 \\
&2.0 & 0.8 & 0.8 & 1.7492 & 1.1671 & 0.8217 & -0.2508 & 0.3671 & 0.0217 & 0.7374 & 0.9482 & 0.0788 & 0.0614 & 0.3201 & 0.0063 & -0.4200 & -0.0551 & 0.0385 & -0.0041 & 0.0000 & 0.0001 \\
&2.0 & 2.0 & 0.2 & 1.7267 & 2.2780 & 0.2975 & -0.2733 & 0.2780 & 0.0975 & 0.6354 & 0.8433 & 0.2322 & 0.2115 & 0.6426 & 0.0097 & -0.3409 & -0.1360 & 0.0902 & -0.0912 & -0.0001 & 0.0003 \\
&2.0 & 2.0 & 0.8 & 1.8887 & 2.1718 & 0.8073 & -0.1113 & 0.1718 & 0.0073 & 0.8520 & 1.3790 & 0.0860 & 0.0688 & 0.3897 & 0.0068 & -0.5958 & -0.0688 & 0.0449 & -0.0051 & -0.0001 & 0.0001 \\ \hline

\end{tabular}
}
\end{sidewaystable}

\begin{sidewaystable}
\caption{The average MLE's, mean of the simulated
standard errors, mean of the simulated covariances, mean of the standard errors and covariances of EM estimators
obtained using observed information matrix of the LFRP distribution for $n=30,~70$.}
{\footnotesize
\begin{tabular}{|c|ccc|ccc|ccc|ccc|ccc|ccc|ccc|ccc|} \hline
&\multicolumn{3}{|c|}{Parameter}&
\multicolumn{3}{|c|}{AE}&  \multicolumn{3}{|c|}{Bias}&
\multicolumn{3}{|c|}{Sim.std}& \multicolumn{3}{|c|}{EM.std}& \multicolumn{3}{|c|}{Sim.Cov}&
\multicolumn{3}{|c|}{EM.Cov}\\ \hline

$n$&$a$      & $b$      & $\theta$      &
$\hat{a}$& $\hat{b}$& $\hat{\theta}$&
$\hat{a}$& $\hat{b}$& $\hat{\theta}$&
$\hat{a}$& $\hat{b}$& $\hat{\theta}$&
$\hat{a}$& $\hat{b}$& $\hat{\theta}$&
$(\hat{a},\hat{b})$& $(\hat{a},\hat{\theta})$& $(\hat{b},\hat{\theta})$&
$(\hat{a},\hat{b})$& $(\hat{a},\hat{\theta})$& $(\hat{b},\hat{\theta})$
\\\hline
30&0.3 & 0.3 & 0.2 & 0.2268 & 0.3270 & 0.7121 & -0.0732 & 0.0270 & 0.5121 & 0.1556 & 0.1438 & 1.0566 & 0.1343 & 0.1370 & 0.0513 & -0.0078 & -0.0801 & -0.0474 & -0.0126 & 0.0001 & -0.0001 \\
&0.3 & 0.3 & 0.8 & 0.2719 & 0.3391 & 1.1002 & -0.0281 & 0.0391 & 0.3002 & 0.1710 & 0.1811 & 1.2280 & 0.1414 & 0.1575 & 0.0870 & -0.0077 & -0.1186 & -0.0784 & -0.0149 & 0.0001 & 0.0000 \\
&0.3 & 0.8 & 0.8 & 0.3030 & 0.8565 & 0.9042 & 0.0030 & 0.0565 & 0.1042 & 0.2322 & 0.3469 & 1.1650 & 0.2027 & 0.3409 & 0.0806 & -0.0225 & -0.1346 & -0.1637 & -0.0474 & 0.0004 & -0.0006 \\
&0.3 & 2.0 & 2.0 & 0.4780 & 2.4784 & 1.4311 & 0.1780 & 0.4784 & -0.5689 & 0.4067 & 1.1848 & 1.5017 & 0.3400 & 1.0282 & 0.1036 & -0.0165 & -0.3661 & -0.9000 & -0.2416 & -0.0004 & 0.0014 \\
&0.8 & 0.3 & 0.2 & 0.6104 & 0.4477 & 0.7864 & -0.1896 & 0.1477 & 0.5864 & 0.2778 & 0.3481 & 0.9907 & 0.2310 & 0.3036 & 0.0579 & -0.0424 & -0.1655 & -0.0182 & -0.0463 & 0.0005 & -0.0006 \\
&0.8 & 0.8 & 0.2 & 0.6352 & 0.9567 & 0.7641 & -0.1648 & 0.1567 & 0.5641 & 0.3515 & 0.5701 & 0.9967 & 0.2915 & 0.5355 & 0.0548 & -0.0698 & -0.2077 & -0.0955 & -0.1071 & 0.0005 & -0.0007 \\
&0.8 & 0.8 & 0.8 & 0.7398 & 0.9608 & 1.1484 & -0.0602 & 0.1608 & 0.3484 & 0.3700 & 0.6192 & 1.2425 & 0.3099 & 0.6151 & 0.0891 & -0.0583 & -0.2793 & -0.2057 & -0.1291 & 0.0008 & -0.0013 \\
&0.8 & 2.0 & 0.8 & 0.7810 & 2.2961 & 0.9942 & -0.0190 & 0.2961 & 0.1942 & 0.4661 & 1.2299 & 1.1943 & 0.4132 & 1.1863 & 0.0830 & -0.1640 & -0.2999 & -0.5348 & -0.3451 & 0.0011 & -0.0031 \\
&2.0 & 0.8 & 0.2 & 1.5673 & 1.8453 & 0.7406 & -0.4327 & 1.0453 & 0.5406 & 0.6635 & 1.5833 & 0.9178 & 0.6530 & 2.0773 & 0.0579 & -0.4782 & -0.4028 & 0.1434 & -1.0103 & 0.0027 & -0.0081 \\
&2.0 & 0.8 & 0.8 & 1.6017 & 1.9315 & 1.2779 & -0.3983 & 1.1315 & 0.4779 & 0.6753 & 1.8262 & 1.0072 & 0.6785 & 2.3007 & 0.0995 & -0.4991 & -0.4640 & 0.0231 & -1.1994 & 0.0051 & -0.0155 \\
&2.0 & 2.0 & 0.2 & 1.5943 & 2.8921 & 0.7125 & -0.4057 & 0.8921 & 0.5125 & 0.7285 & 2.0222 & 0.9363 & 0.6927 & 2.3154 & 0.0550 & -0.6454 & -0.4251 & -0.0894 & -1.1805 & 0.0024 & -0.0075 \\
&2.0 & 2.0 & 0.8 & 1.7306 & 3.1118 & 1.1214 & -0.2694 & 1.1118 & 0.3214 & 0.7609 & 2.4931 & 1.0674 & 0.6529 & 0.9049 & 0.0925 & -0.6340 & -0.4909 & -0.3974 & -0.5327 & 0.0033 & -0.0009 \\

\hline
70&0.3 & 0.3 & 0.2 & 0.2450 & 0.3033 & 0.6867 & -0.0550 & 0.0033 & 0.4867 & 0.1212 & 0.0977 & 1.0236 & 0.0881 & 0.0855 & 0.0323 & -0.0028 & -0.0795 & -0.0354 & -0.0052 & 0.0001 & -0.0001 \\
&0.3 & 0.3 & 0.8 & 0.2929 & 0.3027 & 1.0959 & -0.0071 & 0.0027 & 0.2959 & 0.1508 & 0.1123 & 1.3605 & 0.0944 & 0.0970 & 0.0554 & -0.0013 & -0.1458 & -0.0660 & -0.0063 & 0.0001 & 0.0000 \\
&0.3 & 0.8 & 0.8 & 0.3051 & 0.7873 & 1.0921 & 0.0051 & -0.0127 & 0.2921 & 0.1883 & 0.2909 & 1.4038 & 0.1266 & 0.2070 & 0.0548 & -0.0047 & -0.1659 & -0.2215 & -0.0179 & 0.0001 & -0.0001 \\
&0.3 & 2.0 & 0.2 & 0.2391 & 1.9777 & 0.5873 & -0.0609 & -0.0223 & 0.3873 & 0.1793 & 0.4871 & 1.0635 & 0.1662 & 0.4179 & 0.0309 & -0.0178 & -0.0748 & -0.3081 & -0.0477 & 0.0002 & -0.0005 \\
&0.8 & 0.3 & 0.2 & 0.6729 & 0.3749 & 0.6282 & -0.1271 & 0.0749 & 0.4282 & 0.2281 & 0.2025 & 0.7337 & 0.1594 & 0.1919 & 0.0346 & -0.0215 & -0.1220 & 0.0065 & -0.0211 & 0.0003 & -0.0003 \\
&0.8 & 0.8 & 0.2 & 0.6444 & 0.8694 & 0.7192 & -0.1556 & 0.0694 & 0.5192 & 0.2835 & 0.3364 & 1.0158 & 0.1899 & 0.3216 & 0.0329 & -0.0294 & -0.2093 & -0.0598 & -0.0428 & 0.0002 & -0.0003 \\
&0.8 & 0.8 & 0.8 & 0.7241 & 0.8404 & 1.2165 & -0.0759 & 0.0404 & 0.4165 & 0.3056 & 0.3728 & 1.2053 & 0.1950 & 0.3582 & 0.0599 & -0.0138 & -0.2838 & -0.1392 & -0.0481 & 0.0004 & -0.0006 \\
&0.8 & 2.0 & 0.8 & 0.7739 & 2.0331 & 1.1202 & -0.0261 & 0.0331 & 0.3202 & 0.3986 & 0.7978 & 1.3623 & 0.2632 & 0.7119 & 0.0557 & -0.0168 & -0.3880 & -0.4949 & -0.1339 & 0.0005 & -0.0013 \\
&2.0 & 0.8 & 0.2 & 1.6791 & 1.2448 & 0.6285 & -0.3209 & 0.4448 & 0.4285 & 0.5079 & 0.8163 & 0.6321 & 0.3927 & 0.9781 & 0.0359 & -0.2179 & -0.2416 & 0.1172 & -0.2783 & 0.0009 & -0.0022 \\
&2.0 & 0.8 & 0.8 & 1.7447 & 1.2173 & 1.1918 & -0.2553 & 0.4173 & 0.3918 & 0.6074 & 0.9624 & 0.9186 & 0.4071 & 1.1315 & 0.0635 & -0.2875 & -0.4616 & 0.2037 & -0.3332 & 0.0018 & -0.0045 \\
&2.0 & 2.0 & 0.2 & 1.6687 & 2.4317 & 0.6375 & -0.3313 & 0.4317 & 0.4375 & 0.5900 & 1.2990 & 0.7771 & 0.4591 & 1.4380 & 0.0346 & -0.3994 & -0.3235 & 0.0663 & -0.4980 & 0.0011 & -0.0034 \\
&2.0 & 2.0 & 0.8 & 1.6819 & 2.3178 & 1.3846 & -0.3181 & 0.3178 & 0.5846 & 0.7056 & 1.3202 & 1.2106 & 0.4633 & 1.5068 & 0.0644 & -0.2284 & -0.6967 & -0.1234 & -0.5240 & 0.0021 & -0.0063 \\\hline
\end{tabular}
}
\end{sidewaystable}

\begin{sidewaystable}
\caption{The average MLE's, mean of the simulated
standard errors, mean of the simulated covariances, mean of the standard errors and covariances of EM estimators
obtained using observed information matrix of the LFRP distribution for $n=100,~200$.}
{\footnotesize
\begin{tabular}{|c|ccc|ccc|ccc|ccc|ccc|ccc|ccc|ccc|} \hline
&\multicolumn{3}{|c|}{Parameter}&
\multicolumn{3}{|c|}{AE}&  \multicolumn{3}{|c|}{Bias}&
\multicolumn{3}{|c|}{Sim.std}& \multicolumn{3}{|c|}{EM.std}& \multicolumn{3}{|c|}{Sim.Cov}&
\multicolumn{3}{|c|}{EM.Cov}\\ \hline

$n$&$a$      & $b$      & $\theta$      &
$\hat{a}$& $\hat{b}$& $\hat{\theta}$&
$\hat{a}$& $\hat{b}$& $\hat{\theta}$&
$\hat{a}$& $\hat{b}$& $\hat{\theta}$&
$\hat{a}$& $\hat{b}$& $\hat{\theta}$&
$(\hat{a},\hat{b})$& $(\hat{a},\hat{\theta})$& $(\hat{b},\hat{\theta})$&
$(\hat{a},\hat{b})$& $(\hat{a},\hat{\theta})$& $(\hat{b},\hat{\theta})$
\\\hline
100&0.3 & 0.3 & 0.2 & 0.2510 & 0.2986 & 0.6784 & -0.0490 & -0.0014 & 0.4784 & 0.1178 & 0.0865 & 0.9771 & 0.0746 & 0.0716 & 0.0268 & -0.0020 & -0.0795 & -0.0318 & -0.0037 & 0.0000 & 0.0000 \\
&0.3 & 0.3 & 0.8 & 0.2986 & 0.2921 & 1.0949 & -0.0014 & -0.0079 & 0.2949 & 0.1440 & 0.0937 & 1.3350 & 0.0791 & 0.0799 & 0.0467 & 0.0003 & -0.1412 & -0.0638 & -0.0043 & 0.0001 & 0.0000 \\
&0.3 & 0.8 & 0.8 & 0.3192 & 0.7828 & 0.9748 & 0.0192 & -0.0172 & 0.1748 & 0.1760 & 0.2206 & 1.3041 & 0.1088 & 0.1745 & 0.0447 & 0.0015 & -0.1525 & -0.1673 & -0.0131 & 0.0001 & -0.0001 \\
&0.3 & 2.0 & 2.0 & 0.4094 & 2.1630 & 1.8918 & 0.1094 & 0.1630 & -0.1082 & 0.2906 & 0.7813 & 1.5306 & 0.1582 & 0.4815 & 0.0644 & 0.0505 & -0.3126 & -0.8497 & -0.0530 & -0.0001 & 0.0003 \\
&0.8 & 0.3 & 0.2 & 0.6932 & 0.3500 & 0.5920 & -0.1068 & 0.0500 & 0.3920 & 0.2146 & 0.1621 & 0.6884 & 0.1348 & 0.1575 & 0.0283 & -0.0181 & -0.1137 & 0.0155 & -0.0147 & 0.0002 & -0.0002 \\
&0.8 & 0.8 & 0.2 & 0.6675 & 0.8306 & 0.7101 & -0.1325 & 0.0306 & 0.5101 & 0.2709 & 0.2695 & 0.9097 & 0.1603 & 0.2665 & 0.0279 & -0.0232 & -0.1916 & -0.0330 & -0.0301 & 0.0002 & -0.0003 \\
&0.8 & 0.8 & 0.8 & 0.7297 & 0.8054 & 1.2355 & -0.0703 & 0.0054 & 0.4355 & 0.3084 & 0.2986 & 1.3018 & 0.1645 & 0.2952 & 0.0495 & -0.0005 & -0.3281 & -0.1313 & -0.0337 & 0.0003 & -0.0004 \\
&0.8 & 2.0 & 0.8 & 0.8036 & 1.9297 & 1.1362 & 0.0036 & -0.0703 & 0.3362 & 0.3907 & 0.6520 & 1.3909 & 0.2217 & 0.5834 & 0.0470 & -0.0098 & -0.4034 & -0.3985 & -0.0929 & 0.0003 & -0.0009 \\
&2.0 & 0.8 & 0.2 & 1.6990 & 1.1625 & 0.5828 & -0.3010 & 0.3625 & 0.3828 & 0.4652 & 0.7224 & 0.6065 & 0.3287 & 0.7825 & 0.0291 & -0.1887 & -0.2165 & 0.0863 & -0.1879 & 0.0006 & -0.0015 \\
&2.0 & 0.8 & 0.8 & 1.7989 & 1.0929 & 1.1276 & -0.2011 & 0.2929 & 0.3276 & 0.5479 & 0.8090 & 0.8144 & 0.3446 & 0.9113 & 0.0531 & -0.2286 & -0.3624 & 0.1468 & -0.2279 & 0.0014 & -0.0035 \\
&2.0 & 2.0 & 0.2 & 1.7125 & 2.3896 & 0.5979 & -0.2875 & 0.3896 & 0.3979 & 0.5715 & 1.0448 & 0.8080 & 0.3945 & 1.2397 & 0.0273 & -0.3206 & -0.3604 & 0.1038 & -0.3719 & 0.0008 & -0.0024 \\
&2.0 & 2.0 & 0.8 & 1.7986 & 2.2138 & 1.2414 & -0.2014 & 0.2138 & 0.4414 & 0.6837 & 1.1497 & 1.2379 & 0.4381 & 1.4671 & 0.0513 & -0.2172 & -0.6871 & -0.1096 & -0.5110 & 0.0021 & -0.0068 \\

\hline
200&0.3 & 0.3 & 0.2 & 0.2534 & 0.2956 & 0.5923 & -0.0466 & -0.0044 & 0.3923 & 0.0977 & 0.0593 & 0.8726 & 0.0530 & 0.0501 & 0.0187 & -0.0004 & -0.0608 & -0.0223 & -0.0018 & 0.0000 & 0.0000 \\
&0.3 & 0.3 & 0.8 & 0.3135 & 0.2898 & 0.9469 & 0.0135 & -0.0102 & 0.1469 & 0.1281 & 0.0691 & 1.1740 & 0.0573 & 0.0568 & 0.0318 & 0.0009 & -0.1160 & -0.0418 & -0.0022 & 0.0000 & 0.0000 \\
&0.3 & 0.8 & 0.8 & 0.3107 & 0.7721 & 0.9987 & 0.0107 & -0.0279 & 0.1987 & 0.1452 & 0.1808 & 1.2164 & 0.0749 & 0.1204 & 0.0330 & 0.0051 & -0.1258 & -0.1452 & -0.0062 & 0.0001 & -0.0001 \\
&0.3 & 2.0 & 2.0 & 0.3761 & 2.0328 & 2.0201 & 0.0761 & 0.0328 & 0.0201 & 0.2262 & 0.6176 & 1.4304 & 0.1014 & 0.3119 & 0.0476 & 0.0588 & -0.2343 & -0.7048 & -0.0219 & -0.0001 & 0.0002 \\
&0.8 & 0.3 & 0.2 & 0.7175 & 0.3335 & 0.4667 & -0.0825 & 0.0335 & 0.2667 & 0.1747 & 0.1137 & 0.5576 & 0.0980 & 0.1110 & 0.0182 & -0.0119 & -0.0801 & 0.0152 & -0.0077 & 0.0001 & -0.0001 \\
&0.8 & 0.8 & 0.2 & 0.6773 & 0.8208 & 0.6577 & -0.1227 & 0.0208 & 0.4577 & 0.2358 & 0.1984 & 0.8993 & 0.1147 & 0.1885 & 0.0192 & -0.0107 & -0.1717 & -0.0352 & -0.0154 & 0.0001 & -0.0002 \\
&0.8 & 0.8 & 0.8 & 0.7652 & 0.7624 & 1.1273 & -0.0348 & -0.0376 & 0.3273 & 0.2889 & 0.2184 & 1.1571 & 0.1183 & 0.2054 & 0.0345 & -0.0061 & -0.2780 & -0.0721 & -0.0170 & 0.0002 & -0.0003 \\
&0.8 & 2.0 & 0.8 & 0.8263 & 1.9088 & 0.9758 & 0.0263 & -0.0912 & 0.1758 & 0.3345 & 0.4817 & 1.1864 & 0.1594 & 0.4126 & 0.0325 & 0.0003 & -0.2993 & -0.2746 & -0.0476 & 0.0002 & -0.0007 \\
&2.0 & 0.8 & 0.2 & 1.8066 & 1.0105 & 0.4578 & -0.1934 & 0.2105 & 0.2578 & 0.3859 & 0.5031 & 0.4822 & 0.2401 & 0.5462 & 0.0189 & -0.1146 & -0.1531 & 0.0693 & -0.0967 & 0.0003 & -0.0007 \\
&2.0 & 0.8 & 0.8 & 1.8740 & 0.9268 & 1.0597 & -0.1260 & 0.1268 & 0.2597 & 0.5460 & 0.6460 & 0.7505 & 0.2430 & 0.6068 & 0.0367 & -0.2250 & -0.3688 & 0.2097 & -0.1064 & 0.0006 & -0.0016 \\
&2.0 & 2.0 & 0.2 & 1.7485 & 2.2580 & 0.5199 & -0.2515 & 0.2580 & 0.3199 & 0.4688 & 0.7522 & 0.5966 & 0.2802 & 0.8341 & 0.0191 & -0.2029 & -0.2344 & 0.1093 & -0.1789 & 0.0004 & -0.0012 \\
&2.0 & 2.0 & 0.8 & 1.7977 & 2.1057 & 1.2005 & -0.2023 & 0.1057 & 0.4005 & 0.6364 & 0.8473 & 0.9882 & 0.3027 & 1.0051 & 0.0372 & -0.2321 & -0.5396 & 0.0745 & -0.2411 & 0.0011 & -0.0035 \\ \hline

\end{tabular}
}
\end{sidewaystable}

\section{ Data analysis}
\label{se.app}
In this section, we analyze a real data set to demonstrate the
performance of the LFRPS distributions in practice. The data set studied by
\cite{ab-ab-qa-94},
 which represent the lifetime in days of 40 patients suffering from leukemia from one of the Ministry of Health Hospitals in Saudi Arabia. The TTT plot in Fig. \ref{fig.TTT} states that this data posses an increasing hazard rate function and also indicates that appropriateness of LFRPS to fit this data. For this data set, we compare the results of the fits of the LFRPS, GLFR, LFR, EG, RG, Rayligh and exponential distributions, where the LFR, EG, RG, Rayligh and exponential are submodels of the LFRPS distributions and the pdf of the GLFR distribution is given by
$$
f(x;a,b,\theta)=\theta(a+bx)e^{-ax-\frac{b}{2}x^2}(1-e^{-ax-\frac{b}{2}x^2})^{\theta-1},~~ x\geq0.
$$
The main reason for the use of the LFRG distribution in the LFRPS family of distributions, is that it attains the best fit to data in the class of LFRPS distributions among LFRG, LFRP, LFRB and LFRL distributions. Table 5 gives the MLEs of the parameters, the standard errors of the MLEs, a $100(1-\alpha)$ confidence intervals for the $a$, $b$ and $\theta$, -2log(likelihood), AIC, AICC, BIC, K–S statistic with its respective p-value, AD and CM statistics.

The AIC (Akaike Information Criterion) is given by -2log(likelihood)+2$p$, where $p$ is the number of parameters index to the model. The p-value according to the K–S statistic is computed approximately using the chi-square statistic. The AD (Anderson-Darling) and CM (Cramer von Mises) are given respectively by $$AD=-n-\sum_{i=1}^{n}\frac{2i-1}{n}\left [\ln F(y_{i})+\ln (1-F(y_{n-i+1}))\right ],$$ and $$CM=\frac{1}{12n}+\sum_{i=1}^{n}\left [\frac{2i-1}{2n}-F(y_{i})\right ]^2.$$ The values of these criteria in Table 5 emphasize that the LFRG gives the best fit to the leukemia data in comparing with GLFR, LFR, EG, RG, Rayligh and exponential distributions.

Plots of the estimated pdf, cdf and survival function with the qq-plot of fitted distributions are given in Fig. \ref{fig.hex} and Fig. \ref{fig.qex}, respectively. These plots suggest that the LFRG distribution is superior to the other distributions in terms of model fitting.

For parametric comparisons, we have used the likelihood ratio (LR) test statistics, $\Lambda_{H_{0}}=2({{\mathcal L}}_{H_1}-{{\mathcal L}}_{H_0})$, to test the null
hypotheses against the alternative one mentioned above. Table 6 gives the null hypothesis
$H_0$, the value of log-likelihood function under $H_0$, ${{\mathcal L}}_{H_0}$, the value of the likelihood ratio test statistics, $\Lambda_{H_{0}}$, the degree of freedom of $\Lambda_{H_{0}}$, df, and the corresponding p-value. From the p-values it is clear that we reject $H_{0}: LFR (\theta=0)$ at level of significance $\alpha \ge 0.0018$, $H_{0}: EG (b=0)$ at level of significance $\alpha \ge 0.1987$,
$H_{0}: RG (a=0)$ at level of significance $\alpha \ge 0.1844$, $H_{0}: Rayleigh (a=\theta=0)$ at level of significance $\alpha \ge 0.019$ and
$H_{0}: Exp (b=\theta=0)$ at any level of significance.
This concludes that the LFRG distribution
is the best among all distributions used here to fit the current data set.

\begin{figure}[h]
\centering
\includegraphics[scale=0.40]{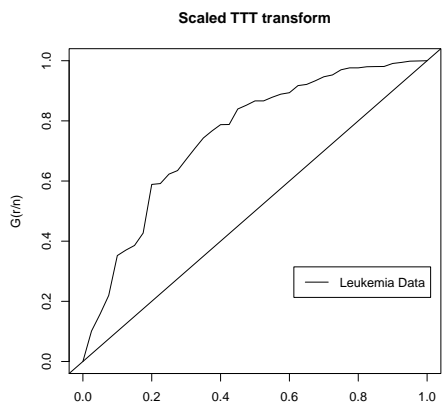}
\vspace{-0.8cm}
\caption[]{\label{fig.TTT}TTT plot for the Leukemia data.}
\end{figure}

\begin{sidewaystable}
\caption{The MLEs, standard errors of the MLEs and 95\% confidence intervals for parameters of the LFRG, GLFR, LFR, EG, RG, Rayligh and exponential models, and -2log L, AIC, AICC, BIC, K-S, P-values of the K-S statistic, AD and CM statistics for the Leukemia data.}
{\normalsize
\begin{tabular}{|c|ccc|cccccccc|} \hline
Model & MLEs & S.E.  & C.I. &  -2log L & AIC & AICC & BIC & K-S & P-value & AD & CM  \\ \hline
 &$\hat{a}$=7.73E-4&5.3E-4&(-0.0003,0.0018) &&&&&&&&\\
LFRG &$\hat{b}$=2.05E-6&4.3E-7&(1.3E-6,2.9E-6)& 604.0 & 610.0 & 610.6 & 615.0 & 0.0831 & 0.9450 & 0.4381 & 0.1304 \\
&$\hat{\theta}$=-8.448&2.1E-8&(-8.448,-8.448)&&&&&&&&\\  \hline
&$\hat{a}$=2.1E-4&3.6E-4&(-0.0005,0.0009)&&&&&&&& \\
GLFR &$\hat{b}$=1.4E-6&6.2E-7&(1.7E-7,2.6E-6)&610.7 & 616.7 & 617.3 & 621.7 & 0.1481 & 0.3440 & 1.1053 & 0.2687\\
&$\hat{\theta}$=1.5528&0.6108&(0.3183,2.7873)&&&&&&&&\\ \hline
LFR&$\hat{a}$=1E-8&8.1E-9&(-5.8E-9,2.6E-8)& 612.0 & 616.0 & 612.3 & 619.4 & 0.1659 & 0.2210 & 1.5019 & 0.3719\\
&$\hat{b}$=1.3E-6&1.1E-7&(1.1E-6,1.5E-6)&&&&&&&& \\ \hline
EG&$\hat{a}$=0.00338&0.0005&(0.0024,0.0044)& 607.5 & 611.5 & 611.8 & 614.8 & 0.0882 & 0.9150 & 0.6151 & 0.1484\\
&$\hat{\theta}$=-48.64&34.028&(-117.41,20.130)&&&&&&&&\\ \hline
RG&$\hat{b}$=2.48E-6&2.5E-7&(2E-6,3E-6)& 607.1 & 611.1 & 611.5 & 614.1 & 0.0959 & 0.8550 & 0.6543 & 0.1446\\
&$\hat{\theta}$=-3.57&0.09136&(-3.749,-3.391)&&&&&&&&\\ \hline
Rayleigh&$\hat{b}$=1.3E-6&1.5E-8&(1.3E-6,1.35E-6)& 612.0 & 614.0 & 614.1 & 615.7 & 0.1659 & 0.2210 & 1.5018 & 0.3719\\ \hline
Exp&$\hat{a}$=8.8E-4&1.4E-4&(0.0006,0.0012)& 642.9 & 644.9 & 645.0 & 646.6 & 0.3031 & 0.0013 & 1.2102 & 5.7083\\ \hline
\end{tabular}
}
\end{sidewaystable}


\begin{table}[]
\caption{Null hypothesis $H_0$, the value of log-likelihood function under $H_0$; ${{\mathcal L}}_{H_0}$ , the value of the likelihood ratio test statistics, ${\Lambda }_{H_0}$ , the degree of freedom of ${\Lambda }_{H_0}$, df, the corresponding p-value for Leukemia data..}
\begin{tabular}{|l|lllll|} \hline
Model & $H_0$ & ${{\mathcal L}}_{H_0}$ & ${\Lambda }_{H_0}$ & df & P-value \\ \hline
LFR & $\theta =0$ & 306 & 8 & 1 & 0.0018 \\ \hline
EG & $b=0$ & 303.75 & 3 & 1 & 0.1987 \\ \hline
RG & $a=0$ & 303.55 & 3.1 & 1 & 0.1844 \\ \hline
Rayleigh & $\theta =0$,  $a=0$ & 306 & 8 & 2 & 0.0190 \\ \hline
Exp & $\theta =0$,  $b=0$ & 321.45 & 38.9 & 2 & 8.3E-19 \\ \hline
\textbf{LFRG} & ${{\mathcal L}}_{H_1}$=& \textbf{302} & -- & -- & -- \\ \hline
\end{tabular}
\end{table}

\begin{figure}[]
\centering
\includegraphics[scale=0.28]{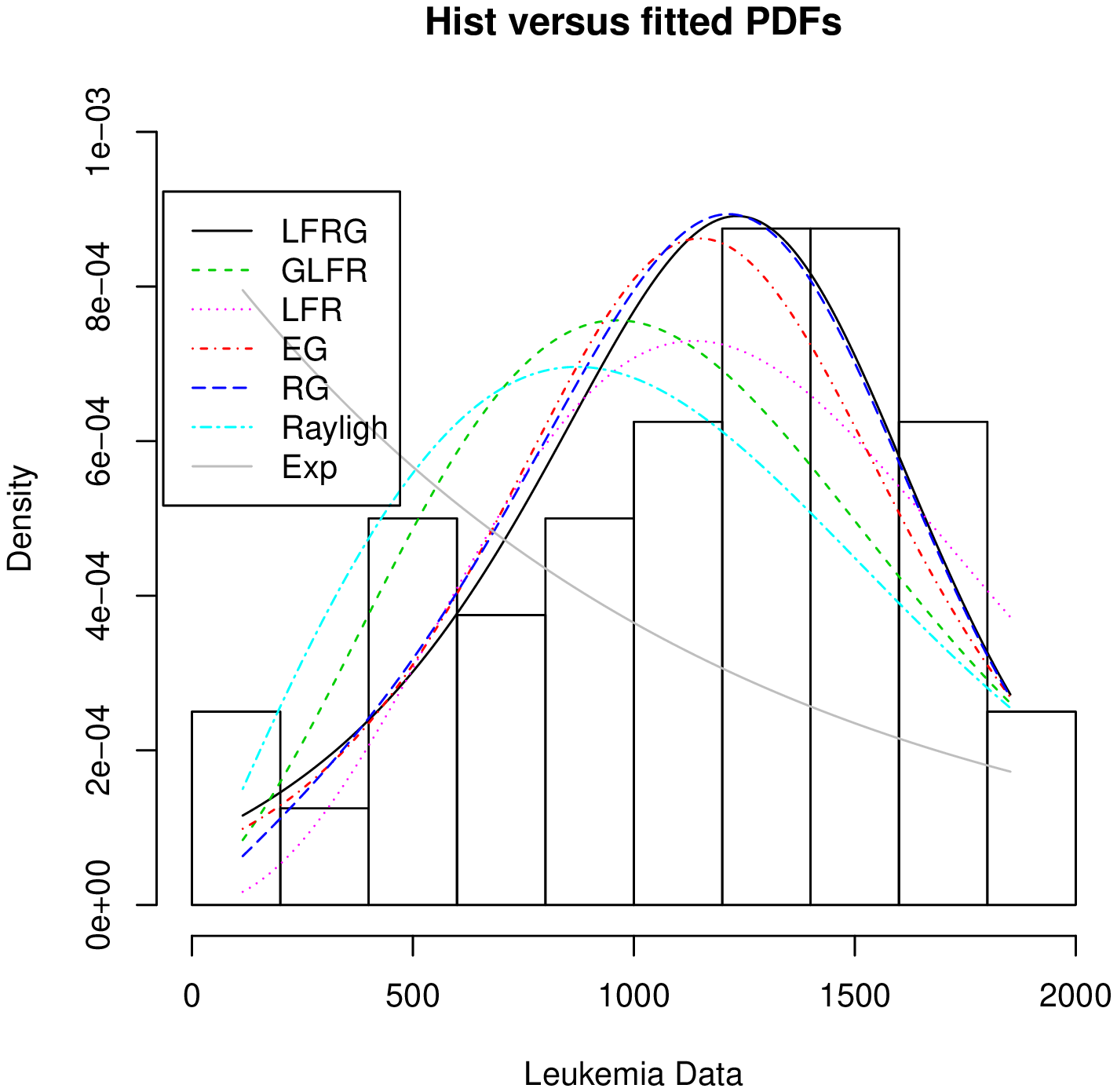}
\includegraphics[scale=0.28]{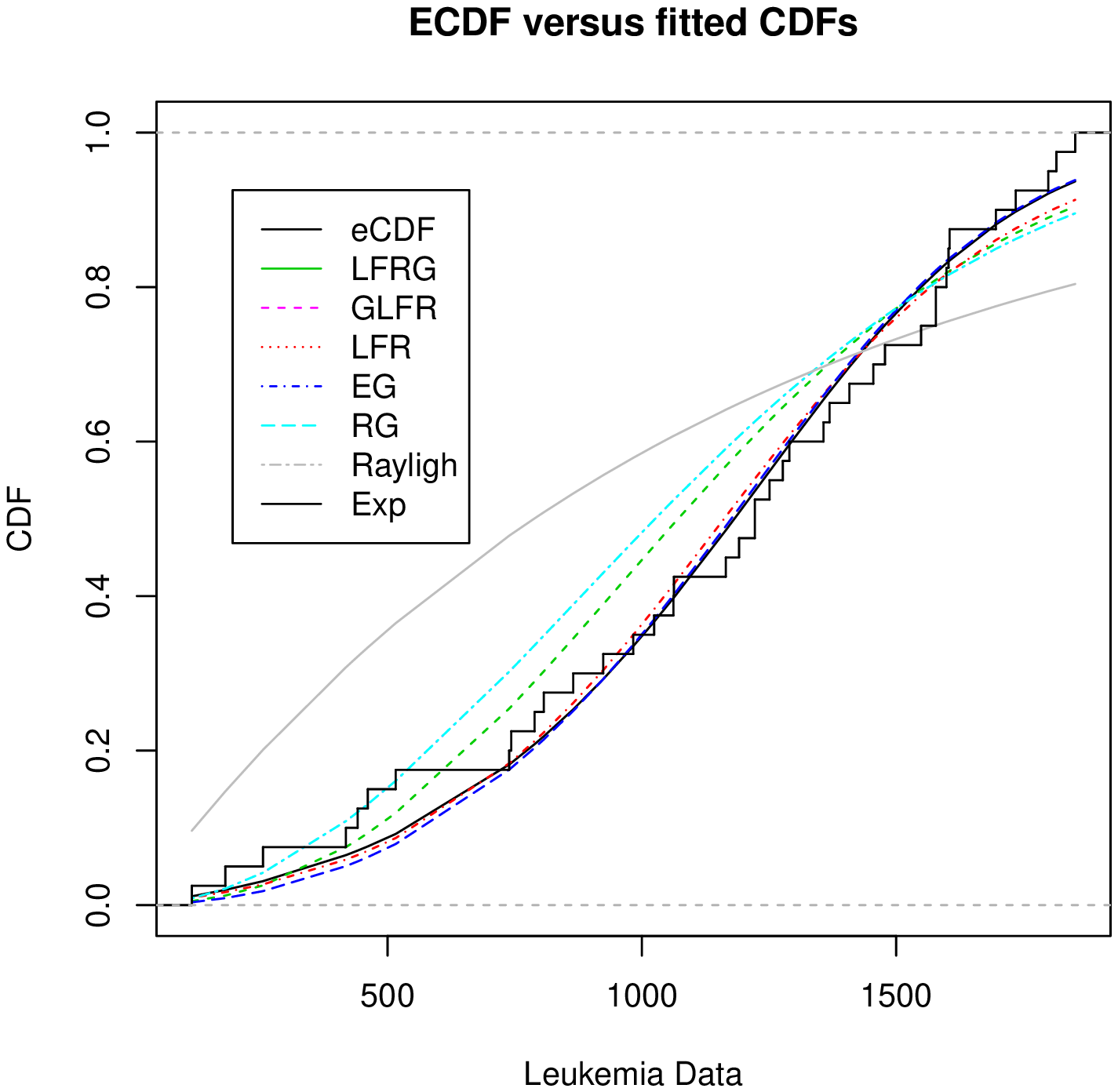}
\vspace{-0.4cm}
\caption[]{\label{fig.hex} The empirical and fitted densities, distribution functions and survival functions of the LFRG, GLFR, LFR, EG, RG, Rayligh and exponential models for the Leukemia data.}
\end{figure}

\begin{figure}[]
\centering
\includegraphics[scale=0.28]{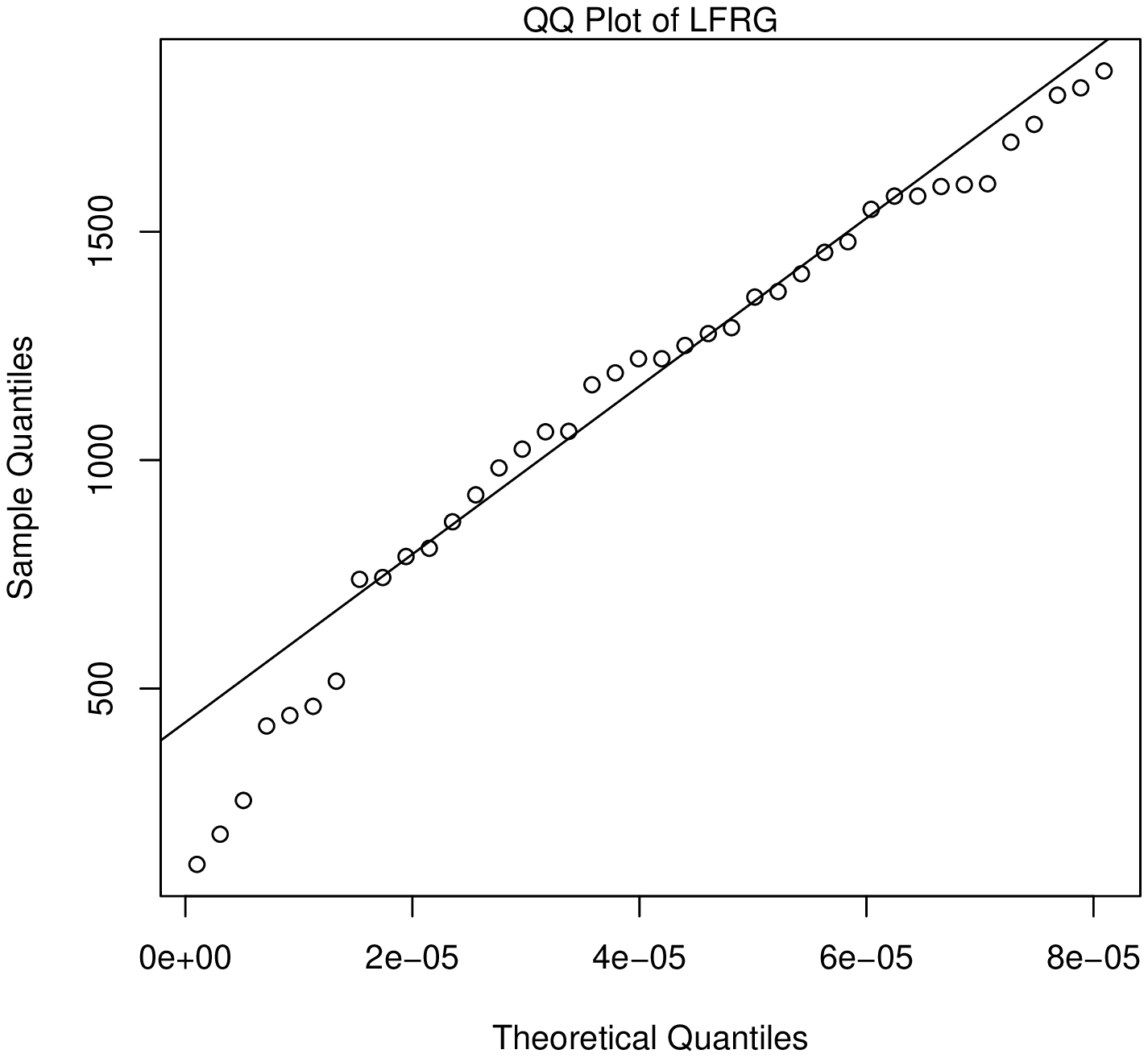}
\includegraphics[scale=0.28]{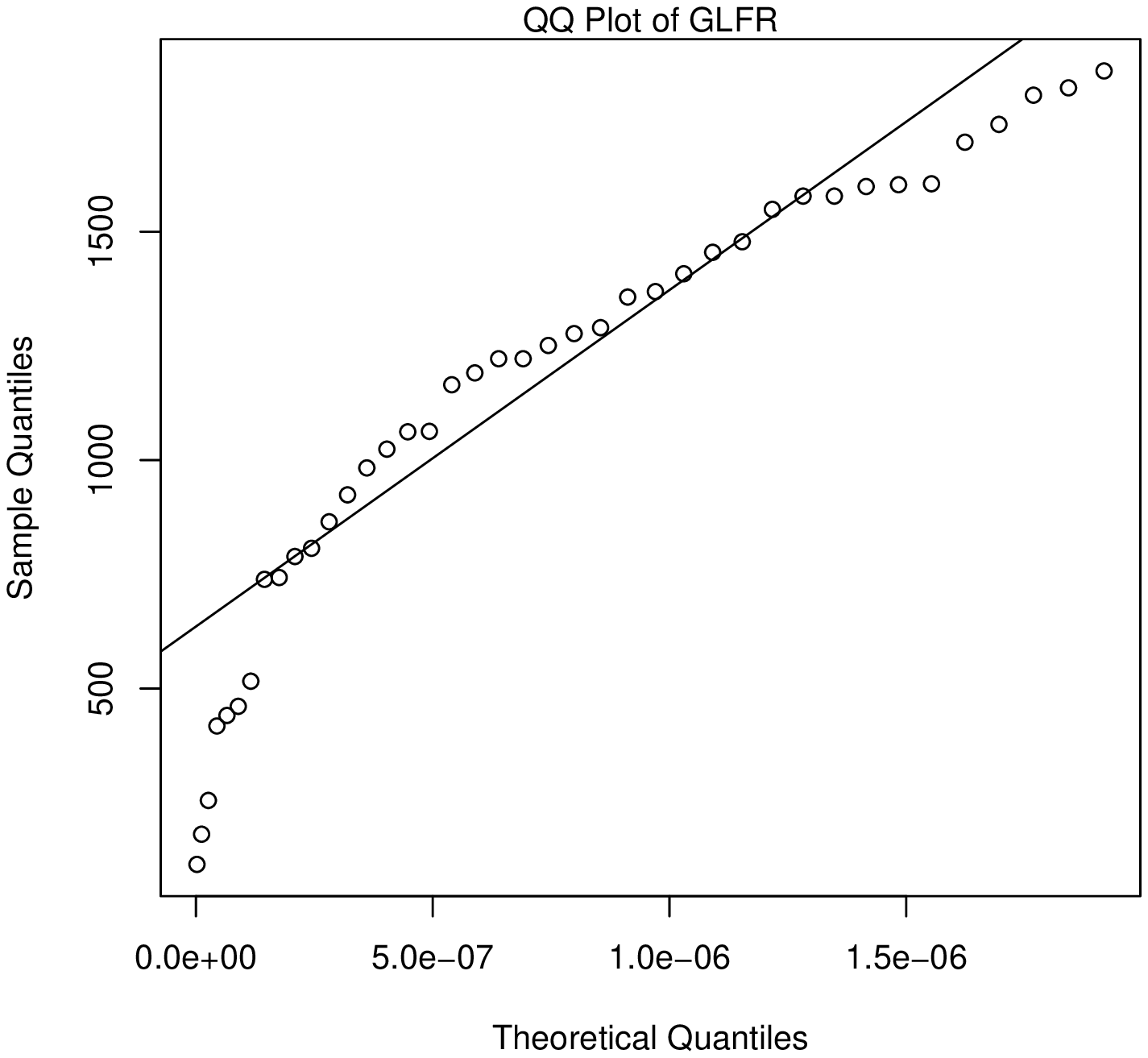}
\includegraphics[scale=0.28]{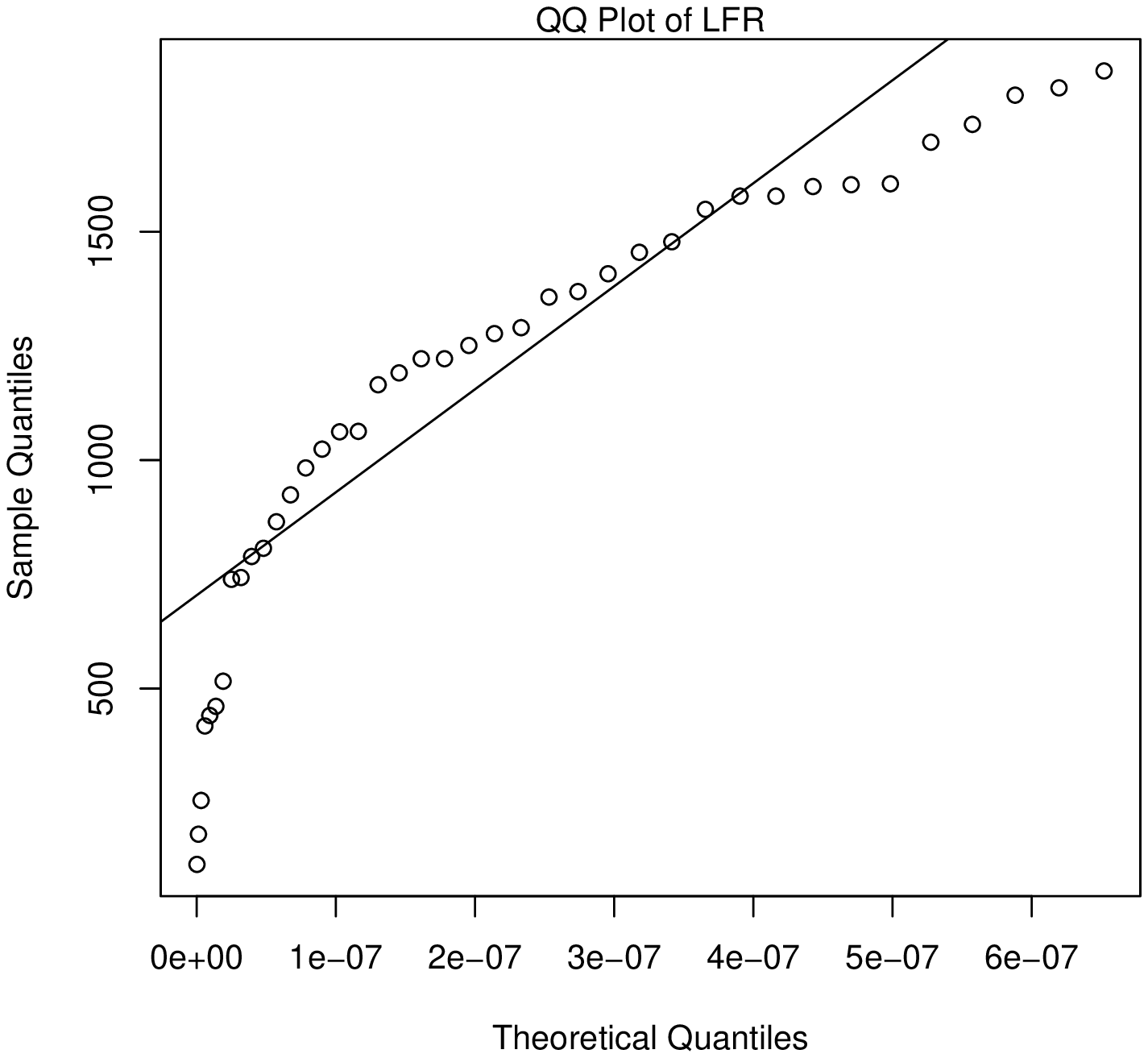}
\includegraphics[scale=0.28]{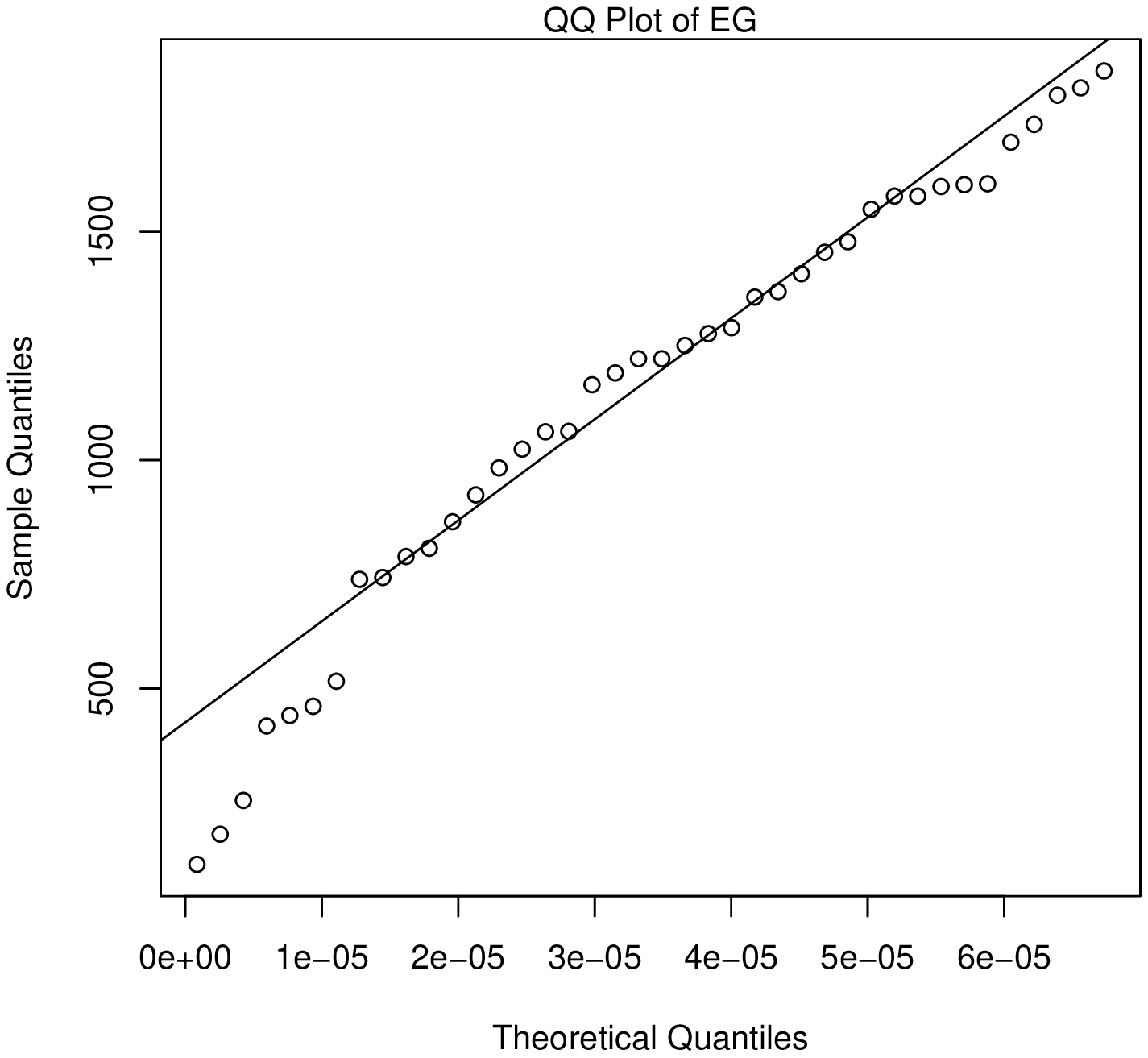}
\includegraphics[scale=0.28]{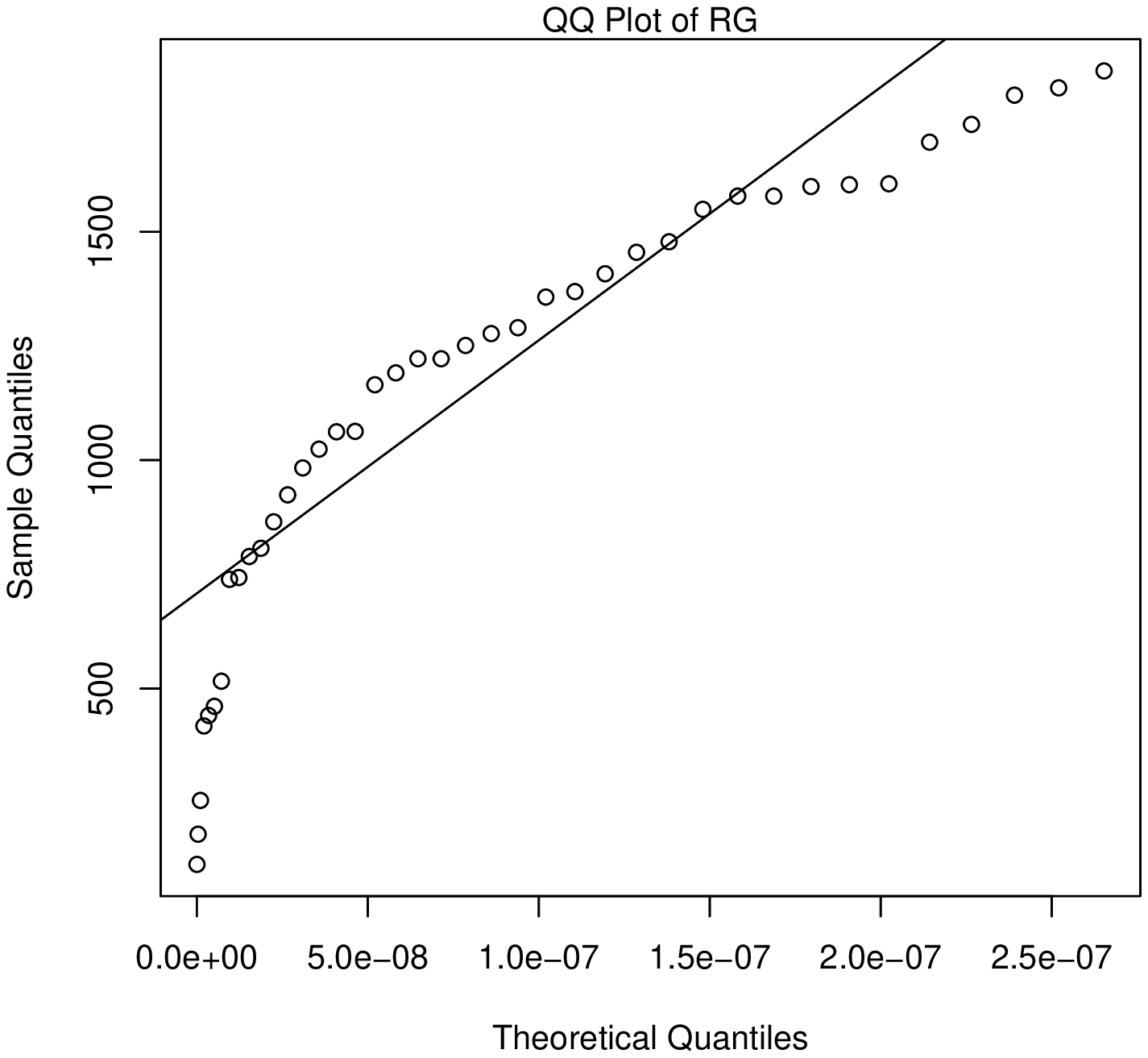}
\includegraphics[scale=0.28]{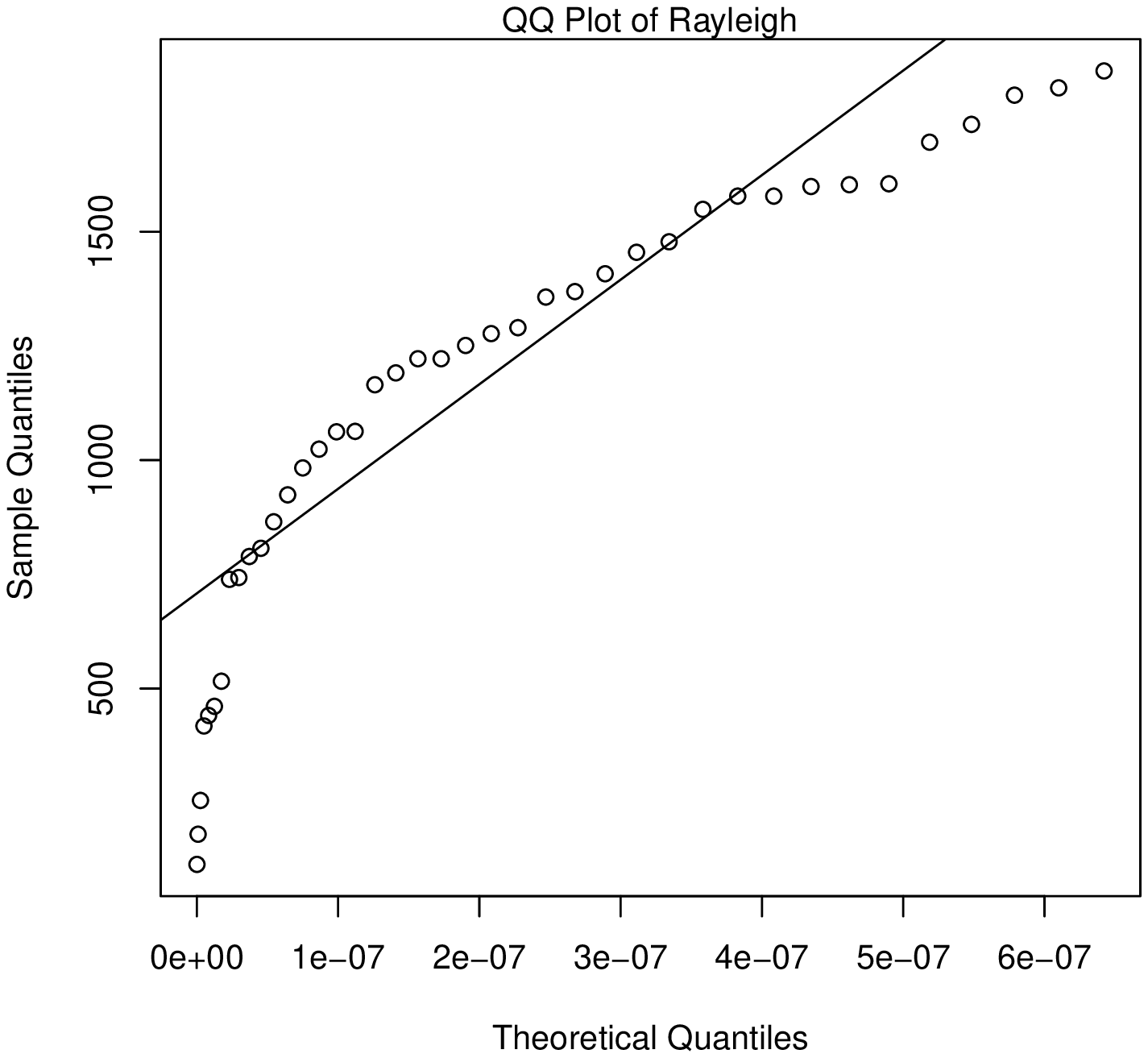}
\vspace{-0.4cm}
\caption[]{\label{fig.qex} QQ plots of the LFRG, GLFR, LFR, EG, RG and Rayligh models.}
\end{figure}

\newpage
‎\section{Conclusion}
\label{se.con}
We introduce a new class of lifetime distributions called the linear failure rate-power series (LFRPS) class of distributions, which generalizes the linear failure rate (LFR) distribution and is obtained by compounding the LFR distribution and power series (PS) class of distributions. This new class of distributions contains some new distributions such as linear failure rate geometric (LFRG) distribution, linear failure rate Poisson (LFRP) distribution, linear failure rate logarithmic (LFRL) distribution, linear failure rate binomial (LFRB) distribution and Raylight-power series (RPS) class of distributions. Some former works such as exponential-power series (EPS) class of distributions
\citep{ch-ga-09},
exponential geometric (EG) distribution
\citep{ad-lo-98},
 exponential Poisson (EP) distribution
  \citep{kus-07},
 and exponential logarithmic (EL) distribution
 \citep{ta-re-08}
 are special cases of the new proposed model.

The ability of the new proposed model is in covering five possible hazard rate function i.e., increasing, decreasing, upside-down bathtub (unimodal), bathtub and increasing-decreasing-increasing shaped. Several properties of the LFRPS distributions such as
moments, maximum likelihood estimation procedure via an EM-algorithm and inference for a large sample, are discussed in this paper. In order to show the flexibility and potentiality of the new class of distributions, the fitted results of the new class of distributions and some its submodels are compared using a real data set.
‎
\section*{Appendix A}
Let
$A_{2i}=\frac{C''\left(\theta p_i\right)}{C'\left(\theta p_i\right)}$ and $A_{3i}=\frac{C'''\left(\theta p_i\right)}{C'\left(\theta p_i\right)}$. Then
\begin{eqnarray*}
&&I_{aa}=-\frac{\partial^2{\ell }_n}{\partial a^2}
=\sum^n_{i=1}{\frac{1}{{\left(a+bx_i\right)}^2}}
-\theta\sum^n_{i=1} x^2_ip_iA_{2i}
-\theta^2\sum^n_{i=1}x^2_ip^2_iA_{3i}
+\theta^2\sum^n_{i=1}x^2_ip_iA_{2i}^2\\
&& I_{ab}=I_{ba}=-\frac{\partial^2{\ell }_n}{\partial b\partial a}
=\sum^n_{i=1}{\frac{x_i}{{\left(a+bx_i\right)}^2}}
+\frac{\theta}{2}\sum^n_{i=1} x^3_ip_iA_{2i}
+\frac{\theta^2}{2}\sum^n_{i=1}x^3_ip^2_iA_{3i}
-\frac{\theta^2}{2}\sum^n_{i=1}x^3_ip_iA_{2i}^2\\
&& I_{a\theta }=I_{\theta a}=-\frac{\partial^2{\ell }_n}{\partial \theta\partial a}=
\sum^n_{i=1}x_ip_iA_{2i}
+\theta\sum^n_{i=1} x_ip^2_iA_{3i}
-\theta\sum^n_{i=1} x_ip^2_iA_{2i}^2
\end{eqnarray*}
\begin{eqnarray*}
&&I_{bb}=-\frac{\partial^2{\ell }_n}{\partial b^2}
=\sum^n_{i=1}{\frac{x^2_i}{{\left(a+bx_i\right)}^2}}
-\frac{\theta}{4}\sum^n_{i=1} x^4_ip_iA_{2i}
-\frac{\theta^2}{4}\sum^n_{i=1}x^4_ip^2_iA_{3i}
+\frac{\theta^2}{4}\sum^n_{i=1}x^4_ip_iA_{2i}^2\\
&&I_{b\theta }=I_{\theta b}=\frac{\partial^2{\ell }_n}{\partial \theta \partial b}
=-\sum^n_{i=1}x_ip_iA_{2i}
-\theta\sum^n_{i=1} x_ip^2_iA_{3i}
-\theta \sum^n_{i=1} x_ip^2_iA_{2i}^2
\\
&&
I_{\theta \theta }=\frac{\partial^2{\ell }_n}{\partial \theta^2}
=\frac{n}{{\theta }^2}
+\sum^n_{i=1}p_iA_{3i}
-\sum^n_{i=1}p_iA_{2i}^2
-\frac{nC''(\theta )}{C(\theta )}+\frac{n{\left(C'\left(\theta \right)\right)}^2}{{\left(C\left(\theta \right)\right)}^2}\hspace{3.5cm}
\end{eqnarray*}

\section*{Appendix B}

\subsection*{B.1}
Let
\[{\rm g}_1\left(a;b,\theta,{\boldsymbol x}\right)=\frac{\partial l_n}{\partial a}=\sum^n_{i=1}{\frac{1}{a+bx_i}}-n\bar{x}-w_1\left(a;b,\theta,{\boldsymbol x}\right)\]
where
$w_1\left(a;b,\theta,{\boldsymbol x}\right)=\sum^n_{i=1}\frac{\theta x_iu^a_iv^b_iC''\left(\theta u^a_iv^b_i\right)}{C'\left(\theta u^a_iv^b_i\right)}=-[\frac{\partial}{\partial a}\sum^n_{i=1}\log  \left(C'\left(\theta p_i\right)\right)]$, $u_i=e^{-x_i}$,  $v_i=e^{-\frac{1}{2}x^2_i}$, and $p_i=\exp(-ax_i-\frac{b}{2}x^2_i)$.

\noindent i. If $\theta>0$, then, $w_1\left(a;b,\theta,{\boldsymbol x}\right)$ is strictly decreasing in $a$ and
\[{\mathop{\lim }_{a\rightarrow 0} w_1\left(a;b,\theta,{\boldsymbol x}\right)}=k_1>0,\ \ \ \ \ \ \ \ \ \ \ {\mathop{\lim}_{a\rightarrow\infty} w_1\left(a;b,\theta,{\boldsymbol x}\right)}=0.\]
Therefore,
\[{\mathop{\lim }_{a\rightarrow 0}{\rm g}_1\left(a;b,\theta,{\boldsymbol x}\right)}=\infty,\ \ \ \ \ \ \ \ \ \ \ \ \ \ \ {\mathop{\lim }_{a\rightarrow\infty} {{\rm g}}_1\left(a;b,\theta,{\boldsymbol x}\right)}=-n\bar{x}<0,\]
and
\[{\rm g}_1\left(a;b,\theta,{\boldsymbol x}\right)<\sum^n_{i=1}\frac{1}{a+bx_i}-n\bar{x}<\frac{n}{a+bx_{(1)}}
-n\bar{x},\]
\[{{\rm g}}_1\left(a;b,\theta,{\boldsymbol x}\right)>\sum^n_{i=1}{\frac{1}{a+bx_i}}-n\bar{x}-k_1>\frac{n}{a+bx_{(n)}}
-n\bar{x}-k_1.\]
Therefore, ${{\rm g}}_1\left(a;b,\theta,{\boldsymbol x}\right)<0$ when
$\frac{n}{a+bx_{(1)}}-n\bar{x}<0$, and ${{\rm g}}_1\left(a;b,\theta,{\boldsymbol x}\right)>0$ when
$\frac{n}{a+bx_{(n)}}-n\bar{x}-k_1>0$. Hence, the proof is completed.

\noindent ii.  If $\theta<0$, then, $w_1\left(a;b,\theta,{\boldsymbol x}\right)$ is strictly increasing in $a$ and
\[{\mathop{\lim }_{a\rightarrow 0} w_1\left(a;b,\theta,{\boldsymbol x}\right)}=k_1<0,\ \ \ \ \ \ \ \ \ \ \ {\mathop{\lim}_{a\rightarrow\infty} w_1\left(a;b,\theta,{\boldsymbol x}\right)}=0.\]
Therefore,
\[{\mathop{\lim }_{a\rightarrow 0}{\rm g}_1\left(a;b,\theta,{\boldsymbol x}\right)}=\infty,\ \ \ \ \ \ \ \ \ \ \ \ \ \ \ {\mathop{\lim }_{a\rightarrow\infty} {{\rm g}}_1\left(a;b,\theta,{\boldsymbol x}\right)}=-n\bar{x}<0,\]
and
\[{\rm g}_1\left(a;b,\theta,{\boldsymbol x}\right)>\sum^n_{i=1}\frac{1}{a+bx_i}-n\bar{x}>\frac{n}{a+bx_{(n)}}
-n\bar{x},\]
\[{{\rm g}}_1\left(a;b,\theta,{\boldsymbol x}\right)<\sum^n_{i=1}{\frac{1}{a+bx_i}}-n\bar{x}-k_1<\frac{n}{a+bx_{(1)}}
-n\bar{x}-k_1.\]
Therefore, ${{\rm g}}_1\left(a;b,\theta,{\boldsymbol x}\right)>0$ when
$\frac{n}{a+bx_{(n)}}-n\bar{x}>0$, and ${{\rm g}}_1\left(a;b,\theta,{\boldsymbol x}\right)<0$ when
$\frac{n}{a+bx_{(1)}}-n\bar{x}-k_1<0$.  Then, for a given $b>0$, and $\theta<0$, the root of ${\rm g}_1\left(a;b,\theta,{\boldsymbol x}\right)=0$ lies in the following interval:
\[\left(\frac{1}{\bar{x}}-bx_{(n)},(\bar{x}+\frac{k_1}{n})^{-1}-bx_{(1)}\right).\]

\subsection*{B.2}

 Let
\[{\rm g}_3\left(\theta;a,b,{\boldsymbol x}\right)=\frac{n}{\theta}+\sum^n_{i=1}{\frac{p_iC''\left(\theta p_i\right)}{C'\left(\theta p_i\right)}}-\frac{nC'(\theta )}{C(\theta )}.\]

\noindent i. For LFRP, it is clear that
\[\mathop{\lim}_{\theta \rightarrow \infty} {\rm g}_3\left(\theta ;a,b,{\boldsymbol x}\right)=\sum^n_{i=1}{p_i-n}<0,\ \ \ \ \ \ \ \ \ \ \ \ \ \ \ \mathop{\lim }_{\theta \rightarrow 0^+} {\rm g}_3\left(\theta ;a,b,{\boldsymbol x}\right)= \sum^n_{i=1}{p_i-\frac{n}{2}}.\]
Therefore, the equation ${\rm g}_3\left(\theta;a,b,{\boldsymbol x}\right) =0$ has at least one root for $\theta>0$, if $\sum^n_{i=1}p_i-\frac{n}{2}>0$ or $\sum^n_{i=1}p_i >n/2$.

\noindent ii. For LFRG, it is clear that
\[\mathop{\lim}_{\theta \rightarrow1^{-}}{\rm g}_ 3\left(\theta ;a,b,{\boldsymbol x}\right)=-\infty,\ \ \ \ \ \ \ \ \ \ \ \ \ \ \ \mathop{\lim }_{\theta \rightarrow0^+} {\rm g}_3\left(\theta ;a,b,{\boldsymbol x}\right)=-n+2\sum^n_{i=1}p_i.\]
Therefore, the equation ${{\rm g}}_3\left(\theta;a,b,{\boldsymbol x}\right){\rm =0}$ has at least one root for $0<\theta<1$, if $-n+2\sum^n_{i=1}p_i>0$ or $\sum^n_{i=1}p_i>\frac{n}{2}$.

\noindent iii. For LFRL, it is clear that
\[\mathop{\lim}_{\theta \rightarrow0^+}{\rm g}_3\left(\theta ;a,b,{\boldsymbol x}\right)=\sum^n_{i=1}p_i-\frac{n}{2},\ \ \ \ \ \ \ \ \ \ \ \ \ \ \ \mathop{\lim}_{\theta \rightarrow 1^-}{\rm g}_3\left(\theta;a,b,{\boldsymbol x}\right)=-\infty.\]
Therefore, the equation ${{\rm g}}_3\left(\theta;a,b,{\boldsymbol x}\right)=0$ has at least one root for $0<\theta<1$, if $\sum^n_{i=1}p_i-\frac{n}{2}>0$ or $\sum^n_{i=1}p_i>\frac{n}{2}$.

\noindent iv.It is clear that
\[\mathop{\lim}_{p\rightarrow0^+}{\rm g}_3\left(\theta;a,b,{\boldsymbol x}\right)=\sum^n_{i=1}p_i\left(m-1\right)-\frac{n\left(m-1\right)}{2},\ \ \ \ \ \mathop{\lim}_{p\rightarrow 1^{-}}{\rm g}_3\left(\theta ;a,b,{\boldsymbol x}\right)=\sum^n_{i=1}\frac{-m+1+mp_i}{p_i}.\]
Therefore, the equation ${{\rm \ g}}_{{\rm 3}}\left(\theta ;a,b,{\boldsymbol x}\right)=0$ has at least one root for $0<p<1$, if $\sum^n_{i=1}p_i\left(m-1\right)-\frac{n\left(m-1\right)}{2}>0$ and $\sum^n_{i=1}\frac{-m+1+mp_i}{p_i}<0$ or $\sum^n_{i=1}p_i>\frac{n}{ 2}$ and $\sum^n_{i=1}{p_i}^{-1}>\frac{nm}{1-m}$.

\section*{Appendix C}
\subsection*{C.1}
 We can easily show that $h_1\left(a\right)$ is decreasing function with respect to $a$. Also
\[{\mathop{\lim }_{a\rightarrow0^+} h_1\left(a\right)\ }=+\infty,\ \ \ \ \ \ \ \ \ \ \mathop{\lim }_{a\rightarrow+\infty} h_1\left(a\right)\ =-c_1<0.\]
Therefore, the root of $h_1\left(a\right)=0$ is unique. It can easily show that
\[h_1\left(a\right)<\frac{n}{a+{\hat{b}}^{(t)}x_{(1)}}
-c_1 \ \ \ \ \ \ h_1\left(a\right)>\frac{n}{a+{\hat{b}}^{(t)}x_{(n)}}-c_1.\]
Therefore, $h_1\left(a\right)<0$ when
$\frac{n}{a+{\hat{b}}^{(t)}x_{(1)}}-c_1<0$, and $h_1\left(a\right)>0$ when $\frac{n}{a+{\hat{b}}^{(t)}x_{(n)}}-c_1>0$.
Hence, the proof is completed.

\subsection*{C.2}
 We can easily show that $h_2\left(b\right)$ is decreasing function with respect to $b$. Also
\[{\mathop{\lim }_{b\rightarrow0^+} h_2\left(b\right)\ }=+\infty,\ \ \ \ \ \ \ \ \ \ \mathop{\lim }_{b\rightarrow+\infty} h_2\left(b\right)=-\frac{1}{2}\sum^n_{i=1}{{\hat{z}}^{\left(t\right)}_ix^2_i}<0.\]
Therefore, the root of $h_2\left(b\right)=0$ is unique. It can easily show that
\[h_2\left(b\right)>\frac{nx_{(1)}}{{\hat{a}}^{(t)}+bx_{(1)}}-
\frac{c_2}{2}, \ \ \ \ \ \ \ \ \
h_2\left(b\right)<\frac{nx_{(n)}}{{\hat{a}}^{(t)}+bx_{(n)}}-
\frac{c_2}{2}.\]
Therefore, $h_2\left(b\right)<0$ when
$\frac{nx_{(n)}}{{\hat{a}}^{(t)}+bx_{(n)}}-
\frac{c_2}{2}<0$, and $h_2\left(b\right) >0$ when
$\frac{nx_{(1)}}{{\hat{a}}^{(t)}+bx_{(1)}}-
\frac{c_2}{2}>0$.
Hence, the proof is completed.

\subsection*{C.3}

\noindent i. For LFRG distribution, $C\left(\theta \right)=\frac{\theta }{1-\theta }$. Therefore, the root of
\[h_3\left(\theta \right)=\theta -\theta (1-\theta )c_0=\theta \left(1-(1-\theta )c_0\right)=0,\]
is unique and is equal to $1-\frac{n}{c_0}$, and $0<1-\frac{n}{c_0}<1$

\[{\hat{z}}^{\left(t\right)}_i=1+\frac{{\widehat{\theta }}^{\left(t\right)}u^{\left(t\right)}_iC''\left({\widehat{\theta }}^{\left(t\right)}u^{\left(t\right)}_i\right)}{C'\left({\hat{\theta }}^{\left(t\right)}u^{\left(t\right)}_i\right)}=1+\frac{2{\widehat{\theta }}^{\left(t\right)}u^{\left(t\right)}_i}{\left(1-{\widehat{\theta }}^{\left(t\right)}u^{\left(t\right)}_i\right)}.\]
Let
\[f\left(x\right)=1+\frac{2x}{(1-x)}.\]
Therefore, $f(x)$ is increasing function and
$$
f(x)>1 \ \ \ \ 0<x<1,\ \ \ \  {\rm and} \ \ \ \ -1<f(x)<1 \ \ \ \ x<0.
$$
If $0<{\hat{\theta }}^{\left(t\right)}<1$, then $0<\hat{\theta}^{(t)}u^{(t)}_i<1$. Therefore,
$1<{\hat{z}}^{(t)}_i$, and $0<1-\frac{n}{c_0}<1$.

\bigskip

\noindent ii. For LFRP distribution, $C\left(\theta \right)=e^{\theta }-1$. Therefore,
\[h_3\left(\theta \right)=\theta -\frac{c_0(e^{\theta }-1)}{e^{\theta }}.\]
 Since
\[\frac{\partial }{\partial \theta }h_3\left(\theta \right)=1-c_0e^{-\theta},  \ \ \ \ \frac{\partial^2 }{\partial \theta^2 }h_3\left(\theta \right)=c_0e^{-\theta}>0,\]
where
$
c_0=1+ \frac{1}{n}\sum_{i=1}^n\hat{\theta}^{(t)}\hat{u}_i^{(t)}$ and $1<c_0<1+\hat{\theta}^{(t)},
$
and also,
\[{\mathop{\lim }_{\theta \to 0^+} h_3\left(\theta \right)\ }=0 ,\ \ \ \ \ \ \ \ \ \ \ \ {\mathop{\lim }_{\theta \to \infty } h_3\left(\theta \right)\ }=\infty ,\]
therefore, $h_3\left(\theta \right)$ has a minimum value at $\theta_0=\log(c_0)$, and $h_3\left(\theta_0 \right)<0$. Thus, the root of $h_3\left(\theta \right)=0$ is unique.

\bigskip

\noindent For LFRL distribution, $C\left(\theta \right)=-\log(1-\theta)$. Therefore,
\[h_3\left(\theta \right)=\theta -c_0(1-\theta)\log(1-\theta).\]
 Since
\[\frac{\partial }{\partial \theta }h_3\left(\theta \right)=1-c_0-c_0\log(1-\theta),  \ \ \ \ \frac{\partial^2 }{\partial \theta^2 }h_3\left(\theta \right)=\frac{c_0}{1-x}>0,\]
where
$
c_0= \sum_{i=1}^n\frac{1}{1-\hat{\theta}^{(t)}\hat{u}_i^{(t)}}$  and $1<c_0,$ and also,
\[{\mathop{\lim }_{\theta \to 0^+} h_3\left(\theta \right)\ }=0 ,\ \ \ \ \ \ \ \ \ \ \ \ {\mathop{\lim }_{\theta \to \infty } h_3\left(\theta \right)\ }=\infty ,\]
therefore, $h_3\left(\theta \right)$ has a minimum value at $\theta_0=1-\exp(\frac{1}{c_0}-1)$, and $h_3\left(\theta_0 \right)<0$. Thus, the root of $h_3\left(\theta \right)=0$ is unique.

\bigskip

\noindent For LFRB distribution, $C\left(\theta \right)=(\theta+1)^m-1$. Therefore,
\[h_3\left(\theta \right)=\theta -c_0\frac{(\theta+1)^m-1}{m(\theta+1)^{m-1}}.\]
 Since
\[\frac{\partial }{\partial \theta }h_3\left(\theta \right)=1-\frac{c_0}{m}-\frac{c_0(\theta+1)^{-m}}{m}(1-m),  \ \ \ \ \frac{\partial^2 }{\partial \theta^2 }h_3\left(\theta \right)=c_0(\theta+1)^{-m-1}(m-1)>0,\]
where
$
c_0= 1+\frac{m-1}{n}\sum_{i=1}^n\frac{\hat{\theta}^{(t)}\hat{u}^{(t)}}{\hat{\theta}^{(t)}\hat{u}^{(t)}+1}$  and $1<c_0<m,$ and also,
\[{\mathop{\lim }_{\theta \to 0^+} h_3\left(\theta \right)\ }=0 ,\ \ \ \ \ \ \ \ \ \ \ \ {\mathop{\lim }_{\theta \to \infty } h_3\left(\theta \right)\ }=\infty ,\]
therefore, $h_3\left(\theta \right)$ has a minimum value at $\theta_0=(\frac{c_0(1-m)}{m-c_0})^{\frac{1}{m}}-1$, and $h_3\left(\theta_0 \right)<0$. Thus, the root of $h_3\left(\theta \right)=0$ is unique.

\end{document}